\DeclareSymbolFont{bbold}{U}{bbold}{m}{n}
\DeclareSymbolFontAlphabet{\mathbbold}{bbold}
\theoremstyle{plain}
\newtheorem{theorem}{Theorem}
\newtheorem{lemma}{Lemma}
\newtheorem{corollary}{Corollary}
\newtheorem{conjecture}{Conjecture}
\theoremstyle{definition}
\newtheorem{definition}{Definition}
\newtheorem{problem}{Problem}
\newtheorem{example}{Example}
\theoremstyle{remark}
\newtheorem{remark}{Remark}
\DeclarePairedDelimiter{\ceil}{\lceil}{\rceil}
\DeclarePairedDelimiter{\floor}{\lfloor}{\rfloor}
\DeclarePairedDelimiter{\norm}{\|}{\|}
\DeclarePairedDelimiter{\inner}{\langle}{\rangle}
\newcommand{\ind}[1]{\mathbbm{1}\!\left\{{#1}\right\}}
\newcommand{\var}{\mathbb{V}}
\newcommand{\pr}{\mathbb{P}}
\newcommand{\E}{\mathbb{E}}
\newcommand{\R}{\mathbb{R}}
\newcommand{\Z}{\mathbb{Z}}
\newcommand{\C}{\mathbb{C}}
\newcommand{\N}{\mathbb{N}}
\newcommand{\G}{\mathbb{G}}
\newcommand{\T}{\mathbb{T}}
\newcommand{\supp}{\mathrm{supp}}
\renewcommand{\Re}{\mathfrak{Re}}
\newcommand{\Exp}{\mathrm{Exp}}
\newcommand{\Poisson}{\mathrm{Poisson}}
\newcommand{\Uniform}{\mathrm{Uniform}}
\newcommand{\KMIN}{k\textsf{-Min}}
\newcommand{\ParetoSampler}{\textsf{ParetoSampler}}
\newcommand{\Pareto}{\textsf{Pareto}}
\newcommand{\argmin}{\operatorname{argmin}}
\newcommand{\SamplerWOR}{\textsf{Sampler-WOR}}
\newcommand{\ParetoSamplerWOR}{\textsf{ParetoSampler-WOR}}
\crefname{problem}{Problem}{Problems}
\crefname{conjecture}{Conjecture}{Conjectures}
\newcommand{\rb}[2]{\raisebox{#1 mm}[0mm][0mm]{#2}}
\newcommand{\istrut}[2][0]{\rule[- #1 mm]{0mm}{#1 mm}\rule{0mm}{#2 mm}}
\newcommand{\hcm}[1]{\hspace*{#1 cm}}
\newcommand{\ignore}[1]{}
\newcommand{\bydef}{\stackrel{\operatorname{def}}{=}}
\newcommand{\polylog}{\operatorname{polylog}}
\newcommand{\poly}{\operatorname{poly}}
\newcommand{\LK}{L\'evy-Khintchine}
\newcommand{\Levy}{L\'evy}
\newcommand{\LevyTower}{\textsf{L\'evy-Tower}}
\newcommand{\LevyMinSampler}{\textsf{L\'evy-Min-Sampler}}
\newcommand{\PCSA}{\textsf{PCSA}}
\newcommand{\HyperLogLog}{\textsf{HyperLogLog}}
\newcommand{\LevyPCSA}{\textsf{L\'evyPCSA}}
\newcommand{\LevyHyperLogLog}{\textsf{L\'evyHyperLogLog}}
\newcommand{\StableHyperLogLog}{\textsf{Stable-HyperLogLog}}
\newcommand{\StableMinSampler}{\textsf{Stable-Min-Sampler}}
\newcommand{\LevyStable}{\textsf{L\'evy-Stable}}
\newcommand{\Fishmonger}{\textsf{Fishmonger}}
\newcommand{\LevyFishmonger}{\textsf{L\'evyFishmonger}}
\newcommand{\Update}{\mathsf{Update}}
\newcommand{\Geometric}{\operatorname{Geometric}}
\newcommand{\Blasiok}{B\l{}asiok}
\title{A Unified Construction of Streaming Sketches via the \LK{} Representation Theorem\thanks{Supported by NSF Grants CCF-2221980 and CCF-2446604. This manuscript is based on two extended abstracts~\cite{PettieW25-a,PettieW25-b}, 
\emph{Universal Perfect Samplers for Incremental Streams},
presented at SODA 2025, and 
\emph{Sketching, Moment Estimation, and the L\'evy-Khintchine Representation Theorem}, presented at ITCS 2025.
Email: 
\texttt{\{pettie,wangdy\}@umich.edu.}}}
\author{Seth Pettie\\University of Michigan \and Dingyu Wang\\University of Michigan}
\date{}
\begin{document}
\maketitle
\begin{abstract}
In this work we uncover an intimate relation between \emph{\Levy{} processes}
and \emph{data sketches} for generalized moment estimation and weighted sampling.
Let $\mathbf{x}\in (\mathbb{R}^d)^n$ be a vector subject to element updates
and $f : \mathbb{R}^d \to \mathbb{R}_+$.  By definition, the \emph{$f$-moment} 
of $\mathbf{x}$ is
\[
f(\mathbf{x}) = \sum_{v\in [n]} f(\mathbf{x}(v)).
\]
The \emph{$f$-moment estimation} problem is to $(1\pm \epsilon)$-approximate 
$f(\mathbf{x})$ whereas the \emph{$f$-sampling} problem is to select an 
index $v_*\in [n]$ with probability $f(\mathbf{x}(v_*))/f(\mathbf{x})$.
Our primary conceptual contributions are 
\begin{itemize} 
    \item[I.] to draw a close connection between $f$-moment estimation sketches of $\mathbf{x}\in (\mathbb{R}^d)^n$ in the \emph{turnstile} model 
    and generic \Levy{} processes over $\mathbb{R}^d$, and
    \item[II.] to draw a close connection between $f$-samplers
    of $\mathbf{x} \in \mathbb{R}^n$ in the \emph{incremental} model 
    (positive updates only) and one-dimensional, non-negative \Levy{} processes, aka \emph{subordinators}.
\end{itemize}
Through these connections, 
we can apply the powerful \emph{\LK{} representation theorem} from the theory of L\'evy processes 
to design new sketches for $f$-moment estimation and sampling.  Out technical results are as follows.

\begin{itemize}
    \item We give a systematic method for transforming \emph{any} \Levy{} 
    process $X=(X_t)_{t\geq 0}$ in $\R^d$ into an $O(\epsilon^{-2}\log^2 n)$-bit sketch that estimates the $f_X$-moment, where $f_X(z) = -\log \E e^{i\inner{z,X_1}}$ is the characteristic exponent of $X$.
    This method handles essentially all known $f$-moments that can be estimated
    with $\poly(\epsilon^{-1},\log n)$-size sketches~\cite{alon1996space,indyk2006stable,cormode2003comparing,kane2010optimal,ganguly2012estimating,braverman2016streaming,Wang25} 
    in a uniform way, broadens the class of tractable functions, 
    and allows us to estimate 
    \emph{multivariate} functions, when $d>1$.
\item In the one-dimensional incremental setting, we transform any non-negative \Levy{} process (a \emph{subordinator}) $X$ into a $G_X$-sampler,
where $G_X(z) = -\log \E e^{-z X_1}$ is the Laplace exponent of $X$.
These samplers require essentially minimal space,
sample with precisely correct probabilities, and have zero probability of error.
They are distinguished from recent work on 
$G$-samplers~\cite{cohen2019sampling,jayaram2022truly},
which either introduce $(1\pm \epsilon)$-approximation in the probabilities,
a non-zero failure probability, or additional space.
\end{itemize}

\end{abstract}

\section{Introduction}

The study of \emph{sketches} and the \emph{streaming} model 
dates back to the late 1970s, whose early 
work included Morris's~\cite{Morris78} 
approximate counter, 
Munro and Patterson's~\cite{MunroP80} 
selection algorithms,
and
the Boyer-Moore \textsf{MJRTY} algorithm~\cite{BoyerM91}.
In 1983 Flajolet and Martin~\cite{flajolet1985probabilistic} 
designed and analyzed the first ``modern'' 
data sketch called \PCSA,\footnote{\emph{Probabilistic Counting with Stochastic Averaging}}
which estimates the cardinality of a set.
\PCSA{} works only in \emph{incremental} streams whereas 
all linear sketches, 
such as the celebrated \textsf{AMS} sketch
of Alon, Matias, and Szegedy~\cite{alon1996space} 
for estimating $F_2$ moments,
operate on a vector of elements subject 
to both increments and decrements.
After decades of intense investigation into 
streaming and sketching we now have a 
good---but still incomplete---understanding 
of which statistics are \emph{tractable}, 
meaning they are approximable with 
polylogarithmic-size sketches~\cite{braverman2010zero,BravermanC15,braverman2016streaming}.
The purpose of this paper is to 
approach this tractability question from a new direction
and find answers that are not merely \emph{true} 
but have significant \emph{explanatory power}.

\medskip
We shall begin by defining a single \emph{algebraic} 
streaming model that captures the incremental model, 
turnstile model, and others.

\begin{definition}[$M$-turnstile model \cite{wang2023probabilistic}]\label{def:M-turnstile}
Let $(M,+)$ be a commutative monoid with identity $0$. The \emph{$M$-turnstile model} is defined as follows. 
     Let $[n]=\{1,2,\ldots,n\}$ be the universe. 
    The state vector $\mathbf{x}=(\mathbf{x}(1),\ldots,\mathbf{x}(n))\in M^n$ is initialized as $0^n$ and gets updated by a stream of pairs in the form of $(v,y)$, 
    where $v\in [n]$ and $y\in M$.
    \begin{itemize}
        \item $\Update(v,y)$ : $\mathbf{x}(v) \gets \mathbf{x}(v) +y$.
    \end{itemize}
\end{definition}
The assumption that $M$ is commutative and associative allows the updates to be collected in a distributed system.
The point of the $M$-turnstile model is to 
identify \emph{algorithmic functionality}
with the \emph{mathematical structure} of the monoid $(M,+)$. 
For example, if $M$ is idempotent, then duplicated updates are ignored; if $M$ is a group, then ``deletions'' are allowed; if $M$ is a continuous space, then fractional updates are allowed; if $M$ is multi-dimensional, then attribute-wise updates are allowed.  
The usual integer turnstile model corresponds to $\Z$-turnstile while the incremental setting 
corresponds to $\N$-turnstile. 
In this paper we consider 1-dimensional turnstiles ($\mathbb{R}$ or $\mathbb{Z}$), multidimensional turnstiles ($\mathbb{R}^d$), and incremental turnstiles ($\mathbb{R}_+$ or $\mathbb{N}$).

\medskip 

Let us recall three generic streaming problems.  
Here $\mathbb{R}_+$ is the set of 
\emph{non-negative} reals.

\begin{problem}[$f$-moment estimation in the $\R^d$-turnstile model]
\label{prob:f-moment}
Fix $d\in \Z_+$.
Let $f:\R^d \to \C$ be a function with $f(0)=0$, and let $\mathbf{x}\in (\R^d)^n$ be 
    a vector subject to streaming updates to its coordinates.  
    The problem is to estimate the 
    \emph{$f$-moment} $f(\mathbf{x})$, 
    where
    \[
    f(\mathbf{x}) \bydef \sum_{v\in[n]}f(\mathbf{x}(v)).
    \]
\end{problem}

\cref{prob:G-moment} is 
a special case of \cref{prob:f-moment}, but it is nonetheless useful to highlight as a separate problem, due to its connection to \cref{prob:G-sampling} and a separate suite of techniques.

\begin{problem}[$G$-moment estimation in the $\R_+$-turnstile model]
\label{prob:G-moment}
    Let $G:\R_+ \to \R_+$ with $G(0)=0$. Let $\mathbf{x}\in \R_+^n$ be the state vector of the current stream. Estimate the \emph{$G$-moment} 
    $G(\mathbf{x})\bydef \sum_{v\in[n]}G(\mathbf{x}(v))$ over the stream.
\end{problem}

\begin{problem}[$G$-sampler in the $\R_+$-turnstile model]
\label{prob:G-sampling}
    Let $G:\R_+\to \R_+$ be a function 
    with $G(0)=0$ 
    and let $\mathbf{x}\in \R_+^n$ be the 
    vector subject to streaming updates. 
    Return a \emph{$G$-sample} $u\in[n]$ 
    with probability $G(\mathbf{x}(u))/G(\mathbf{x})$.    
\end{problem}
These three problems encompass a large body of 
research in the streaming/sketching literature. 
We give a brief overview of the prior research on these problems.

\begin{description}
    \item[$f$-moments in the $\R$- and $\Z$-turnstile.] Alon, Matias, and Szegedy~\cite{alon1996space} estimate
    the $F_2$-moment ($f(x)=x^2$) with a data structure now commonly known as the \textsf{AMS} sketch. 
    Indyk \cite{indyk2006stable} designed a class of sketches 
    for estimating the $F_p$ moments, with $p\in(0,2]$.
    When $p>2$, Bar-Yossef, Jayram, Kumar, and Sivakumar~\cite{bar2004information} 
    proved that estimating the $F_{p}$-moment requires $\Omega(\poly(n))$ space. 
    The $F_0$-moment ($f(x)=\ind{x\neq 0}$) 
    has been estimated 
    in \emph{three} distinct ways.\footnote{Here $\ind{P}$ is the indicator for $P$, i.e., 
    1 if the predicate $P$ is true, and 0 otherwise.}
    Cormode, Datar, Indyk, and Muthukrishnan~\cite{cormode2003comparing} approximate $F_0$ by $F_\alpha$ with very small $\alpha>0$. 
    Kane, Nelson, and Woodruff~\cite{kane2010optimal} project each element onto $\Z_p$, $p > \epsilon^{-1}\log n$ being a random prime, which effectively reduces $F_0$-estimation to cardinality estimation. 
    Very recently, Wang~\cite{Wang25} estimates $F_0$ by estimating the underlying harmonic components separately and then summing them up. 
    In the $\R^d$-turnstile model, the $F_{p,q}$ hybrid moment is defined by $f(x_1,\ldots,x_d)=(\sum_{j=1}^d |x_j|^p)^q$.
    Ganguly, Bansal, and Dube \cite{ganguly2012estimating} approximate $F_{p,q}$ for $p \in (0,2], q \in (0,1]$ in
    $\tilde{O}(\epsilon^{-2})$ space,
    and Jayram and Woodruff~\cite{JayramW09} give polynomial space bounds on the complexity of $F_{p,q}$ estimation
    for arbitrary values of $p,q$.

    Braverman and Ostrovsky~\cite{braverman2010zero} considered the problem of characterizing the class of \emph{tractable} functions $f$ for which $f$-moments could be approximated to within a $1\pm \epsilon$ factor in $\poly(\epsilon^{-1},\log n)$ space.  
    They managed to characterize all functions $f \colon \Z\to \R_+$ that are symmetric ($f(x)=f(-x)$) and increasing on $[0,\infty)$.
    Braverman, Chestnut, Woodruff, and Yang \cite{braverman2016streaming} extended the characterization to 
    almost all symmetric functions, 
    excluding a class 
    they termed ``nearly periodic functions.''
    \item[$G$-moments in $\R_+$-turnstile.] The incremental setting ($\N$-turnstile or $\R_+$-turnstile) leads to different sketching techniques and different space lower bounds. Flajolet and Martin's~\cite{flajolet1985probabilistic} \PCSA{} sketch estimates the cardinality 
    (number of distinct elements) $\|\mathbf{x}\|_0$, 
    which corresponds to the $G$-moment with $G(x)=\ind{x\neq 0}$. 
    Flajolet, Fusy, Gandouet, and Meunier's~\cite{flajolet2007hyperloglog}
    \HyperLogLog{} sketch is the most widely deployed sketch for 
    cardinality estimation.  
    The most efficient sketches in terms
    of space vs.~estimation error~\cite{pettie2021information,wang2023better,Lang17,DataSketches}
    are based on entropy-compressed versions of \PCSA{} with optimum estimators.  
    The cardinality sketches above are analyzed in the \emph{random oracle} model, where one can evaluate uniformly random hash functions.
    When the sketch stores its own hash functions, Kane, Nelson, and Woodruff~\cite{kane2010optimal} and \Blasiok~\cite{Blasiok20} 
    designed cardinality sketches with an $(\epsilon,\delta)$-guarantee 
    meeting the $\Omega(\log n + \epsilon^{-2}\log\delta^{-1})$ 
    lower bound~\cite{IndykW03,JayramW13,alon1996space}, up to a large constant factor.
    
    The problem of $F_p$-moment estimation ($G(x)=x^p$) can be solved by Indyk's sketches for $p\in(0,2]$. Nevertheless, 
    there are more efficient $F_p$-moment sketches in the incremental setting when $p\in(0,1)$;
    see Cohen \cite{cohen2017hyperloglog} and Jayaram and Woodruff \cite{jayaram2023towards}. Cohen \cite{cohen2017hyperloglog} estimates the 
    $G$-moment, where $G$ is in the class 
    of ``soft concave sublinear functions,'' 
    which are intended to approximate cap-statistics
    of the form $G(x) = \min\{x,C\}$.
    \item[$G$-sampling in $\R_+$-turnstile.] Vitter's \cite{vitter1985random}
    \emph{reservoir sampling} can be regarded as a $G$-sampler in the $\N$-turnstile model, where $G(x)=x$, i.e., elements are sampled proportional to their counts.  Cohen's~\cite{cohen1997size}
    \textsf{Min-Sampler} hashes elements and samples the index $v$, $\mathbf{x}(v)>0$, having the smallest hash value.  The \textsf{Min-Sampler} is insensitive to duplicates and thus solves the $G$-sampling problem with $G(x)=\ind{x>0}$ (disinct sampling). 
    
    For generic $G$-sampling, Cohen and Geri \cite{cohen2019sampling} convert the $G$-estimators in \cite{cohen2017hyperloglog} into \emph{approximate} $G$-samplers for soft concave sublinear functions. 
    Generic $G$-samplers with precisely correct sampling probabilities have only been studied recently, by Jayaram, Woodruff, and Zhou \cite{jayaram2022truly}, where they combine reservoir sampling and rejection sampling to ensure correct sampling probabilities, 
    conditioned on a sample being returned.
\end{description}

\subsection{A New Perspective}

The main take-away message of the present work is that 

\begin{center}
    \emph{$f$-moment estimation in the $\R^d$-turnstile model and $G$-moment estimation and $G$-sampling in the $\R_+$-turnstile model can all be done in a uniform manner by simulating \Levy{} processes.}
\end{center}

\emph{\Levy{} processes} have been studied since the early 20th century,
and have been used to model phenomena in various fields, e.g., 
physics (how does a gas particle move?) and finance (how does the stock price change?). 
We will show that all \Levy{} processes have algorithmic interpretations 
in the context of streaming sketches, and  
the fundamental \emph{\LK{} representation theorem} leads to a unified view of sketching for moment estimation 
and sampling.  To our knowledge, this
is the first application of the \LK{} theorem
in theoretical computer science.

\medskip 

We give a detailed technical synopsis of \Levy{} processes in \cref{sec:prelin_levy}. 
For the time being, a \Levy{} process $(X_t)_{t\geq 0}$, where $X_t \in \R^d$,
is defined by having independent, stationary increments.  I.e., for any $t_1,t_2\in \R_+$, 
$(X_{t_1+t_2} - X_{t_1}) \sim X_{t_2}$, 
and for any $t_1 < t_2 < \cdots < t_k$,
the increments
$X_{t_1},X_{t_2}-X_{t_1},\ldots,X_{t_k}-X_{t_{k-1}}$ 
are mutually independent.
Some common one-dimensional \Levy{} processes are
\begin{description}
    \item[Linear drift.] $X_t = \gamma t$ with drift rate $\gamma$.
    \item[Wiener process/Brownian motion.] $X_t \sim \mathcal{N}(0,t\sigma^2)$ with variance $\sigma^2$.
    \item[Poisson process.] $X_t \sim \Poisson(\lambda t)$ with rate $\lambda$.
    \item[Compound Poisson process.] $X_t = \sum_{k=1}^Z J_k$, where the number of \emph{jumps} $Z\sim \Poisson(\lambda t)$ follows a Poisson process, while each jump $J_k\sim \mathcal{J}$ is sampled independently from a common \emph{jump distribution}.
    
    \item[$\alpha$-stable process.] Defined by $X_t/t^{1/\alpha} \sim X_1\sim\text{$\alpha$-stable}$. (For $\alpha\in(0,2]$, the $\alpha$-stable random variable $X_1$ on $\R$ is defined by $\E e^{i z X_1}=e^{-|z|^\alpha}$, for any $z\in \R$.)
\end{description}
The \Levy{} processes defined above can all be generalized to higher dimensions.  
They may also be combined, as \Levy{} processes are closed under addition.

\medskip 

The characteristic function $\varphi_X(z) = \E e^{izX}$ 
of a random variable $X\in \mathbb{R}$ 
is equivalent to the Fourier transform of its pdf, 
when its pdf exists.
It uniquely determines the distribution of $X$.
For example, whenever it exists, 
the $k$th moment of $X$ can be recovered
from the $k$th derivative of $\varphi_X$ evaluated at zero.
When $X\in \mathbb{R}^d$, $\varphi_X(z) = \E e^{i\inner{X,z}}$,
where $z\in \mathbb{R}^d$ and $\inner{X,z}$ is the inner product.

The \LK{} representation theorem identifies every \Levy{}
process $X$ with its \emph{characteristic exponent} 
$f_X \colon \R^d \to \C$, 
where for any $z\in\R^d$, 
\[
\E \exp(i\inner{X_t,z}) = \exp(-tf_X(z)).
\]
When $X$ is a non-negative \Levy{} process in one dimension
(also known as a \emph{subordinator}), it is characterized by its 
\emph{Laplace exponent} 
$G_X \colon \R_+ \to \R_+$, 
where for any $z\in\R_+$,
\[
\E \exp(-zX_t) = \exp(-tG_X(z)).
\]
Note that the \emph{stationary increments} property 
implies that a non-negative \Levy{} process has only non-negative increments.

\medskip 

\paragraph{Roadmap.} In the remainder of the introduction we give a lightly technical introduction to how 
\Levy{} processes naturally arise in the study
of generalized moment estimation and sampling problems.
In \cref{sect:levy-processes-and-linear-sketching}
we show how linear sketches~\cite{alon1996space,indyk2006stable} 
explicitly or implicitly use infinitely divisible distributions and \Levy{} processes.  In
\cref{sec:f-moment-estimation-and-levy-processes}
we show a natural connection between the $f$-moment estimation problem, \Levy{} processes, 
and the \LK{} representation theorem, 
paving the way to answer \cref{prob:f-moment} in turnstile streams.
\cref{sec:g-sampler-g-moment-estimation} 
illustrates a connection between \emph{non-negative}
\Levy{} processes (subordinators) and incremental streams, which allows one to address both 
moment estimation (\cref{prob:G-moment}) 
and sampling (\cref{prob:G-sampling}) in incremental streams.  In \cref{sect:new-results} we give a formal statement of our new results, and in \cref{sect:organization} we 
lay out the organization of the remainder of the paper.

\subsection{L\'evy Processes and Linear Sketching}\label{sect:levy-processes-and-linear-sketching}

We briefly explain why L\'evy processes \emph{naturally} lie at the heart of linear sketching. Suppose $X$ is the random memory state of a \emph{linear} sketch\footnote{It suffices to consider linear sketches in the turnstile model \cite{li2014turnstile}.} of a particular 
input vector $\mathbf{x}\in \mathbb{R}^n$, 
with $\hat{f}(X)$ being an estimate of the 
$f$-moment $f(\mathbf{x})$.
Now consider the situation where the input is replicated $w$ times over disjoint domains, 
i.e., $\mathbf{x}_w \in \mathbb{R}^{nw}$,
with $\mathbf{x}_w(v) = \mathbf{x}(v \operatorname{mod} n)$.
By construction, $f(\mathbf{x}_w) = w\cdot f(\mathbf{x})$ and the estimate becomes $\hat{f}(X^{(1)}+\cdots + X^{(w)})$ where $X^{(j)}$ are i.i.d.~copies of $X$.
In other words, we should have
\begin{align*}
    \frac{\hat{f}(X^{(1)}+\cdots + X^{(w)})}{w} \approx f(\mathbf{x}),
\end{align*}
for \emph{any} $w\in \Z_+$ and also for the limiting case as $w\to \infty$. Therefore, no matter how complicated the distribution of the linear sketch may be, the sum $X^{(1)}+\cdots + X^{(w)}$ will converge to some well-behaved limiting distribution, with a proper normalization depending on $X$ and $w$. 
For the \textsf{AMS} sketch~\cite{alon1996space}, 
$X$ happens to be sub-Gaussian and therefore the normalized sum converges to a Gaussian. For Indyk's~\cite{indyk2006stable} stable sketch, $X$ is $\alpha$-stable, and therefore the normalized sum remains $\alpha$-stable. These are merely two special cases of L\'evy processes. In general, if $X=(X_t)_{t\geq 0}$ is a L\'evy process, then 
by the stationary and independent increments properties, $X^{(1)}+\cdots + X^{(w)} = (X^{(1)}_t+\cdots + X^{(w)}_t)_{t\geq 0}\sim (X_{wt})_{t\geq 0}$. 
In other words, \emph{summing} i.i.d.~L\'evy processes is equivalent to simply \emph{rescaling time}. After normalizing the time scale,\footnote{To see how this ``time normalization'' generalizes the typical scalar normalization of stable variables, note that for $\alpha$-stable variables, the normalization is $w^{-1/\alpha}(X^{(1)}+\cdots + X^{(w)})$, which is equivalent to scale the time down by $w$ since $w^{-1/\alpha}X_{wt}\sim X_t$ if $(X_t)_{t\geq 0}$ is $\alpha$-stable.} we see L\'evy processes are stable under i.i.d.~sums. L\'evy processes 
therefore form a mathematical closure of linear sketches in terms of their limiting distributions. 

In practice it usually suffices to construct some algorithmically simple random projection in the \emph{domain of attraction} of the limiting process. 
For example, the \textsf{AMS} sketch for estimating $F_2$ does not need to explicitly use Gaussians in its projection; Rademacher ($\{-1,1\}$) random variables suffice.
For a \emph{prototypical solution}, 
it is convenient to consider linear sketches with projection sampled directly from L\'evy processes.
The distribution of the sketch will always lie in the space of L\'evy processes and is therefore easier 
to track.

\medskip 

We now demonstrate how to convert L\'evy processes to streaming sketches. 
Throughout the paper we work in the \emph{random oracle} model, 
in which we can evaluate uniformly random hash functions $H \colon \Z\to [0,1]$.  
This is not a limiting assumption as it can be removed in a black-box way by using pseudorandom generators against space bounded computation, 
at a small loss in space efficiency; see~\cite{indyk2006stable,nisan1990pseudorandom}.

\subsection{$f$-Moment Estimation and \Levy{} Processes}\label{sec:f-moment-estimation-and-levy-processes}

Given any L\'evy process $X=(X_t)_{t\geq 0}$ on $\R^d$, by
the \LK{} representation theorem (\cref{thm:lk} in \cref{sec:prelin_levy}), there exists a function $f = f_X:\R^d\to \C$ such that for any $t\geq 0$ and $z\in \R^d$, $\E e^{i\inner{z,X_t}}=e^{-t f(z)}$. Suppose now we 
project the input vector to a single cell $C_t$ by sampling the \Levy{} processes at time $t$, as follows.
\begin{align*}
    C_t &= \sum_{v\in[n]}\inner*{X_t^{(v)},\mathbf{x}(v)}.
\end{align*}
Here the $X_t^{(v)}$ are i.i.d.~copies of the \Levy{} process $X_t$. 
Clearly $C_t$ is a linear sketch that can be maintained over a distributed stream. We thus have 
    \begin{align*}
        \E e^{iC_t} &= \E e^{i \sum_{v\in[n]}\inner*{X_{t}^{(v)},\mathbf{x}(v)}} & \text{(definition of $C_t$)}\\
        &= \prod_{v\in [n]} \E e^{i \inner*{X_{t}^{(v)},\mathbf{x}(v)}} & \text{(by independence)}\\
        &= \prod_{v\in [n]} e^{-tf(\mathbf{x}(v))} & \text{(by \LK)}\\
        &=e^{-t f(\mathbf{x})}, & \text{(definition of $f(\mathbf{x}) = \sum_{v\in[n]}f(\mathbf{x}(v))$)}
    \end{align*}
from which the $f$-moment $f(\mathbf{x})$
can be recovered by choosing a suitable time 
$t\approx \Theta(1/|f(\mathbf{x})|)$. 
Of course, we do not know $f(\mathbf{x})$ in advance
and therefore do not know the optimum $t$ in advance, but we achieve reasonable coverage by 
maintaining $(C_t)$ 
for many $t$, evenly 
spaced on a logarithmic scale.
This is the key observation that lets us estimate 
any $f$-moment, so long as 
$f$ is the characteristic 
exponent of some L\'evy process.

\subsection{$G$-Sampler, $G$-Moment Estimation, and Subordinators}\label{sec:g-sampler-g-moment-estimation}

Perhaps surprisingly, the connection between \Levy{} processes and  streaming sketches goes beyond the linear case. We now demonstrate that  the \emph{min-based} sketches/samplers are closely related to \emph{non-negative} \Levy{} processes, also known as subordinators.  
To index the fresh randomness associated with an update, 
we suppose updates occur at times $k=1,2,\ldots$. 
The min-based samplers are based on the following 
\emph{generic min sketch}.

\begin{definition}[generic min sketch]\label{def:generic_min}
A \emph{generic min sketch} is a pair $(v_*,h_*)\in ([n]\cup \{\perp\})\times(\R_+\cup\{\infty\})$ initialized as $(\perp,\infty)$.  
It is associated with a certain hash function 
$H:[n]\times \R_+ \times \Z_+ \to \R_+$ provided by the random oracle.
A vector update $\mathbf{x}(v) \gets \mathbf{x}(v)+\Delta$
is handled as follows.
\begin{description}
    \item[$\Update(v,\Delta)$ (issued at time $k$):]
    If $H(v,\Delta,k)<h_*$, then set $(v_*,h_*)\gets (v,H(v,\Delta,k))$.
\end{description}
\end{definition}

Assume for simplicity that $\Delta=1$ in all updates.
Suppose $\mathbf{x}(v) = w$, 
being incremented at 
times $k_1, \ldots,k_w$. 
The minimum hash value produced 
at index $v$ is thus 
\[
Z_v = \min(H(v,1,k_1),\ldots,H(v,1,k_w)).
\]
The observation is that, no matter what the weight function $G$ is, the probability that $v$ gets sampled should only depend on $w$, rather than the insertion timestamps $(k_1,\ldots,k_w)$. 
Thus, we want the random sequence
\begin{align}
   H(v,1,1), H(v,1,2), H(v,1,3),\ldots\label{eq:Hvk}
\end{align}
to be \emph{exchangeable}.\footnote{A finite random vector 
$(Y_1,\ldots,Y_k)$ is \emph{exchangeable} if its distribution 
remains unchanged after reordering the coordinates. An infinite 
random sequence is exchangeable if any finite prefix is 
exchangeable. See \cite[Definition 2.1]{bruck2023exchangeable} 
for a recent treatment.} De Finetti’s theorem\footnote{This is 
a classic result by De Finetti from 1931 on exchangeable random 
sequences. We use the version described in 
\cite[Theorem 2.2]{bruck2023exchangeable}.}  then implies that 
there is a non-negative random process $X=(X_t)_{t\geq 0}$ such 
that 
\[(H(v,1,k))_{k\in\Z_+} = (\{\inf \{t\geq 0 : X_t > E_k\}\})_{k\in \Z_+},\]
where the $E_k \sim \Exp(1)$ are i.i.d.~standard exponential 
random variables. In addition, in order to sample with the correct 
probability, we need the minimum hash value 
$\min_{j\in[w]}(H(v,1,k_j))$ to distribute as an exponential 
random variable (also observed in \cite{jayaram2021perfect}), 
which will be true if $X$ is a \Levy{} 
process.
Note that $X$ needs to be non-negative in De Finetti’s theorem. 
Now suppose $X$ is a subordinator (non-negative \Levy{} process) 
with Laplace exponent $G_X:\R_+\to \R_+$. Then for $z>0$,
\begin{align*}
    &\pr(\min(H(v,1,k_1),\ldots,H(v,1,k_w))> z)\\
    &=\pr(\min(H(v,1,1),\ldots,H(v,1,w))> z) &\text{((\ref{eq:Hvk}) is exchangeable)}\\
    &=\pr(\min(\inf \{t\geq 0 : X_t > E_1\},\ldots,\inf \{t\geq 0 : X_t > E_w\})> z) & \text{(De Finetti's theorem)}\\
    &=\pr(\inf \{t\geq 0 : X_t > \Exp(w)\}> z)\\
    &=\pr(X_z \leq \Exp(w)) &\text{(non-negativity)}\\
    &=\E(\pr(X_z \leq \Exp(w)) \mid X_z) & \text{(law of total probability)}\\
    &=\E e^{-w X_z}
    = e^{-zG_X(w)}. &\text{(\LK)}
\end{align*}
Note that $\pr(\Exp(\lambda)>z) = e^{-z\lambda}$. 
Therefore we know that
\begin{align*}
    \min(H(v,1,k_1),\ldots,H(v,1,k_w)) \sim \Exp(G_X(w)) = \Exp(G_X(\mathbf{x}(v))).
\end{align*}
I.e., the minimum hash value produced by an element $v$ is \emph{exactly} 
an exponential random variable with rate $G(\mathbf{x}(v))$. This is the underlying intuition for the \LevyMinSampler{} 
studied in \cref{sec:levy_min}. The precise simulation of $\Exp(G_X(w))$ can be used to emulate existing cardinality estimators with the \emph{cardinality} of $\mathbf{x}$ 
($\|\mathbf{x}\|_0 = |\supp(\mathbf{x})|$)
replaced by the \emph{$G$-moment} $G(\mathbf{x})$.
This is the underlying intuition for the \LevyPCSA{} 
and \LevyHyperLogLog{} sketches studied in \cref{sec:levy_pcsa}.

\subsection{New Results}\label{sect:new-results}

We prove two main theorems.  \cref{thm:main:LevyTower} connects the \LK{} representation theorem for generic L\'evy processes with $f$-moment estimation in the $\R^d$-turnstile model, while \cref{thm:main:LevyMinSampler} connects subordinators with $G$-sampling in the $\R_+$-turnstile. 
Refer to~\cref{tab:notation} for some standard notation. 
\begin{table}[]
    \centering
    \begin{tabular}{l|l|l}
        \textsf{Notation} & \textsf{Definition} & \textsf{Notes} \\
        \hline\hline
          $|x|$ &  $(\sum_{j=1}^d x_j^2)^{1/2}$ & $x\in \R^d$, Euclidean norm\\
          $|z|$ &  $(z\Bar{z})^{1/2}$ & $z\in \C$, modulus\\
          $\norm{\mathbf{x}}_0$ & $\sum_{v\in[n]}\ind{\mathbf{x}(v)\neq 0}$ & $\mathbf{x}\in (\R^d)^n$\\
          $\norm{\mathbf{x}}_\infty$ & $\max_{v\in[n]}|\mathbf{x}(v)|$ & $\mathbf{x}\in (\R^d)^n$\\
          $\T$ & complex unit circle & identified by $[0,2\pi)$\\
          $\mathbb{S}_{d-1}$ & $\{x\in\R^d:|x|=1\}$ &  unit sphere in $\R^d$\\
           $\var Z$ & $\E (Z-\E Z) \overline{(Z-\E Z)}$ &  
 variance of complex $Z$\\\hline
    \end{tabular}
    \caption{Notation}
    \label{tab:notation}
\end{table}

\begin{theorem}[\LevyTower{}, \cref{sec:levy_tower}]\label{thm:main:LevyTower}
Let $f:\R^d\to \C$ be any function of the form 
\[
f(z)=\frac{1}{2}\inner*{z,Az} - i\inner*{\gamma, z} +\int_{\R^d}\left(1+i\inner*{z,s}\ind{|s|<1}-e^{i\inner*{z,s}}\right)\,\nu(ds),
\]
where $A$ is a covariance matrix, $\gamma\in\R^d$ is a drift, and $\nu$ is a positive measure over $\R^d$ with $\int_{\R^d}\min\{|s|^2,1\}\,\nu(ds)<\infty$. 
The \LevyTower, parameterized by $f$,
is a mergeable sketch that 
occupies $O(\epsilon^{-2}\log n)$ words.
For any input stream
$\mathbf{x}\in(\R^d)^n$ with $|f(\mathbf{x})|\in[1,\poly(n)]$, 
the \LevyTower{} returns an estimate 
$\widehat{f(\mathbf{x})}$ 
that with probability 99/100 satisfies:
\begin{align*}
    \left|\widehat{f(\mathbf{x})}-f(\mathbf{x})\right|\leq O(\epsilon |f(\mathbf{x})|).
\end{align*}
\end{theorem}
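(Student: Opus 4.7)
The plan is to directly instantiate the scheme previewed in Section 1.2. By the \LK{} representation theorem, any $f$ of the stated form is the characteristic exponent of some \Levy{} process $X=(X_t)_{t\geq 0}$ on $\R^d$. Using the random oracle, I draw i.i.d.\ samples $X^{(v,r,k)}\sim X$ indexed by $v\in[n]$, by repetitions $r\in[R]$ with $R=\Theta(\epsilon^{-2})$, and by logarithmically spaced times $t_k=2^k$ for $k$ ranging over an interval of length $K=\Theta(\log n)$ wide enough that some $k$ satisfies $t_k|f(\mathbf{x})|=\Theta(1)$ (feasible because $|f(\mathbf{x})|\in[1,\poly(n)]$). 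The sketch maintains the $RK$ registers
$$
C^{(r,k)} \;=\; \sum_{v\in[n]} \inner*{\mathbf{x}(v),\,X^{(v,r,k)}_{t_k}} \pmod{2\pi},
$$
each updated incrementally and stored at precision $1/\poly(n)$ in a single word. Since addition modulo $2\pi$ is associative, the sketch is mergeable, and its total space is $R\cdot K = O(\epsilon^{-2}\log n)$ words.

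The single-scale analysis is immediate from the computation in Section 1.2: $\E e^{iC^{(r,k)}}=e^{-t_k f(\mathbf{x})}$, and since $|e^{iC^{(r,k)}}|=1$, the variance is bounded by $1-e^{-2t_k\Re f(\mathbf{x})}=O(t_k|f(\mathbf{x})|)$ in the regime $t_k|f(\mathbf{x})|=O(1)$. Chebyshev's inequality (boosted by a median-of-means across the $R$ copies) then shows $\hat\mu_k\bydef R^{-1}\sum_r e^{iC^{(r,k)}}$ approximates $e^{-t_k f(\mathbf{x})}$ with additive error $O(\epsilon)$ with high constant probability. To recover $f(\mathbf{x})$, the query procedure identifies a good scale $k^*$, e.g.\ the smallest $k$ for which either $|\hat\mu_k|$ drops below $1/2$ or $|\arg\hat\mu_k|$ first exceeds $\pi/4$. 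At this scale $t_{k^*}|f(\mathbf{x})|=\Theta(1)$, so in particular $|t_{k^*}\Im f(\mathbf{x})|<\pi$, and the principal branch of the complex logarithm gives the estimator $\widehat{f(\mathbf{x})}=-t_{k^*}^{-1}\log\hat\mu_{k^*}$. A first-order expansion of $\log$ about $e^{-t_{k^*}f(\mathbf{x})}$ converts the $O(\epsilon)$ additive error in $\hat\mu_{k^*}$ into $O(\epsilon)$ relative error in $t_{k^*}f(\mathbf{x})$, and hence into the $O(\epsilon|f(\mathbf{x})|)$ error claimed.

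The main obstacle I anticipate is the scale-selection step. One has to argue simultaneously that (i)~the threshold test triggers at a scale where $e^{-t_{k^*}f(\mathbf{x})}$ is bounded away from both $0$ and $1$ even when $f(\mathbf{x})$ is almost purely imaginary --- in which case $|\hat\mu_k|$ never decays and only the argument carries information --- and (ii)~the chosen scale is not so large that $\arg\hat\mu_k$ wraps past $\pm\pi$ and creates a branch ambiguity. Handling this cleanly likely requires tracking both the modulus and a continuously ``unwrapped'' argument as $k$ increases, using scale $k-1$'s estimate to disambiguate scale $k$'s branch. A secondary concern is propagating the per-scale error uniformly to $99/100$ overall success without losing a $\log n$ factor in space: this is what the median-of-means inside each scale, combined with the fact that only the single chosen scale $k^*$ need succeed at $\epsilon$-precision (other scales need only satisfy the coarse thresholding), is designed to achieve. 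The remaining ingredients --- the \LK{} identity for the expectation, the variance bound, and the mergeability of mod-$2\pi$ addition --- are essentially immediate from the construction.
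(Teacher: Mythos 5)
Your construction and estimator match the paper's \LevyTower{} essentially exactly: per-index i.i.d.\ \Levy{} processes sampled at $\Theta(\log n)$ geometrically spaced times, sums stored modulo $2\pi$, the empirical characteristic function $\hat\mu_k$, a scale-selection scan, and a complex logarithm to extract $f(\mathbf{x})$; you also write down the key variance identity $\var e^{iC_t}=1-e^{-2t\Re f(\mathbf{x})}\leq 2t|f(\mathbf{x})|$.

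The gap is in how you propose to get a uniform guarantee across scales. You flag, correctly, that a naive union bound over $\Theta(\log n)$ scales would cost a $\log n$ factor, but your repair does not close it: median-of-means with $R=\Theta(\epsilon^{-2})$ registers per scale gives only constant per-scale failure probability (driving it to $1/\log n$ costs an extra $\log\log n$ factor, and ``median'' of the complex numbers $e^{iC}$ needs a careful definition), and the remark that ``only the chosen scale $k^*$ needs $\epsilon$-precision'' is circular, since $k^*$ is itself determined by coarse tests that must hold at every scale scanned before it. The missing observation is that the variance bound you already have decays \emph{geometrically} along the tower: with $t=2^{-k}$ the per-scale Chebyshev failure probability is $O(2^{-k}|f(\mathbf{x})|/(m\eta^2))$, and summing over $k\geq\log_2|f(\mathbf{x})|$ yields $O(1/(m\eta^2))$, independent of the number of levels. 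This is precisely \cref{lem:1} and \cref{lem:2}; a single budget of $m=\Theta(\epsilon^{-2})$ registers per scale then succeeds uniformly across all relevant scales with probability $99/100$ with no boosting, giving $O(\epsilon^{-2}\log n)$ words total. Your ``primary'' branch-ambiguity concern is also resolved more simply than by unwrapping the argument level by level: the paper scans from small $t=2^{-k}$ upward and stops at the first $k$ with $|1-Y_k|>0.2$; since $\Re f\geq 0$ (\cref{lem:ch_exp}) gives $|1-e^{-z}|\leq|z|$ and $|1-e^{-z}|$ is bounded below by a constant times $|z|$ when $|z|$ is small, the first trigger lands at $2^{-W}|f(\mathbf{x})|=\Theta(1)$ with $|1-Y_W|<1/2$, so the principal branch of $\log$ applies directly via \cref{lem:3}.
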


The \LevyTower{} sketch improves our understanding of the tractability of one-dimension function moments. In particular, it implies the tractability of many nearly periodic functions that were not previously classified. 
See \cref{sec:tractability} for a discussion of the new result and the existing tractability results. 
It also implies the tractability of a large class of 
\emph{multidimensional moments} which have not been considered before.

\begin{theorem}[\LevyMinSampler{}, \cref{sec:levy_min_sampler}]\label{thm:main:LevyMinSampler}
Let $G:\R_+\to\R_+$ be any function of the form
    \begin{align*}
        G(z) = c\ind{z>0} +\gamma_0 z + \int_0^\infty (1-e^{-zs})\,\nu(ds),
    \end{align*}
    where $c,\gamma_0,\geq 0$, and $\nu$ is any positive measure such that $\int_0^\infty \min\{s,1\}\,\nu(ds)<\infty$. There is a min-based sketch storing only a single pair $(v_*,h_*) \in [n]\times \mathbb{R}_+$,
    where $v_*\in[n]$ is the 
    sampled index 
    and $h_*\in \R_+$ 
    is the minimum hash value. 
    For any given input vector $\mathbf{x}\in\R_+^n$ it is guaranteed that
\begin{itemize}
    \item For any $u\in[n]$, $\pr(v_*=u)=G(\mathbf{x}(u))/G(\mathbf{x})$.
    \item $h_*\sim \Exp(G(\mathbf{x}))$.
\end{itemize}    
The space required is therefore only 2 words.
\end{theorem}

The raw version of the \LevyMinSampler{} (\cref{sec:raw_levy_min}) is built on the structure theory of min-wise infinitely divisible, exchangeable sequences. See Br\"uck, Mai, and Scherer \cite{bruck2023exchangeable} for a recent treatment. 
Cohen \cite{cohen2019sampling} developed \emph{approximate} $G$-samplers for 
soft concave sublinear functions, which are a subset of Laplace exponents. 
Jayaram, Woodruff, and Zhou~\cite{jayaram2022truly} have developed $G$-samplers with precisely correct sampling probabilities for functions $G:\N\to\R$ such that $\max_{z\in\N} (G(z)-G(z-1))<\infty$ with $O(\frac{\sum_{v\in[n]}\mathbf{x}(v)}{G(\mathbf{x})}\log n)$ bits of space. For $G$ being a Laplace exponent, the factor $\frac{\sum_{v\in[n]}\mathbf{x}(v)}{G(\mathbf{x})}$ in \cite{jayaram2022truly}  can be $\Omega(\poly(n))$ when $\sum_{v\in[n]}\mathbf{x}(v)\gg G(\mathbf{x})$. The significance of \cref{thm:main:LevyMinSampler} is that the sampling probability is \emph{precisely correct} and the space usage is only \emph{two words}.

\medskip

One way to leverage previous sketching research is to \emph{reduce} complex sketching tasks to simpler ones,
which are both well studied and widely deployed.  
We effect these reductions through a number of powerful \emph{emulation} theorems, which create sketches
for complex estimation tasks whose distribution is \emph{identical} to an existing sketch.

\begin{theorem}[\textsf{$F_\alpha$-stable} emulation, \cref{sec:alpha-levy-stable}]\label{them:simulation:stable}
Let $f$ be the characteristic exponent of any $d$-dimensional symmetric $\alpha$-stable process  $X$.
When $\alpha=2$, $f(x)=e^{-\frac{1}{2}t \inner{x,Ax}}$, where
$A$ is a covariance matrix, and when $\alpha\in(0,2)$, 
$f(x)=\int_{\mathbb{S}_{n-1}}|\inner{x,\xi}|^{\alpha}\,\mu(d\xi)$, 
where $\mu$ is a symmetric, positive measure on $\mathbb{S}_{n-1}$.\footnote{For the case $\alpha\in(0,2)$, only symmetric processes are considered here for simplicity.  See~\cite{ken1999levy} for the full characterization of stable processes.}
Let the \LevyStable{} sketch be parameterized by \Levy{} process $X$ on $\R^d$.
Given any input vector $\mathbf{x}\in (\R^d)^n$ and $\mathbf{x}'\in \R^n$ such that $f(\mathbf{x})=\sum_{v\in[n]}|\mathbf{x}'(v)|^\alpha$,
 \LevyStable{} with input $\mathbf{x}$ and Indyk's \textsf{$F_\alpha$-stable} 
 sketch with input $\mathbf{x}'$ distribute identically.
\end{theorem}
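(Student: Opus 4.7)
The plan is to reduce the emulation claim to a single-register statement and then verify equality of distributions via matching characteristic functions. Both \LevyStable{} and the \textsf{$L_\alpha$-stable} sketch are, by construction, composed of $m$ mutually independent registers, each built by the same linear random projection applied to the respective input vector, with independent randomness used across registers. It therefore suffices to show that a single register of \LevyStable{} on $\mathbf{x}$ has the same law as a single register of Indyk's sketch on $\mathbf{x}'$; the identical joint law of the full $m$-register sketches follows at once from mutual independence across registers.

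For a single \LevyStable{} register $C_1=\sum_{v\in[n]}\inner*{\mathbf{x}(v),X_1^{(v)}}$ with i.i.d.\ copies $X_1^{(v)}\sim X_1$, I would redo the computation previewed in \cref{sec:preview-stable}. Independence across $v$ and the \LK{} identity $\E e^{i\inner*{y,X_1}}=e^{-f(y)}$ give $\E e^{izC_1}=\prod_{v}e^{-f(z\mathbf{x}(v))}$ for every $z\in\R$. The stated closed forms of $f$ imply the scalar homogeneity $f(zy)=|z|^{\alpha}f(y)$ for $z\in\R$, $y\in\R^d$ (quadratic in $z$ when $\alpha=2$; for $\alpha\in(0,2)$ pulling $|z|^\alpha$ out of $\int_{\mathbb{S}_{d-1}}|\inner*{zy,\xi}|^{\alpha}\,\mu(d\xi)$). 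Consolidating the product across $v$ then yields
\[
\E e^{izC_1} \;=\; \exp\!\bigl(-|z|^{\alpha}f(\mathbf{x})\bigr).
\]

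For a single Indyk register $R=\sum_{v\in[n]}\mathbf{x}'(v)\,Y_\alpha^{(v)}$, where the $Y_\alpha^{(v)}$ are i.i.d.\ standard one-dimensional $\alpha$-stable variables with $\E e^{izY_\alpha}=e^{-|z|^{\alpha}}$, independence directly gives $\E e^{izR}=\exp(-|z|^{\alpha}\sum_v|\mathbf{x}'(v)|^{\alpha})$, which equals $\exp(-|z|^{\alpha}f(\mathbf{x}))$ by the hypothesis $\sum_v|\mathbf{x}'(v)|^{\alpha}=f(\mathbf{x})$. Since the characteristic function on $\R$ uniquely determines a real-valued distribution, $C_1$ and $R$ are identically distributed, and the full emulation follows by taking products over $m$ independent registers.

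I do not anticipate a substantive obstacle. The only point worth being careful about is that the stated closed forms are indeed the most general characteristic exponents of (symmetric, in the case $\alpha\in(0,2)$) $\alpha$-stable processes on $\R^d$: this is a classical consequence of the \LK{} representation combined with the scaling relation $X_t\sim t^{1/\alpha}X_1$, which forces the L\'evy measure of a pure-jump $\alpha$-stable law into the stated spherical form. I would invoke this as a standard fact from the L\'evy-process literature rather than rederive it.
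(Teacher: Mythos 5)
Your proposal is correct and takes essentially the same route as the paper's own argument, which appears as the preview computation in \cref{sec:preview-stable}: reduce to a single register, compute $\E e^{izC_1}$ via independence and \LK, match it to the characteristic function of Indyk's register $f(\mathbf{x})^{1/\alpha}Y_\alpha$, and conclude by uniqueness of characteristic functions, with the $m$-register claim following from i.i.d.\ registers (on the \LevyStable{} side, from the independent stationary increments $X_k^{(v)}-X_{k-1}^{(v)}$). The only cosmetic difference is that you pull out the factor $|z|^\alpha$ via the homogeneity $f(zy)=|z|^\alpha f(y)$ of the stable exponent, whereas the paper rescales time via $zX_1^{(v)}\sim X_{z^\alpha}^{(v)}$ before applying \LK{} — equivalent manipulations.
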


\cref{them:simulation:stable} serves as another illuminating example showing how the L\'evy-Khintchine theorem helps to understand streaming sketching. Previously, only two classes of stable moments are considered: one dimensional $F_\alpha$-moments
by Indyk \cite{indyk2006stable}, 
and multidimensional $F_{p,q}$-moments 
by Ganguly et al.~\cite{ganguly2012estimating} (see also~\cite{JayramW09}), which are sketched by \emph{algorithmic tricks} of combining stable random variables. The \LevyStable{} sketch extends such tricks to \emph{all stable processes} in a systematic way. For example, we can now estimate, 
using any estimator of Indyk's sketch~\cite{indyk2006stable,li2008estimators}, 
the $f$-moment of an $\R^3$-turnstile stream, where $f$ is the 1-stable function
\[
f(x) = \int_{\mathbb{S}_2} \frac{|\inner{\xi,x}|}{|\xi_1|^2+|\xi_2|+|\xi_3|^{1/2}} \, d\xi,
\]
and $\mathbb{S}_2$ is the unit $\R^3$-sphere. 
(It is 1-stable due to the numerator $|\inner{\xi,x}|$.  
The denominator can be any symmetric 
function of $\xi$ 
that is bounded away from 0 on the sphere.)
We are not aware of any prior work that proved such stable $f$-moments could be sketched efficiently.

\medskip

For the following two emulation theorems, 
let 
$G(z) = c +\gamma_0 z + \int_0^\infty (1-e^{-zs})\,\nu(ds)$ be the Laplace exponent of any subordinator $X$ and parameterize the L\'evy-based sketches by $X$. A cardinality sketch is \emph{Poissonized} if  
each actual insertion is simulated by $\Poisson(1)$ insertions.

\begin{theorem}[\textsf{PCSA} emulation, \cref{sec:levy_pcsa}]\label{thm:simulation:PCSA}
Let vectors $\mathbf{x},\mathbf{x}'\in \R_+^n$ be 
such that $G(\mathbf{x})=\norm{\mathbf{x}'}_0$.
Then 
 \LevyPCSA{} (parameterized by $G$) on input $\mathbf{x}$ 
 and Poissonized \PCSA{} with input $\mathbf{x}'$ distribute identically.
\end{theorem}

\begin{theorem}[\textsf{HyperLogLog} emulation, \cref{sec:levy_pcsa}]\label{thm:simulation:HyperLogLog}
Let vectors $\mathbf{x},\mathbf{x}'\in \R_+^n$ be such that $G(\mathbf{x})=\norm{\mathbf{x}'}_0$.
Then \LevyHyperLogLog{} (parameterized by $G$) with input $\mathbf{x}$ and Poissonized \textsf{HyperLogLog} with input $\mathbf{x}'$ distribute identically.
\end{theorem}

The significance of the three emulation theorems is that, 
since the final distribution of the sketches are the same as the classic sketches, every analysis, estimator, or practical optimization 
developed over the years
can be applied to \LevyStable, \LevyPCSA, and \LevyHyperLogLog{} for free,
such as Li's estimators for \textsf{$F_\alpha$-stable} sketches~\cite{li2008estimators}, 
the near-optimal $\tau$-GRA estimators~\cite{wang2023better} 
for \PCSA{} and \HyperLogLog, 
and maximum likelihood estimation for \PCSA~\cite{pettie2021information,Lang17}.
In particular, the \Fishmonger{} sketch~\cite{pettie2021information} (and entropy-compressed version of \PCSA{} with maximum likelihood estimation) can now estimate $G(\mathbf{x})$ with relative variance $1/m$ using $(1+o(1))m(H_0/I_0)
\approx 1.98m$ bits, for $m=\Omega(\log^2\log n)$.\footnote{By definition $H_0 = \frac{1}{\log 2} + \sum_{k=1}^\infty\frac{1}{k}\log_2 \left(1+1/k\right)$ and $I_0=\pi^2/6$.}

\ignore{
\subsection{Related Work}

The generic framework of the $M$-turnstile model is proposed by Wang \cite{wang2023probabilistic}, emphasizing how one sketching technique can be generalized to all algebraically similar streaming models. In \cite{wang2023probabilistic}, Wang generalizes the random projection trick of Kane, Nelson, and Woodruff \cite{kane2010optimal} to all finite fields. In \cite{pettie2024fourier}, Pettie and Wang use harmonic analysis to estimate functions over $\G$ for any locally compact abelian group $\G$. 

The problems of generic $f$-moment (or $G$-moment) estimation with $f:\Z\to \C$ (or $G:\N\to \C$) is asked by Alon, Matias, and Szegedy \cite{alon1996space}. The problem of generic $G$-sampling with $G:\R_+\to\R_+$ is considered by Cohen \cite{cohen2019sampling} and Jayaram, Woodruff, and Zhou \cite{jayaram2022truly}. Related works about $f$-moment estimation, $G$-moment estimation, and $G$-sampling have been discussed earlier in the introduction. There have been different notions of scale-invariance in \cite{pettie2021information} (entropy and Fisher information are scale-invariant), \cite{pettie2021non} (normalized remaining area has a limit), and \cite{wang2023better} (the distribution of the $\tau$-generalized remaining area is scale-invariant). The scale-invariance discussed in \cref{sec:f-moment-estimation-and-levy-processes} that requires the states to be \emph{isomorphically} distributed after doubling support is the strongest version, implying all the scale-invariance defined in \cite{pettie2021information}, \cite{pettie2021non}, and \cite{wang2023better}.

The structure of the \LevyTower{} is standard in sketch design: subsample and/or randomly project elements at geometrically spaced levels and then sum them. Similar sketches include \PCSA{} by Flajolet and Martin \cite{flajolet1985probabilistic}, multi-resolution array of counters by Kumar, Sung, Xu, and Wang \cite{kumar2004data}, the multi-resolution random projection by Kane, Nelson, and Woodruff \cite{kane2010optimal}, the layering method by Braverman and Ostrovsky \cite{braverman2013generalizing}, and the Poisson tower by Pettie and Wang \cite{pettie2024fourier}. The estimator for \LevyTower{} is the character-moment estimator used by Pettie and Wang for the Poisson tower \cite{pettie2024fourier}, which uses the $\tau$-generalized remaining area framework~\cite{wang2023better} with $\tau=-1/3$. 
\LevyTower{}s are linear sketches. It is proved by Li, Nguy{\~{ê}}n, and Woodruff in \cite{li2014turnstile} that optimal linear sketches are optimal in general for the turnstile $f$-moment estimation in terms of space complexity up to polylog factors.

The \LevyMinSampler{} uses a standard min-based sketch design: maintain the smallest hash value together with its index. 
Similar sketches include Vitter's reservoir sampling~\cite{vitter1985random}, Cohen's~\cite{cohen2017hyperloglog,cohen2019sampling}, 
\textsf{max-distinct}, 
and Cohen's~\cite{cohen1997size,cohen2007summarizing}
\textsf{Bottom-$k$} or \textsf{$k$-Min} samplers.

\LevyPCSA{} and \LevyHyperLogLog{} 
are direct consequences of combining the ideas behind the 
\LevyMinSampler{} and the classic \PCSA/\HyperLogLog{} sketches.
\HyperLogLog~\cite{flajolet2007hyperloglog} is notable for being the most popular cardinality sketch, while \PCSA{} is both the first cardinality sketch, and the \emph{most efficient} (in its entropy compressed state); see~\cite{Lang17,pettie2021information,ApacheDataSketches}.
The relation between \LevyMinSampler{} and \LevyPCSA/\LevyHyperLogLog{}
is analogous to the relation between \textsf{$k$-Min} and \PCSA/\HyperLogLog{}: \LevyMinSampler{}
can return samples while \LevyPCSA/\LevyHyperLogLog{} (\PCSA/\HyperLogLog{}) is extremely space efficient. 
The emulation of \PCSA{} by \LevyPCSA{} 
connects the information theoretic memory-variance trade-off of the $G$-moment estimation with that of the cardinality estimation in the distributed setting \cite{pettie2021information} and in the sequential setting \cite{pettie2021non}. 
Cohen~\cite{cohen2017hyperloglog} introduced a similar 
reduction from min-based sketches to a \HyperLogLog-like sketch.

Certain $f$-moments are proved to be impossible to estimate 
with multiplicative error and polylog space; see \cite{alon1996space,bar2004information,braverman2010zero,chestnut2015stream,braverman2016streaming} and many others. 
The \LevyTower{} interacts with those lower bounds in an interesting way. 

On the mathematical side, the main tool we will use from the theory of \Levy{} process is the \LK{} representation.  
We follow the text of Sato~\cite{ken1999levy} for results
on \Levy{} processes in general, Pruitt~\cite{pruitt1981growth}
for the short-time behavior of \Levy{} processes,
and a text of Durrett \cite{durrett2019probability} for 
the central and Poisson limit theorems. 
The \LevyMinSampler{} is closely related with the theory of the min-wise infinitely divisible, exchangeable random sequences; 
see Br\"uck, Mai, and Scherer \cite{bruck2023exchangeable}. 
Finally, the Fourier transform is also used in the construction of a more general sketching method we call the Fourier-Hahn-\Levy{} method.
}

\subsection{Organization}\label{sect:organization}

We first review \Levy{} processes and the \LK{} representation theorem in \cref{sec:prelin_levy}. 
We prove the two main theorems in \cref{sec:levy_tower} and \cref{sec:levy_min} that connect the \LK{} theorem to streaming sketches. 
Specifically, in \cref{sec:levy_tower} we construct the \LevyTower{} sketch, 
which transforms any generic L\'evy process $X$ into an $f_X$-moment sketch, where $f_X$ is the characteristic exponent of $X$. 
A specialization of \LevyTower{} is presented, call \LevyStable, which is more space efficient and applies whenever $X$ is a (multidimensional) stable process.
In \cref{sec:levy_min} we construct the \LevyMinSampler, 
which can be parameterized to sample elements according to any weight 
function $G$, where $G$ is the Laplace 
exponent of a subordinator. 

In \cref{sec:levy_pcsa}, we use the $G$-transformation technique to reduce the 
problem of $G$-moment estimation to cardinality estimation; this leads to 
the \LevyPCSA, \LevyHyperLogLog, and \StableHyperLogLog{} sketches. 
In \cref{sec:previous-sketches-as-Levy-processes}, we discuss the connection between 
previous sketches and L\'evy processes in greater detail. 
In \cref{sec:tractability}, 
we discuss the problem of characterizing the set of \emph{tractable} functions,
and introduce the powerful \emph{Fourier-Hahn-\Levy} method for expanding the 
range of the \LevyTower{} beyond \LK-representable functions.
We conclude in \cref{sec:conclusion} with some conjectures on the 
relationship between moment estimation, sampling, and \Levy{} processes.

\section{Preliminaries: \Levy{} Processes and the \LK{} Theorem}\label{sec:prelin_levy}

\subsection{\Levy{} Processes on $\R^d$}

We use Sato's text~\cite{ken1999levy} as our reference for the theory of \Levy{} processes. 

\begin{definition}[\Levy{} processes {\cite[page~3]{ken1999levy}}]
    A random process $X=(X_t)_{t\geq 0}$ on $\R^d$ is a \emph{\Levy{} process} if it has the following three properties.
    \begin{description}
        \item[(A) Stationary Increments.] $X_{t+s}-X_{t}\sim X_s$ for all $ t,s\in \R_+$;\label{item:stationary}
        \item[(B) Independent Increments.] for $0\leq t_1 < t_2\ldots <t_k$, $X_{t_1},X_{t_2}-X_{t_1},\ldots, X_{t_k}-X_{t_{k-1}}$ are mutually independent; \label{item:independent}
        \item[(C) Stochastic Continuity.] $X_0=0$ almost surely and $\lim_{t\searrow 0}\pr(|X_t|>\epsilon)=0$ for any $\epsilon>0$.
\end{description}
\end{definition}

A process that satisfies (A) and (B) 
is called \emph{memoryless}, 
i.e., conditioned on $X_{t^*}$, $(X_t)_{t>t^*}$ is independent of $(X_t)_{t<t^*}$.
The primary way to study \Levy{} processes is through their \emph{characteristic functions}. Refer to \cref{tab:notation} for notation.

\begin{theorem}[\Levy{}-Khintchine representation {\cite[page~37]{ken1999levy}}]\label{thm:lk}
Any \Levy{} process $X=(X_t)_{t\geq 0}$ on $\R^d$ can be identified by a triplet $(A,\nu,\gamma)$ where $A$ is a covariance matrix, $\gamma\in\R^d$, 
and $\nu$ is a measure on $\R$ such that
\begin{align}
    \nu(\{0\})=0\quad \text{and}\quad\int_{\R^d}\min\{|s|^2,1\}\,\nu(d{s})<\infty.\label{eq:measure_cond}
\end{align}
The identification is through the characteristic function. For any $t\geq 0$ and $z\in \R^d$,
\begin{align}
    \E e^{i \inner*{X_t,z}} &= \exp\left(-t \left(\frac{1}{2}\inner*{z,Az} - i\inner*{\gamma, z} +\int_{\R^d}(1+i\inner*{z,s}\ind{|s|<1}-e^{i\inner*{z,s}})\,\nu(ds)\right)\right). \label{eq:levy-khintchine}
\end{align}
Conversely, any triplet $(A,\nu,\gamma)$ where $A$ is a covariance matrix, $\nu$ satisfies (\ref{eq:measure_cond}), and $\gamma\in\R^d$ corresponds to a  \Levy{} process satisfying (\ref{eq:levy-khintchine}). 
\end{theorem}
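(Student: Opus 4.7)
The plan is to prove a forward implication (every \Levy{} process yields a triplet via its characteristic function) and a converse (every admissible triplet yields a \Levy{} process). The forward direction is the substantive one; I would attack it in three analytic steps: a multiplicative reduction for $\phi_t(z)\bydef\E e^{i\inner*{X_t,z}}$, extraction of a Poisson random measure describing the jumps of $X$, and a \Levy-It\^o decomposition that factorizes $\phi_t(z)$ across independent components.

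\textbf{Multiplicative reduction.} Stationary and independent increments give $\phi_{t+s}(z)=\phi_t(z)\phi_s(z)$, stochastic continuity makes $t\mapsto\phi_t(z)$ continuous, and a nonvanishing argument (any zero would propagate via multiplicativity and contradict continuity at $0$) lets one write $\phi_t(z)=e^{-t\psi(z)}$ for a continuous function $\psi:\R^d\to\C$ with $\psi(0)=0$. The forward direction then reduces to identifying $\psi$ with the bracketed integrand in (\ref{eq:levy-khintchine}).

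\textbf{Jump measure and \Levy-It\^o.} Passing to a \cadlag{} modification of $X$, I would count its jumps $\Delta X_s=X_s-X_{s-}$ with a random measure $N$ on $(0,\infty)\times(\R^d\setminus\{0\})$. Memorylessness forces $N$ to be a Poisson random measure with intensity of product form $dt\otimes\nu(ds)$ for some measure $\nu$. Finiteness of the expected number of jumps of size $\geq 1$ on $[0,1]$ yields $\nu(\{|s|\geq 1\})<\infty$, and a variance computation on compensated small jumps yields $\int_{|s|<1}|s|^2\,\nu(ds)<\infty$, giving (\ref{eq:measure_cond}). The decomposition
\[
X_t=\gamma t+B_t+\int_{|s|\geq 1}s\,N(t,ds)+\lim_{\epsilon\downarrow 0}\int_{\epsilon\leq |s|<1}s\bigl(N(t,ds)-t\,\nu(ds)\bigr),
\]
with $B_t$ an independent Brownian motion of covariance $A$, splits $X_t$ into four mutually independent pieces. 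Their characteristic functions---$e^{it\inner*{\gamma,z}}$, $e^{-t\inner*{z,Az}/2}$, the exponential formula for Poisson integrals on $\{|s|\geq 1\}$, and Campbell's formula for the compensated $\{|s|<1\}$ integral---multiply to reproduce (\ref{eq:levy-khintchine}). The converse runs this backwards: given $(A,\nu,\gamma)$ satisfying (\ref{eq:measure_cond}), independently synthesize the four pieces; stationary, independent, stochastically continuous increments are inherited componentwise.

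\textbf{Main obstacle.} The analytically delicate step is the small-jump compensation: showing that $\int_{\epsilon\leq|s|<1}s\,(N(t,ds)-t\,\nu(ds))$ is Cauchy in $L^2$ as $\epsilon\downarrow 0$ (this is exactly where the second-moment half of (\ref{eq:measure_cond}) is used), that its limit admits a \cadlag{} version independent of $B$ and of the large-jump process, and that its characteristic function is the $\{|s|<1\}$ term above. The indicator $\ind{|s|<1}$ appearing in (\ref{eq:levy-khintchine}) is an artifact of having to center only the small jumps. Once this limiting argument is in place, assembling the characteristic functions of the independent components and running the symmetric argument for the converse is bookkeeping.
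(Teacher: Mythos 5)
The paper gives no proof of this theorem; it is quoted as a classical result with a citation to Sato \cite[p.~37]{ken1999levy}, so there is no internal argument to compare against. Your sketch is a correct high-level outline of one standard proof, the probabilistic route through the \Levy-It\^o decomposition, but note that it is \emph{not} the route the cited reference takes. Sato first reduces to the one-marginal $X_1$, observes it is infinitely divisible, and then proves the \LK{} representation purely analytically at the level of infinitely divisible laws, by approximating the characteristic function with compound-Poisson-plus-Gaussian exponents and passing to limits with a tightness/Helly argument; the \Levy-It\^o decomposition appears much later in Sato as a \emph{consequence} of \LK, not an ingredient. Your route inverts that logic. It is also standard (Kallenberg, Applebaum, Kyprianou), and it buys more structural information about the path of $X$, but it requires more probabilistic machinery up front and one must be careful that \Levy-It\^o is proved without silently invoking \LK{} somewhere in the middle. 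Two steps in your outline are asserted rather than argued and are where most of the real work lives: (i) the claim that ``memorylessness forces $N$ to be a Poisson random measure with intensity $dt\otimes\nu(ds)$''---the Lebesgue factor in time does follow from stationarity, but the Poisson structure itself needs a genuine argument (the jump-count in a fixed annulus $\{\epsilon\leq|s|<1\}$ is an integer-valued, nondecreasing, stochastically continuous \Levy{} process, hence Poisson, and counts in disjoint annuli are independent); and (ii) the claim that the continuous residual $B_t$ is a Brownian motion with some covariance $A$, which is a separate theorem (a continuous \Levy{} process is Gaussian), typically proved via a CLT on fine partitions or \Levy's martingale characterization, not something that drops out for free. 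Neither gap is fatal, but both must be closed to turn the sketch into a proof, and together with the $L^2$-Cauchy argument for compensated small jumps that you correctly flag as the delicate point, they constitute essentially the entire content of the theorem in this approach.
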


We call the exponent in (\ref{eq:levy-khintchine}) the \emph{characteristic exponent}, denoted by $f_X(z)=-\log \E e^{i\inner*{X_1,z}}$. 

\begin{remark}\label{rem:LK-1d-real-special-case}
    In the one-dimensional case, the matrix $A$ can be identified by the usual variance $\sigma^2$. Then we have for $z\in \R$,
    \begin{align}
    \E e^{i z X_t} &= \exp\left(-t \left(\frac{1}{2}\sigma^2 z^2 - i\gamma z +\int_{\R}(1+iz s\ind{|s|<1}-e^{izs})\,\nu(ds)\right)\right). \label{eq:levy-khintchine-1d}
\intertext{Moreover, the characteristic exponent $f_X(z)$ is real \emph{if and only if} the \Levy{} measure $\nu$ is symmetric around 0~\cite{ken1999levy}, 
in which case the characteristic exponent can be expressed as}
    \E e^{i z X_t} &= \exp\left(-t\left(Ax^2 + 2\int_0^\infty (1-\cos(sz))\nu(ds)\right)\right)\label{eq:cosine-LK-exponent}
\end{align}
\noindent for some constant $A$ and \Levy{} measure $\nu$.
\end{remark}    
We note some properties of the characteristic exponent $f$.
\begin{lemma}\label{lem:ch_exp}
Let $X$ be any \Levy{} process on $\R^d$ and $f$ be its characteristic exponent.
\begin{itemize}
    \item $\Re f \geq 0$.  $\Re f =0$ if and only if $X$ is a deterministic drift, i.e., $X_t=\gamma t$ for all $t\geq 0$.
    \item For any $x\in \R^d$, $f(-x)=\overline{f(x)}$, the complex conjugate of $f(x)$.
\end{itemize}
\end{lemma}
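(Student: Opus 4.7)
The plan is to read both claims directly off the \LK{} identity $\E e^{i\inner{X_t,z}} = e^{-t f(z)}$, using standard facts about characteristic functions of real-valued random vectors.

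For the first bullet, the non-negativity of $\Re f$ is immediate from the observation that $\E e^{i\inner{X_t,z}}$ is the characteristic function of a random vector on $\R^d$, hence has modulus at most $1$. Since $|e^{-tf(z)}| = e^{-t \Re f(z)}$, we obtain $\Re f(z) \ge 0$ for every $z \in \R^d$ and every $t > 0$, so $\Re f \ge 0$ globally. For the equality case, assume $\Re f \equiv 0$. Then $|\E e^{i\inner{X_t,z}}| = 1$ for every $z$ and $t$. A standard fact for real random vectors is that if $|\varphi(z)|=1$ for all $z$ in a neighborhood of the origin (where $\varphi$ is the characteristic function), then the distribution is concentrated at a single point; hence $X_t$ is almost surely equal to some deterministic $c(t) \in \R^d$. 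Stationary and independent increments then force Cauchy's functional equation $c(t+s) = c(t) + c(s)$, and stochastic continuity makes $c$ continuous with $c(0)=0$. The unique continuous solution is $c(t) = \gamma t$, and one verifies that this $\gamma$ coincides with the drift parameter in the \LK{} triplet (since the measure $\nu$ and matrix $A$ must vanish in order for $\Re f \equiv 0$, as can be read off the \LK{} formula). The converse (pure drift $\Rightarrow \Re f = 0$) is immediate: $f(z) = -i\inner{\gamma,z}$ is purely imaginary.

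For the second bullet, I would use that $X_t$ is $\R^d$-valued, so $\overline{e^{i\inner{X_t,z}}} = e^{-i\inner{X_t,z}} = e^{i\inner{X_t,-z}}$. Taking expectations,
\[
e^{-t\,\overline{f(z)}} \;=\; \overline{\E e^{i\inner{X_t,z}}} \;=\; \E e^{i\inner{X_t,-z}} \;=\; e^{-t f(-z)}.
\]
Evaluating at a sequence $t \searrow 0$ and using $f(0)=0$ together with the continuity of the characteristic exponent at the origin (so that the principal branch of $\log$ is unambiguous near $1$) yields $\overline{f(z)} = f(-z)$. Alternatively, one can just substitute $-z$ into the explicit \LK{} formula \eqref{eq:levy-khintchine} and observe that each of the three terms—$\frac{1}{2}\inner{z,Az}$, $-i\inner{\gamma,z}$, and the integral—is sent to its complex conjugate under $z \mapsto -z$, which gives the identity term by term.

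The only nontrivial step is the equality case in (1): one must invoke the ``$|\varphi|=1$ near $0$ implies degenerate'' fact for $\R^d$-valued distributions and handle the passage from ``$X_t$ is a.s.\ constant for each $t$'' to ``$X_t = \gamma t$'' via Cauchy's equation and stochastic continuity. Everything else is routine manipulation of the \LK{} formula.
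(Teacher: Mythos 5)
Your proof is correct. The paper states \cref{lem:ch_exp} without proof, treating it as a standard consequence of the \LK{} formula, so there is no in-paper argument to compare against; your write-up is a sound and complete filling-in of what the authors leave implicit.

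One small remark on economy of argument: for the equality case in the first bullet, you route through the distributional fact that a characteristic function of constant modulus one forces a point mass, then invoke Cauchy's functional equation. Since the \LK{} formula is already in hand, a shorter route is to observe directly that
\[
\Re f(z) \;=\; \tfrac{1}{2}\inner{z,Az} \;+\; \int_{\R^d}\bigl(1-\cos\inner{z,s}\bigr)\,\nu(ds),
\]
a sum of two non-negative terms, so $\Re f \equiv 0$ forces $A=0$ and $\nu=0$ (since $1-\cos\inner{z,s}>0$ for $\nu$-a.e.\ $s\neq 0$ and suitable $z$), leaving $f(z)=-i\inner{\gamma,z}$, i.e.\ pure drift. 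This avoids the lattice/degeneracy argument and the passage through Cauchy's equation. Both proofs are valid; yours is more probabilistic, the alternative is more in the spirit of reading everything off the \LK{} triplet, which is the framing of the lemma as it is used later in \cref{lem:levy_char}.

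Also, in the second bullet you should be careful to note (as you do) that the branch-of-logarithm issue is resolved either by $t \searrow 0$ or by the term-by-term conjugation of the \LK{} formula; the latter is cleaner and requires no continuity argument at all.
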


We list some common one-dimensional L\'evy processes in \cref{tab:common_process} that will be frequently used later.
\begin{table}[!ht]
    \centering
    \begin{tabular}{l|l|l}
       \textsf{Process}  & \textsf{Characteristic Exponent} & \textsf{Notes} \\
       \hline\hline
       Linear drift & $i \gamma z$ & $\gamma\in \R_+$\istrut[1]{3}\\\hline
        $\alpha$-stable  &  $|z|^\alpha$ & $\alpha\in(0,2]$\istrut[1]{3}\\\hline
        Poisson & $1-e^{iz}$ & \istrut[1]{3}\\\hline
        \rb{-2.5}{Compound Poisson}  & \rb{-2.5}{$\int_{-\infty}^\infty(1-e^{izs})\,\nu(ds)$} & $\nu(\R)<\infty$ is the \emph{jump rate},\istrut[0]{3}\\
        && $\nu(\R)^{-1}\nu$ is the \emph{jump distribution}\istrut[1]{0}\\\hline
        Symmetric Compound Poisson & $2\int_0^\infty(1-\cos(zs))\,\nu(ds)$ & $\nu(\R)<\infty$\istrut[2]{4}\\\hline
    \end{tabular}
    \caption{Common one-dimensional L\'evy processes.}
    \label{tab:common_process}
\end{table}

\subsection{Subordinators: \Levy{} Processes on 
$\R_+\cup \{\infty\}$}\label{sec:prelin_sub}

One-dimensional, non-negative \Levy{} processes occupy a special place in the theory of \Levy{} processes~\cite{ken1999levy}.
They are also known as 
\emph{subordinators} for reasons 
that will be explained shortly.
As all subordinators are supported on $\R_+\cup\{\infty\}$ 
they can be characterized via their \emph{Laplace transforms}.
\begin{theorem}[\LK{} for subordinators {\cite[page~138]{ken1999levy}}]\label{thm:lk_sub}
    Any subordinator $X=(X_t)_{t\geq 0}$ on $\R_+\cup \{\infty\}$ can be identified by a triplet $(c,\nu,\gamma_0)$ where 
    $c,\gamma_0\geq 0$, and $\int_0^\infty \min\{s,1\}\,\nu(ds)<\infty$, such that for any $t,z\in\R_+$,
    \begin{align}
        \E e^{-z X_t} &= \exp\left(-t\left(c\ind{z>0}  +\gamma_0 z + \int_0^\infty (1-e^{-zs})\,\nu(ds)\right)\right).\label{eq:levy-khintchine-sub}
    \end{align}
     Conversely, given any $c,\gamma_0\geq 0$, 
     and measure $\nu$ such that $\int_0^\infty \min\{s,1\}\,\nu(ds)<\infty$, there is a corresponding subordinator satisfying (\ref{eq:levy-khintchine-sub}).
\end{theorem}

    The exponent in (\ref{eq:levy-khintchine-sub}) is called the \emph{Laplace exponent}, denoted as $G_X(z)=-\log \E e^{-zX_1}$.
    Subordinators are necessarily non-decreasing. The parameter $c$ is called the \emph{kill rate}, and only comes into play when it is possible for $X_t$ to reach $\infty$, effectively killing the process.
    Sato's text \cite{ken1999levy} only considers processes with $c=0$. 
    However, we do want to consider killed processes (those with $c>0$) 
    here because, as we will see later, popular sketches like \PCSA{} and \HyperLogLog{} are induced by killed processes. 
    There is a standard way to kill a process.
    
\begin{lemma}[How to kill a \Levy{} process; see~\cite{yakubovich2021simple}]
We are given a subordinator $X$ with Laplace exponent $G:\R_+\to \R_+$. 
Fix any $c>0$, let $Y$ be an independent 
$\Exp(c)$ random variable, 
and define a new process $(X'_t)_{t\geq 0}$ on $\R_+ \cup \{\infty\}$ 
as
\begin{align*}
    X'_t =\begin{cases}
        X_t, &t<Y,\\
        \infty, &t \geq Y.
    \end{cases}
\end{align*}
Then for $z\in \R$,
\begin{align*}
    \E e^{- z X'_t} =e^{-t(G(z)+c\ind{z>0})},
\end{align*}
where $e^{-z\infty}=\ind{z=0}$.
\end{lemma}

A non-negative \Levy{} process is non-decreasing and can therefore
be used to index 
\emph{time} 
within another \Levy{} process.  
This is called \emph{subordination}.  
For example, if $(X_t)_{t\geq 0}$ is a Wiener process/Brownian motion 
and $(Z_t)_{t\geq 0}$ is a Poisson process with rate 1, 
then $(X_{Z_t})_{t\geq 0}$ ($X$ subordinated by $Z$) is piecewise constant. 
It increments the ``clock'' in the Wiener process by 1 
at intervals that are independent $\Exp(1)$ random variables.

\begin{theorem}[Subordination {\cite[page~198]{ken1999levy}}]\label{thm:subordination}
    Let $X=(X_t)$ be a \Levy{} process on $\R^d$ and $Z=(Z_t)_{t\geq 0}$ be a 
    subordinator on $\R_+$, 
    with $c=0$. Then $(X_{Z_t})_{t\geq 0}$ is a \Levy{} process on $\R^d$ such that for any $t\geq 0$ and $z\in \R^d$,
    \begin{align*}
        \E e^{i\inner*{X_{Z_t},z}} &= \exp(-t G_Z(f_X(z))),
    \end{align*}
    where $f_X$ is the characteristic exponent of $X$ and $G_Z$ is the Laplace exponent of $Z$.
\end{theorem}

\section{\LevyTower{} and \LevyStable{}}\label{sec:levy_tower}

\subsection{\LevyTower{} Sketches}

We now present a sketch that is induced by a \emph{generic} \Levy{} process on $\R^d$, which consists of two parameters.
\begin{itemize}
    \item An accuracy parameter $m\in\Z_+$ which corresponds to the number of subsketches in classic settings.
    \item A \Levy{} process $X=(X_t)_{t\geq 0}$ on $\R^d$ with characteristic exponent $f(z)=-\log \E e^{i \inner*{ X_1,z}}$.
\end{itemize}

Recall that $\T$ is the complex unit circle.

\begin{definition}[$(f,m)$-\LevyTower]
Let $f$ be the characteristic exponent of a \Levy{} process $X$ on $\R^d$ and $m\in \Z_+$.
An abstract \emph{$(f,m)$-\LevyTower} is an infinite vector $S=(S_k^{(j)})_{k\in \Z,j\in [m]} \subset \T^{\Z}$, initialized as all zero. For any element $v\in[n]$ and $y\in \R^d$,
a vector update $\mathbf{x}(v)\gets \mathbf{x}(v)+y$ is effected by:
\begin{description}    
    \item[$\Update(v,y):$] For each $k\in \Z$ and $j\in[m]$, $S_k^{(j)}\gets S_k^{(j)} + \inner*{y,X_{2^{-k}}^{(v,j)}}\mod 2\pi$, where $X^{(v,j)}=(X^{(v,j)}_t)_{t\geq 0}$ is an i.i.d.~copy of the \Levy{} process $X$ with characteristic exponent $f$.
\end{description}
\end{definition}
\begin{remark}
If $|f(\mathbf{x})|$ is in the range $[1,\poly(n)]$, then
it suffices to store levels $k\in[0,O(\log n)]$. Therefore a \LevyTower{} takes $O(m\log n)$  words of space and each word stores a number in $[0,2\pi)$.
\end{remark}

Many sketches follow the three-step design template: 
\emph{subsample}, \emph{randomly project}, and \emph{sum}.  
The \LevyTower{} can be regarded as doing the 
subsample and projection steps at the same time, 
where the measurement time $2^{-k}$ is analogous 
to a subsampling probability.

We characterize the distribution of the sketch as follows.
\begin{figure}[!ht]
    \centering
    \begin{subfigure}{0.2\linewidth}
        \includegraphics[]{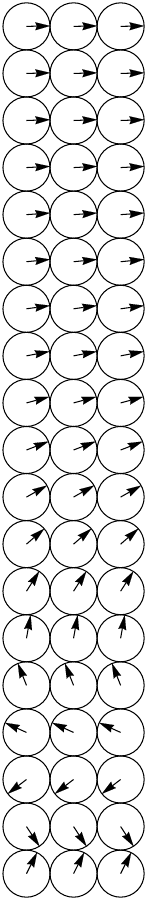}
    \end{subfigure}%
    \begin{subfigure}{0.2\linewidth}
        \includegraphics[]{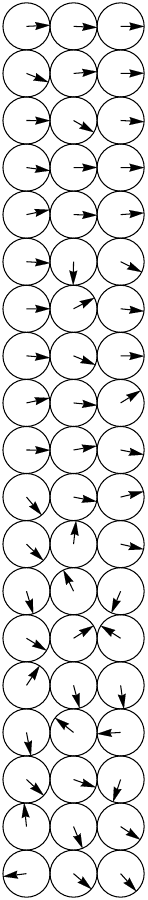}
    \end{subfigure}%
    \begin{subfigure}{0.2\linewidth}
        \includegraphics[]{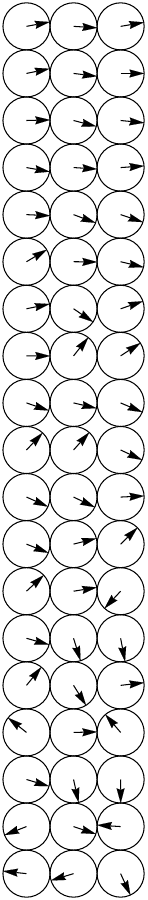}
    \end{subfigure}%
    \begin{subfigure}{0.2\linewidth}
        \includegraphics[]{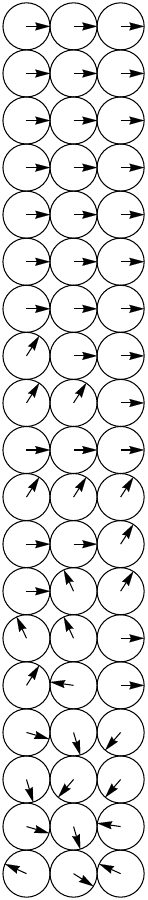}
    \end{subfigure}%
    \begin{subfigure}{0.2\linewidth}
        \includegraphics[]{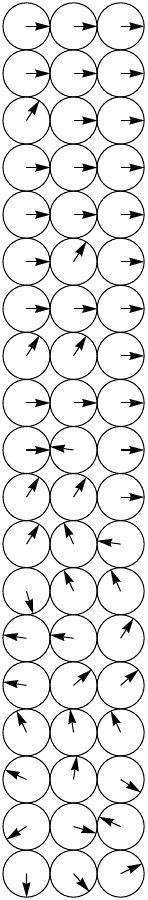}
    \end{subfigure}%
    \caption{\LevyTower{} with $m=3$ and $\mathbf{x}=(1,0,\ldots,0)$. From left to right: linear drift, the 1-stable Cauchy process, the 2-stable Wiener process/Brownian motion, the Poisson process with rate 1, and the Poisson process with rate 2. Different \Levy{} processes have different ``sensitivities'' for a target function-moment. For example, linear drift is only sensitive to the sum of the vector and insensitive to how the values are distributed. The Cauchy process is only sensitive to the $F_1$-moment, while the Wiener process/Brownian motion is only sensitive to the $F_2$-moment. Poisson processes are sensitive to the \emph{support size} of $\mathbf{x}$ and at the same time leak
    information about other $f$-moments.}
    \label{fig:levy_angular}
\end{figure}

\begin{lemma}\label{lem:Sk}
Fix a vector $\mathbf{x}=(\mathbf{x}(v))_{v\in [n]} \in (\mathbb{R}^d)^n$. For any $k\in \Z$, we have 
\begin{align*}
S_k &\sim \sum_{v\in[n]}\inner*{X_{2^{-k}}^{(v)},\mathbf{x}(v)}\\
\E e^{iS_k} &= e^{-2^{-k} f(\mathbf{x})}.
\end{align*}
\end{lemma}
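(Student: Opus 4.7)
The plan is to unroll the update rule, invoke independence across the universe indices, and apply the \LK{} representation (\cref{thm:lk}) one factor at a time. The key observation that organizes everything is that although each update appears to introduce a fresh random object $X^{(v,j)}$, the notation indexes by $(v,j)$, not by the update event; so all updates to the same index $v$ reuse the same sample path of the \Levy{} process. This makes the sketch bilinear in $(\mathbf{x}, X)$ modulo $2\pi$.

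First, I would fix a subsketch coordinate $j$ (the lemma's $S_k$ stands for a generic marginal, so suppressing $j$ is harmless) and prove by induction on the number of updates processed that, after the stream ends,
\[
S_k^{(j)} \;=\; \sum_{v\in[n]} \inner*{\mathbf{x}(v),\, X_{2^{-k}}^{(v,j)}} \pmod{2\pi}.
\]
The base case is the zero initialization. For the inductive step, an update $(v,y)$ adds $\inner*{y, X_{2^{-k}}^{(v,j)}}$ to the $v$-th summand and the bilinearity of $\inner*{\cdot,\cdot}$ together with the cumulative update $\mathbf{x}(v) \gets \mathbf{x}(v)+y$ preserves the invariant modulo $2\pi$.

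Second, I would handle the distributional identity $S_k \sim \sum_v \inner*{X_{2^{-k}}^{(v)},\mathbf{x}(v)}$ by noting this is literally the displayed formula above, read as an equality of random variables on the torus $\T \cong [0,2\pi)$ (the right-hand side, a priori $\R$-valued, is projected to $\T$). Since $e^{iS_k}$ depends on $S_k$ only through its class modulo $2\pi$, the wrapping is invisible to the characteristic-function calculation.

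Third, the characteristic-function claim follows by factorization and \cref{thm:lk}:
\begin{align*}
\E e^{iS_k}
&= \E \exp\!\left(i\sum_{v\in[n]} \inner*{\mathbf{x}(v), X_{2^{-k}}^{(v)}}\right) \\
&= \prod_{v\in[n]} \E \exp\!\left(i\inner*{\mathbf{x}(v), X_{2^{-k}}^{(v)}}\right) \\
&= \prod_{v\in[n]} \exp\!\left(-2^{-k}\, f(\mathbf{x}(v))\right) \\
&= \exp\!\left(-2^{-k}\, f(\mathbf{x})\right),
\end{align*}
where the factorization uses mutual independence of the processes $X^{(v)}$ across $v\in[n]$, the penultimate equality applies the \LK{} formula with $t=2^{-k}$ and $z=\mathbf{x}(v)$, and the last equality uses the definition $f(\mathbf{x})=\sum_v f(\mathbf{x}(v))$.

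No step is genuinely hard; the proof is essentially bookkeeping. The only conceptual care needed is the interaction between the $\mod 2\pi$ wrapping in the update rule and the distributional identity, which is resolved by regarding $S_k$ as a $\T$-valued random variable and observing that all downstream uses of the sketch go through $e^{iS_k}$, which is insensitive to the wrapping.
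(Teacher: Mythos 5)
Your proposal is correct and takes essentially the same route as the paper: the paper dismisses the first identity as ``trivially true since the sketch is linear'' and then proves the second via the same independence-factorization followed by the \LK{} formula with $t=2^{-k}$. Your extra care about the $\bmod\,2\pi$ wrapping and the explicit induction on updates are just fuller bookkeeping for what the paper treats as immediate.
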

\begin{proof}
    The first statement is trivially true since the sketch is linear. For the second, note that the $X^{(v)}$ are i.i.d., and we have
    \begin{align*}
        \E e^{iS_k} &= \E e^{i \sum_{v\in[n]}\inner*{X_{2^{-k}}^{(v)},\mathbf{x}(v)}} \\
        &= \prod_{v\in [n]} \E e^{i \inner*{X_{2^{-k}}^{(v)},\mathbf{x}(v)}} & \text{(by independence)}\\
        &= \prod_{v\in [n]} e^{-2^{-k}f(\mathbf{x}(v))} &\text{(by \LK)}\\
        &=e^{-2^{-k} f(\mathbf{x})}. &\text{(definition of $f(\mathbf{x})$)}
    \end{align*}
\end{proof}
Note that the $f$-moment lies exactly in the exponent of $\E e^{i S_k}$. We now formally present a method that can recover an accurate 
estimate of the $f$-moment from $S_k$, or more accurately, $m$ i.i.d.~copies $S_k^{(1)},\ldots,S_k^{(m)}$.

\subsection{Estimation of \LevyTower{}}
Recall from \cref{sec:f-moment-estimation-and-levy-processes} that
$C_t = \sum_{v\in[n]}\inner*{\mathbf{x}(v), X_t^{(v)}}$
is the linear projection measured at time $t$.
The estimator must pick a suitable $t$ and infer
an estimate of the $f$-moment from $C_t$.
In the \LevyTower{}, we store $S_k=C_{2^{-k}}$ for all 
$k\in \Z$. We first prove the concentration of the empirical mean at each level.

\begin{lemma}\label{lem:1}
For any $t\geq 0$ and $m$ i.i.d.~copies $C_t^{(1)},\ldots,C_t^{(m)}$,
\begin{align*}
    \pr\left(\left|\frac{1}{m}\sum_{j=1}^me^{iC_t^{(j)}}-e^{-tf(\mathbf{x})}\right|>\eta\right) \leq \frac{2t  |f(\mathbf{x})|}{m\eta^2}.
\end{align*}
\end{lemma}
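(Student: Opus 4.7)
My plan is to apply a complex-valued Chebyshev inequality to the i.i.d.\ sum $\frac{1}{m}\sum_{j=1}^m e^{iC_t^{(j)}}$, after bounding the per-sample variance in terms of $t\,|f(\mathbf{x})|$.

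First I would invoke the fact that for any complex random variable $Z$ with finite second moment, $\pr(|Z-\E Z|>\eta)\leq \var(Z)/\eta^2$, where $\var(Z)=\E|Z-\E Z|^2$. Averaging $m$ i.i.d.\ copies scales the variance by $1/m$, so it suffices to show $\var(e^{iC_t})\leq 2t\,|f(\mathbf{x})|$.

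Next I would compute this variance directly. Since $|e^{iC_t}|=1$, we have $\E|e^{iC_t}|^2=1$, and by \cref{lem:Sk} applied at the corresponding level (or directly from the analogous computation in \cref{sec:preview-f-moment}) we get $\E e^{iC_t}=e^{-t f(\mathbf{x})}$, hence
\begin{align*}
\var(e^{iC_t}) \;=\; 1 - \bigl|e^{-t f(\mathbf{x})}\bigr|^2 \;=\; 1 - e^{-2t\,\Re f(\mathbf{x})}.
\end{align*}
By \cref{lem:ch_exp} we have $\Re f(\mathbf{x})\geq 0$ (it is the sum of per-coordinate nonnegative quantities), so using the elementary bound $1-e^{-u}\leq u$ for $u\geq 0$ gives
\begin{align*}
\var(e^{iC_t}) \;\leq\; 2t\,\Re f(\mathbf{x}) \;\leq\; 2t\,|f(\mathbf{x})|.
\end{align*}
Combining this with Chebyshev applied to the empirical mean yields the claimed bound $2t|f(\mathbf{x})|/(m\eta^2)$.

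The proof is essentially routine; the only subtle point is making sure that the Chebyshev inequality is correctly stated for complex-valued random variables (which just reduces to applying the real Chebyshev inequality to $|\bar Z-\E\bar Z|$ viewed as a nonnegative real random variable via $\E|\bar Z-\E\bar Z|^2=\var(Z)/m$), and that the nonnegativity of $\Re f$ (needed to apply $1-e^{-u}\le u$) is justified by appealing to \cref{lem:ch_exp} rather than re-derived. No deeper obstacle is anticipated.
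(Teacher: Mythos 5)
Your proof is correct and follows essentially the same route as the paper: apply Chebyshev to the empirical mean, write $\var(e^{iC_t}) = 1 - e^{-2t\,\Re f(\mathbf{x})}$, and bound it by $2t\,\Re f(\mathbf{x}) \le 2t\,|f(\mathbf{x})|$ using $1-e^{-u}\le u$ and the nonnegativity of $\Re f$ from \cref{lem:ch_exp}. No discrepancies.
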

\begin{proof}    
By Chebyshev's inequality, we have
\begin{align*}
    \pr\left(\left|\frac{1}{m}\sum_{j=1}^me^{iC_t^{(j)}}-e^{-tf(\mathbf{x})}\right|>\eta\right) \leq \frac{\var e^{iC_t}}{m\eta^2}.
\end{align*}
Furthermore, since $\E e^{i C_t}=e^{-tf(\mathbf{x})}$, 
$|e^{iC_t}|=1$, and 
$|\E e^{iC_t}|^2 = e^{-2t\cdot \Re f(\mathbf{x})}$,
we have
\begin{align*}
    \var e^{iC_t} &= \E\left|e^{iC_t}\right|^2-\left|\E e^{iC_t}\right|^2 = 1-e^{-2t\cdot \Re f(\mathbf{x})}\leq 2t \cdot \Re f(\mathbf{x})\leq 2t |f(\mathbf{x})|.
\end{align*}
\end{proof}

\begin{lemma}\label{lem:2}
With probability at least $99/100$, for any $k\geq \log_2 |f(\mathbf{x})|$, we have
\begin{align*}
    \left|\frac{1}{m}\sum_{j=1}^me^{iS_k^{(j)}}-e^{-2^{-k}f(\mathbf{x})}\right|\leq O(1/\sqrt{ m}).
\end{align*}
\end{lemma}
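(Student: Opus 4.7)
\textbf{Proof proposal for Lemma~\ref{lem:2}.} The target statement is uniform in $k$ over an infinite range $k \geq \log_2 |f(\mathbf{x})|$, so the natural strategy is to apply Lemma~\ref{lem:1} at each relevant level with a fixed threshold $\eta = C/\sqrt{m}$ and then take a union bound, exploiting the geometric decay of $t = 2^{-k}$. The guiding observation is that the failure probability Lemma~\ref{lem:1} produces at level $k$ is proportional to $2^{-k}|f(\mathbf{x})|$, which, starting from $k_0 \bydef \lceil \log_2 |f(\mathbf{x})|\rceil$, forms a geometric series whose total sum is $O(1)$.

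Concretely, I would set $k_0 = \lceil \log_2 |f(\mathbf{x})|\rceil$ and fix a constant $C$ to be chosen. For every $k \geq k_0$, apply Lemma~\ref{lem:1} with $t = 2^{-k}$ and $\eta = C/\sqrt{m}$ to obtain
\begin{align*}
\pr\!\left(\left|\frac{1}{m}\sum_{j=1}^m e^{iS_k^{(j)}} - e^{-2^{-k} f(\mathbf{x})}\right| > \frac{C}{\sqrt{m}}\right) \;\leq\; \frac{2 \cdot 2^{-k} |f(\mathbf{x})|}{C^2}.
\end{align*}
Summing over $k \geq k_0$ and using the geometric bound $\sum_{k \geq k_0} 2^{-k} \leq 2 \cdot 2^{-k_0} \leq 2/|f(\mathbf{x})|$ (since $2^{k_0} \geq |f(\mathbf{x})|$), the total failure probability is at most $4/C^2$. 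Choosing $C$ to be a sufficiently large absolute constant (e.g., $C = 20$) makes this at most $1/100$, yielding the desired uniform bound.

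The only conceptual subtlety is handling the uniformity over an \emph{infinite} collection of levels; this is exactly where the geometric factor $2^{-k}|f(\mathbf{x})|$ from Lemma~\ref{lem:1} is crucial, because it ensures summability starting at $k_0$. Once the summability is in place, Chebyshev-level independence across levels is not needed --- the union bound suffices, since the individual levels' estimators are derived from the same \LevyTower{} but we only need marginal tail control at each $k$. I do not expect any technical obstacle beyond correctly identifying $k_0$ and bounding $\sum_{k \geq k_0} 2^{-k}|f(\mathbf{x})|$; the main thing to be careful about is the off-by-one in defining $k_0$ via $\lceil \cdot \rceil$ so that $2^{-k_0}|f(\mathbf{x})| \leq 1$, which is what feeds the geometric series into an absolute constant.
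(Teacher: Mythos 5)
Your proposal is correct and follows the same approach as the paper: apply Lemma~\ref{lem:1} level-by-level with $\eta = \Theta(1/\sqrt{m})$, union-bound over all $k \geq \log_2 |f(\mathbf{x})|$, and use the geometric decay of $2^{-k}|f(\mathbf{x})|$ starting at that threshold to bound the total failure probability by an absolute constant. Your version is just slightly more explicit about the choice of $k_0$ and the constant $C$, but no new ideas beyond the paper's argument.
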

\begin{proof}
By the union bound and \cref{lem:1},
\begin{align*}
   &\pr\left(\exists k\geq \log_2|f(\mathbf{x})|.\; \left|\frac{1}{m}\sum_{j=1}^me^{iS_k^{(j)}}-e^{-2^{-k}f(\mathbf{x})}\right|> \eta \right)\\
   &\leq \sum_{k=\log_2|f(\mathbf{x})|}^\infty \frac{2\cdot2^{-k} |f(\mathbf{x})|}{m\eta^2}\\
   &= O\left(\frac{1}{m\eta^2}\right).
\end{align*}
Thus, it suffices to choose $\eta = O(1/\sqrt{m})$ for the probability above to be at most $1/100$.
\end{proof}

We consider the complex logarithm defined on $\C$ except for the negative real line. 
\begin{lemma}\label{lem:1-e-z}
Let $z\in\C$ with $|z|\leq 1$ and $\Re(z)\geq 0$, then $|1-e^{-z}|\in (1/2,1]\cdot |z|$. 
\end{lemma}
\begin{lemma}\label{lem:3}
Let $x,y\in \{z\in \C:|1-z|\leq 0.98\}$, then $|\log x - \log y | < 50|x-y|$.
\end{lemma}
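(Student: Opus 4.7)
My plan is to use elementary complex analysis: the principal branch of $\log$ is holomorphic on the open disk $D = \{z \in \C : |1-z| < 1/2\}$ with derivative $1/z$, and on this disk $|1/z| < 2$. First I would check that $\log$ is indeed well-defined on $D$: every $z \in D$ satisfies $\Re z > 1/2 > 0$, so $D$ is contained in the right half-plane and in particular avoids the branch cut along the nonpositive real axis. Moreover $|z| > 1/2$ uniformly on $D$, giving $|1/z| < 2$.

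Next I would invoke the convexity of $D$. For any $x, y \in D$ the straight line segment $\gamma(t) = y + t(x - y)$, $t \in [0,1]$, lies entirely in $D$, so by the fundamental theorem of calculus applied along $\gamma$,
\[
\log x - \log y \;=\; \int_0^1 \frac{d}{dt}\log \gamma(t)\, dt \;=\; \int_0^1 \frac{x - y}{y + t(x - y)}\, dt.
\]
Taking moduli and using the uniform bound $|y + t(x-y)| > 1/2$,
\[
|\log x - \log y| \;\leq\; |x - y| \int_0^1 \frac{dt}{|y + t(x - y)|} \;<\; 2|x - y|,
\]
where the strict inequality uses that the image of $\gamma$ is a compact subset of the open disk $D$, so the continuous function $z \mapsto 1/|z|$ attains a maximum strictly less than $2$ on it (and the whole statement is of course only of interest for $x \neq y$).

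There is no real obstacle here; the lemma is a one-line consequence of the mean value/fundamental theorem of calculus for the holomorphic function $\log$. The only things to verify are the two geometric facts about $D$ that enable the calculation: convexity of $D$ (so the line segment stays in the domain of holomorphy), and the uniform lower bound $|z| > 1/2$ on $D$ (which yields the factor $2$). Both are immediate from the definition $|1 - z| < 1/2$.
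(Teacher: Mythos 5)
The paper states \cref{lem:3} without giving a proof, so there is nothing to compare against; your argument is the natural one, and it is correct. You check the two geometric facts that make it work—$D=\{z:|1-z|<1/2\}$ is convex and lies in $\{\Re z>1/2\}$, hence $|z|>1/2$ and $|1/z|<2$ throughout—then integrate $\frac{d}{dz}\log z = 1/z$ along the segment from $y$ to $x$ and bound the integrand. Your remark that the strict bound $<2$ on the compact segment gives the strict inequality is also the right way to handle that detail (and you correctly note that for $x=y$ the strict inequality is vacuous/uninteresting, which is really a minor imprecision in the lemma's statement rather than a gap in your proof). Nothing to fix.
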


\RestyleAlgo{ruled}
\begin{algorithm}[htbp]
  \SetAlgoLined\DontPrintSemicolon
  \textbf{Input:} A \LevyTower{} $(S_k^{(j)})_{k\in [K], j\in [m]}$ sketch of $\mathbf{x}\in (\R^d)^n$, 
  parameterized by \Levy{} process $X$ with characteristic exponent $f=f_X$.\istrut[2]{0}\\
  \textbf{Output:} An estimate $\widehat{f(\mathbf{x})}$ such that 
  if $\log_2|f(\mathbf{x})|<K$ and $m=\Omega(1)$,
  then with probability 99/100, $|\widehat{f(\mathbf{x})}-f(\mathbf{x})|=O(|f(\mathbf{x})|/\sqrt{m})$.\istrut[2]{0}\\
  \For{$W$ from $K$ down to 1}{
        $Y_W =\frac{1}{m}\sum_{j=1}^m e^{iS_W^{(j)}}$\\
        \If{$|1-Y_W|\geq 0.12$}{
            \Return{$\widehat{f(\mathbf{x})} = -2^W\log Y_W$}
        }
  }
  \Return{$-2\log Y_1$} \tcp*{\textsc{fail} outcome}
  \caption{$f$\textsf{-Moment-Estimation}\label{alg:f-moment-estimator}}
\end{algorithm}

\begin{theorem}
Consider a \LevyTower{} sketch $(S_k^{(j)})$ of a vector
$\mathbf{x}\in (\R^d)^n$,
where $k\in [K], j\in [m]$, and $K>\log_2|f(\mathbf{x})|$.
With probability at least 99/100, 
$f$\textsf{-Moment-Estimation} 
(\cref{alg:f-moment-estimator})
returns an estimate
$\widehat{f(\mathbf{x})}$ such that $|\widehat{f(\mathbf{x})}-f(\mathbf{x})|\leq O(|f(\mathbf{x})|/\sqrt{m})$. 
\end{theorem}

\begin{proof}
We analyze the behavior of the estimator conditioned on event $\mathcal{E}$ holding. \cref{lem:2} states that $\pr(\mathcal{E})\geq 99/100$.
\begin{align}\label{eqn:eventE}
\text{$\mathcal{E}$ : For all $k\in [\log_2|f(\mathbf{x})|,K]$,}\quad \left|Y_k-e^{-2^{-k}f(\mathbf{x})}\right|\leq O(1/\sqrt{m}).
\end{align}
For $k\in [\log_2|f(\mathbf{x})|,K]$, we have $|2^{-k}f(\mathbf{x})|\leq 1$ and by \cref{lem:ch_exp} we know $\Re(f(\mathbf{x}))\geq 0$. Thus we can
bound the distance from $Y_k$ to 1 both from below and above, as follows.
\begin{align}
    |1-Y_k| &\in \left|1-e^{-2^{-k}f(\mathbf{x})}\right| \pm \left|e^{-2^{-k}f(\mathbf{x})}-Y_k\right|  & \text{triangle inequality}\nonumber\\
    &\subset \left|1-e^{-2^{-k}f(\mathbf{x})}\right| \pm O(1/\sqrt{m}) & \text{\cref{lem:2}}\nonumber\\
    &\subset (1/2,1]\left|2^{-k}f(\mathbf{x})\right| \pm O(1/\sqrt{m}). & \text{\cref{lem:1-e-z}}\label{eq:1-y}
\end{align}
Let $m$ be large enough such that the 
$O(1/\sqrt{m})$ term is at most $0.1$.
When the estimate is returned, 
$W$ is the largest index such
that $|1-Y_W|\geq 0.12$.
It follows that
$W \geq \ceil{\log_2|f(\mathbf{x})|}$ 
since by \cref{eq:1-y}, 
\[
|1-Y_{\ceil{\log_2|f(\mathbf{x})|}}| 
\in (1/2,1]\left|2^{-{\ceil{\log_2|f(\mathbf{x})|}}}f(\mathbf{x})\right| \pm O(1/\sqrt{m})
> 1/4 - 0.1 = 0.15,
\]
implying the algorithm 
halted with $W\geq \ceil{\log_2|f(\mathbf{x})|}$. 
We further claim that 
$W - \ceil{\log_2|f(\mathbf{x})|} = O(1)$
and specifically that
\begin{align}
    2^{-W}|f(\mathbf{x})| &\in [0.02,0.88]. \label{eq:2-w}
\end{align}
If it were the case that $2^{-W}|f(\mathbf{x})| < 0.02$ then
\cref{eq:1-y} implies that 
\[
|1-Y_W|< 0.02+0.1 =  0.12, 
\]
which contradicts the choice of $W$. 
On the other hand, 
$2^{-W}|f(\mathbf{x})|$ cannot be bigger than 0.88 for otherwise one would 
have $2^{-(W+1)}|f(\mathbf{x})|\geq 0.44$ and \cref{eq:1-y} implies 
\[
|1-Y_{W+1}| > \frac{1}{2}\left|2^{-(W+1)}f(\mathbf{x})\right| > 0.22-0.1=0.12,
\]
also contradicting the choice of $W$. 
By \cref{lem:1-e-z} and \cref{eq:2-w} we have 
\[
\left|1-e^{-2^{-W}f(\mathbf{x})}\right|\leq \left|2^{-W}f(\mathbf{x})\right| \leq 0.88,
\]
and by \cref{eq:1-y} 
we have $|1-Y_W|\leq 0.88+0.1= 0.98$. 
Applying \cref{lem:2,lem:3}, we have 
\begin{align*}
    \left|\log Y_W - \left(-2^{-W}f(\mathbf{x})\right)\right| 
    &\leq 50\left|Y_W-e^{-2^{-W}f(\mathbf{x})}\right| & \text{\cref{lem:3}, $x=Y_W, y=e^{-2^{-W}f(\mathbf{x})}$,}\\
    &= O(1/\sqrt{m}) & \text{Event $\mathcal{E}$ (\cref{lem:2}).}
\end{align*}
Multiplying by 
$2^W = \Theta(|f(\mathbf{x})|)$ 
we conclude that
$\widehat{f(\mathbf{x})}$ achieves the desired approximation, conditioned on event $\mathcal{E}$.
\begin{align*}
    \left|\widehat{f(\mathbf{x})} - f(\mathbf{x})\right|
    =
    \left|-2^W\log Y_W - f(\mathbf{x})\right|
    \leq 
    O(2^W/\sqrt{m})
    =
    O(|f(\mathbf{x})|/\sqrt{m}).
\end{align*}
\end{proof}

\subsection{\LevyStable{} Sketches}\label{sec:alpha-levy-stable}

Recall that a $d$-dimensional L\'evy process $X$ is \emph{$\alpha$-stable} if for any 
$t\geq 0$, $X_t\sim t^{1/\alpha}X_1$. We call the characteristic exponents of stable processes \emph{stable moments}. 
Stable moments are of special interest in the context of streaming sketches since there is no need to store the whole tower; only $m$ 
i.i.d.~samples suffice to return an estimate with relative error $O(1/\sqrt{m})$.\footnote{The \LevyTower{}s induced by stable processes are \emph{self-similar}: all registers in the tower are identically distributed with a proper normalization.} 
 The only symmetric one-dimensional stable processes are $\alpha$-stable random processes, which correspond to Indyk's sketches \cite{indyk2006stable}. On the other hand, there is a rich class of higher dimensional stable processes, the one implicit in Ganguly et al.~\cite{ganguly2012estimating} for estimating
$F_{p,q}$ hybrid moments being just a single special case. 

\begin{theorem}[\LK{} for symmetric stable processes {\cite[page~86]{ken1999levy}}]
Let $X$ be a L\'evy process on $\R^d$ and $x\in\R^d$. 
$X$ is $2$-stable if and only if $\E e^{i\inner{x,X_t}}=e^{-\frac{1}{2}t \inner{x,Ax}}$ for some covariance matrix $A$. If $X$ is symmetric, then it is $\alpha$-stable for $\alpha\in(0,2)$ if and only if there is a finite, positive, symmetric measure $\mu$ on the sphere $\mathbb{S}_{d-1}=\{x\in\R^d:|x|=1\}$ such that
\begin{align*}
    \E e^{i\inner{x,X_t}} =\exp\left(-t\int_{\mathbb{S}_{d-1}}|\inner{x,\xi}|^{\alpha}\,\mu(d\xi)\right).
\end{align*}
\end{theorem}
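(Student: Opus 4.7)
The plan is to apply the general \LK{} representation (\cref{thm:lk}) to the characteristic exponent $f$ of $X$, then use the stability hypothesis to constrain the triplet $(A,\nu,\gamma)$. Stability $X_t\sim t^{1/\alpha}X_1$ translates, via characteristic functions, into the scaling identity $f(cz)=c^\alpha f(z)$ for every $c>0$ and $z\in\R^d$, so $f$ is positively $\alpha$-homogeneous. Rewriting $f(cz)$ by \cref{thm:lk} and changing variables $s\mapsto cs$ inside the jump integral shows that $f(cz)$ admits the triplet $(c^2 A,\nu_c,\gamma')$, where $\nu_c$ is the pushforward $\nu_c(B)=\nu(B/c)$ and $\gamma'$ absorbs the mismatch between the cutoffs $\ind{|s|<c}$ and $\ind{|s|<1}$. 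Matching against the triplet $(c^\alpha A,c^\alpha\nu,c^\alpha\gamma)$ of $c^\alpha f(z)$ via the uniqueness part of \cref{thm:lk} yields the rigidity $\nu(cB)=c^{-\alpha}\nu(B)$ along with scaling constraints on $A$ and $\gamma$.

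First I would handle $\alpha=2$. The drift contributes a $1$-homogeneous term $-i\inner{\gamma,z}$ that cannot match the required $c^2$-scaling unless $\gamma=0$. For the jump part, the rigidity $\nu(cB)=c^{-2}\nu(B)$ implies $\nu(\{s:|s|\in(2^{-k},2^{-k+1}]\})=4^k\,\nu(\{s:|s|\in(1/2,1]\})$, so $\int_{|s|<1}|s|^2\,\nu(ds)\geq \sum_k 4^{-k}\cdot 4^k\,\nu(\{s:|s|\in(1/2,1]\})$ diverges unless $\nu\equiv 0$. Only the Gaussian term $\tfrac12\inner{z,Az}$ survives, and the $\alpha=2$ direction is complete.

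For $\alpha\in(0,2)$ with $X$ symmetric, \cref{lem:ch_exp} combined with symmetry forces $f$ to be real-valued, eliminating the drift and letting me write $f(z)=\tfrac12\inner{z,Az}+\int(1-\cos\inner{z,s})\,\nu(ds)$ with $\nu$ symmetric. The same uniqueness argument yields $\nu(cB)=c^{-\alpha}\nu(B)$, and since the Gaussian term is $2$-homogeneous but $\alpha<2$, matching forces $A=0$. Disintegrating $\nu$ in polar coordinates $s=r\xi$, $(r,\xi)\in(0,\infty)\times\mathbb{S}_{d-1}$, the scaling relation is equivalent to $\nu(dr\,d\xi)=r^{-1-\alpha}\,dr\,\mu(d\xi)$ for some finite symmetric measure $\mu$; the condition $\int\min(|s|^2,1)\,\nu(ds)<\infty$ then reduces exactly to finiteness of $\mu$ when $\alpha\in(0,2)$.

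Substituting the polar form and evaluating the one-dimensional identity $\int_0^\infty(1-\cos(ru))\,r^{-1-\alpha}\,dr=c_\alpha\,|u|^\alpha$ yields $f(z)=c_\alpha\int_{\mathbb{S}_{d-1}}|\inner{z,\xi}|^\alpha\,\mu(d\xi)$; absorbing $c_\alpha$ into $\mu$ gives the stated formula, and both converses follow by direct verification via \cref{thm:lk}. The hardest step I anticipate is the polar disintegration: extracting the product form $\nu(dr\,d\xi)=r^{-1-\alpha}\,dr\,\mu(d\xi)$ from the abstract rigidity $\nu(cB)=c^{-\alpha}\nu(B)$ requires a careful measure-theoretic argument (essentially constructing $\mu$ as a radial cross-section on an annulus and extending by scaling), and in both cases one must track how the $\ind{|s|<1}$ cutoff in the \LK{} compensator shifts under rescaling so that the drift components of the two triplets can be compared cleanly after absorbing the cutoff difference.
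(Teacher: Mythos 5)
The paper does not prove this theorem; it is stated as a direct citation to Sato's text (page 86 of \cite{ken1999levy}), so there is no in-paper proof to compare against. Your proposal is, however, a correct sketch of the standard argument — essentially the one in Sato — and the steps hang together.

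A few small checks for your own confidence. The scaling identity $f(cz)=c^\alpha f(z)$ does follow from $X_t\sim t^{1/\alpha}X_1$ by writing $e^{-tf(z)}=\E e^{i\inner{z,t^{1/\alpha}X_1}}=e^{-f(t^{1/\alpha}z)}$ and setting $c=t^{1/\alpha}$. The divergence argument for $\alpha=2$ is right: if $\nu$ gave positive mass $a$ to the annulus $\{1/2<|s|\le 1\}$, the rigidity $\nu(cB)=c^{-2}\nu(B)$ would force mass $4^{k-1}a$ on $\{2^{-k}<|s|\le 2^{-k+1}\}$ and hence $\int_{|s|<1}|s|^2\,\nu(ds)=\infty$, contradicting the \LK{} integrability condition; so $\nu=0$, and then $c\gamma=c^2\gamma$ kills $\gamma$. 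For $\alpha\in(0,2)$ you are also right that symmetry makes $f$ real (by \cref{lem:ch_exp} plus $f(-x)=f(x)$), so the drift vanishes and the compensator $i\inner{z,s}\ind{|s|<1}$ cancels against the symmetric measure, leaving $\int(1-\cos\inner{z,s})\,\nu(ds)$; homogeneity then forces $A=0$. Your claim that $\int\min(|s|^2,1)\,\nu(ds)<\infty$ reduces exactly to $\mu(\mathbb{S}_{d-1})<\infty$ is verified by $\int_0^\infty\min(r^2,1)r^{-1-\alpha}\,dr=\tfrac{1}{2-\alpha}+\tfrac{1}{\alpha}$, finite precisely on $\alpha\in(0,2)$. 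The one remaining load-bearing step you flag — extracting the product form $\nu(dr\,d\xi)=r^{-1-\alpha}\,dr\,\mu(d\xi)$ from the abstract scaling rigidity — is indeed where the genuine measure-theoretic work lives; one sets $\mu(E)\propto\nu(\{r\xi:r>1,\xi\in E\})$ and uses scaling to pin down the radial factor, which is exactly the construction Sato carries out. So your sketch is sound, but the paper itself offers no proof to compare it with.
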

\begin{remark}
For simplicity we only consider symmetric processes here when $\alpha\in(0,2)$ so that the characteristic exponent $f_X(x_1,\ldots,x_d)=f_X(|x_1|,\ldots,|x_d|)$ for any $x\in \R^d$. See \cite{ken1999levy} for the full characterization of (not necessarily symmetric) stable processes over $\R^d$.
\end{remark}

It turns out that \underline{any} multidimensional $\alpha$-stable moment is as simple to estimate as the one-dimensional $F_\alpha$-moment, since it is possible to maintain a random variable distributed as $f(\mathbf{x})^{1/\alpha}Y_\alpha$ where $Y_\alpha$ is a unit $\alpha$-stable random variable. We now give the formal definition of the \LevyStable{} sketch together with the proof of the emulation theorem.

\begin{definition}[$(f,m)$-\LevyStable]
Let $f$ be the characteristic exponent of an $\alpha$-stable L\'evy process $X$ on $\R^d$ and $m\in \Z_+$. An 
\emph{$(f,m)$-\LevyStable} sketch
is a vector of $m$ registers $T=(T_1,\ldots,T_m)\in \R^m$, initialized as all zero. For any element $v\in[n]$ and $y\in \R^d$, the vector update $\mathbf{x}(v)\gets \mathbf{x}(v)+y$ is effected by:
\begin{description}    
    \item[$\Update(v,y):$] For each $k\in [m]$, $T_k\gets T_k + \inner*{y,X_{k}^{(v)}-X_{k-1}^{(v)}}$, where $X^{(v)}=(X^{(v)}_t)_{t\geq 0}$ is an independent copy of the \Levy{} process $X$ with characteristic exponent $f$, indexed by $v$.
\end{description}
\end{definition}
\begin{proof}[Proof of \cref{them:simulation:stable}]
Since $X$ is L\'evy, $X_1,X_2-X_1,\ldots, X_{m}-X_{m-1}$ are i.i.d.~with the same distribution as $X_1$, 
so it suffices to look at the first register $T_1 =  \sum_{v\in[n]}\inner*{\mathbf{x}(v),X^{(v)}_1}$. Recall that for a stream with frequency vector $\mathbf{x}'$, Indyk \cite{indyk2006stable} stores $I_1=\sum_{v\in[n]}\mathbf{x}'(v)Y^{(v)}_\alpha$ where $Y_\alpha$ is a unit $\alpha$-stable random variable. By the properties of stable random variables, we have
\begin{align*}
    I_1 =\sum_{v\in[n]}\mathbf{x}'(v)Y^{(v)}_\alpha\sim \left( \sum_{v\in[n]}|\mathbf{x}'(v)|^\alpha \right)^{1/\alpha}Y_\alpha
\end{align*}
To show the simulation theorem, we need to prove
\begin{align*}
     T_1=\sum_{v\in[n]}\inner*{\mathbf{x}(v),X^{(v)}_1} \sim \left(\sum_{v\in[n]}f(\mathbf{x}(v))\right)^{1/\alpha}Y_\alpha,
\end{align*}
so that $T_1\sim I_1$ if the $f$-moment of $\mathbf{x}$ is equal to the $F_\alpha$-moment $\mathbf{x}'$. We compute the characteristic function
    \begin{align*}
\E e^{iz T_1} &=
    \E e^{i z \sum_{v\in[n]}\inner*{\mathbf{x}(v),X^{(v)}_1}} = \prod_{v\in[n]}\E e^{i \inner*{\mathbf{x}(v),z X^{(v)}_1}} & \text{(by independence)}\intertext{Since $X^{(v)}$ is $\alpha$-stable, we have $zX_1^{(v)}\sim X_{z^{\alpha}}^{(v)}$}
    &= \prod_{v\in[n]}\E e^{i \inner*{\mathbf{x}(v), X^{(v)}_{z^{\alpha}}}}\\
    &= \prod_{v\in[n]}e^{-z^{\alpha} f(\mathbf{x}(v))} & \text{(by \LK)}\\
    &= e^{-z^{\alpha} f(\mathbf{x})}=e^{-(z f(\mathbf{x})^{1/\alpha})^{\alpha}}\\
    &= \E e^{izf(\mathbf{x})^{1/\alpha} Y_\alpha}.&\text{(by \LK)}
\end{align*}
Thus $T_1$ has the same characteristic function with $f(\mathbf{x})^{1/\alpha}Y_\alpha$, which implies $T_1\sim f(\mathbf{x})^{1/\alpha}Y_\alpha$.
\end{proof}
\begin{remark}
We remark here on the differences between the \LevyTower{} sketch and the $\LevyStable{}$ sketch.
\begin{itemize}
    \item \LevyTower{} estimates a generic function moment $f_X$ where $X$ is the characteristic exponent of \emph{any} L\'evy process. \LevyTower{} stores projections at multiple times of the process and each projection can be taken modulo $2\pi$. 
    \item \LevyStable{} estimates a function moment $f_X$ where $X$ is the characteristic exponent of a \emph{stable} L\'evy process. \LevyStable{} stores projections at unit time of the process and each projection is in $\R$. \LevyStable{} generalizes the stable sketches of Indyk~\cite{indyk2006stable} and Ganguly, Bansal, and Dube~\cite{ganguly2012estimating}. 
\end{itemize}
    
\end{remark}

\section{The \LevyMinSampler}\label{sec:levy_min}

The main takeaway message from 
\cref{sec:levy_tower} is that in the $\R^d$-turnstile model, there is a uniform way to solve $f$-moment estimation whenever $f$ is the 
characteristic exponent of a \Levy{} process.
The thesis of this section is that 
non-negative one-dimensional \Levy{} processes (subordinators) play a similar role in $G$-sampling 
and $G$-moment estimation in \emph{incremental} streams.

\subsection{The Raw \LevyMinSampler}\label{sec:raw_levy_min}
Recall that a generic min sketch (\cref{def:generic_min}) is parameterized by a hash function $H:[n]\times \R_+ \times \Z_+ \to \R_+$. The sketch stores a pair $(v_*,h_*)$ initialized as $(\perp,\infty)$ and then reads the stream input at time $k=1,2,3,\ldots$
\begin{description}
    \item[$\Update(v,\Delta)$ (issued at time $k$):] if $H(v,\Delta,k)<h_*$, then set $(v_*,h_*)\gets (v,H(v,\Delta,k))$. 
\end{description}
Based on the discussion in \cref{sec:g-sampler-g-moment-estimation}, we now consider hash functions induced by subordinators.

\begin{theorem}[Raw \LevyMinSampler]\label{thm:raw_min}
For $u\in [n], \Delta\in\R_+$, and $k\in \Z_+$, define
\begin{align*}
    H(u,\Delta,k)=\inf\left\{t\geq 0 : X^{(u)}_t > Z_{k}/\Delta\right\},
\end{align*}
where  the $X^{(u)}=(X^{(u)}_t)_{t\geq 0}$ are i.i.d.~subordinators with Laplace exponent $G:\R_+\to \R_+$ and the $Z_{k}\sim \Exp(1)$ 
are i.i.d.~standard exponential random variables. Then the min sketch induced by $H$ is a $G$-sampler. I.e., the resulting state $(v_*,h_*)$ satisfies the following.
\begin{itemize}
    \item For any $u\in[n]$, $\pr(v_*=u)=G(\mathbf{x}(u))/G(\mathbf{x})$.
    \item $h_*\sim \Exp(G(\mathbf{x}))$.
\end{itemize}    
\end{theorem}

The proof of \cref{thm:raw_min} 
makes use of the following standard properties of 
exponential random variables.

\begin{lemma}[Properties of exponential random variables]\label{lem:min}
Let $Y_1,\ldots,Y_k$ be independent random variables where $Y_j\sim \Exp(\lambda_j)$ for $j\in[k]$. Then
\begin{enumerate}
\item $\min \{Y_1,\ldots,Y_k\} \sim \Exp(\lambda_1+\cdots+\lambda_k)$.

\item $\pr\left(Y_u<\min_{j\neq u}Y_j\right) = \frac{\lambda_u}{\sum_{j\in[k]}\lambda_j}$.
\end{enumerate}
\end{lemma}

\begin{proof}[Proof of \cref{thm:raw_min}]
Suppose element $u$ is updated at times $(k_j)$ with increments $(\Delta_j)$, where $\sum_j \Delta_j = \mathbf{x}(u)$.
Observe that for any $z\in \R_+$,
\begin{align*}
    \pr\left(\min_{j}\{H(u,\Delta_j,k_j)\}\geq z\right) &= \pr\left(\inf\left\{t\geq 0 : X^{(u)}_t > \min_{j}\{Z_{k_j}/\Delta_j\}\right\}\geq z\right)\\
    &= \pr\left(X^{(u)}_z \leq \min_{j}\{Z_{k_j}/\Delta_j\}\right).
    \intertext{Since the $Z_{k_j}$ are i.i.d.~$\Exp(1)$ random variables, $\min_{j}\left\{Z_{k_j}/\Delta\right\}\sim\Exp\left(\sum_j \Delta_j\right) = \Exp(\mathbf{x}(u))$.  Continuing,}    
    &= \pr(X^{(u)}_z \leq \Exp(\mathbf{x}(u)))\\
    &= \E \pr \left({X^{(u)}_z \leq \Exp(\mathbf{x}(u))} \;\middle|\; X^{(u)}_z\right)\\
    &= \E e^{- \mathbf{x}(u)X^{(u)}_z},
    \intertext{and since $X^{(u)}$ has Laplace exponent $G$, this is equal to}
    &= e^{- z G(\mathbf{x}(u))}.
\end{align*}
By the CDF of the exponential distribution,
\begin{align*}
    \min_{j}\{H(u,\Delta_j,k_j)\} &\sim \Exp(G(\mathbf{x}(u))).
\end{align*}
\end{proof}

\subsection{\StableMinSampler}
When $G(x)=x^{\alpha}$, $\alpha\in(0,1)$, we can obtain a much simpler emulation.
Recall that a \emph{standard} one-sided $\alpha$-stable random variable $W$ is one for which $\E e^{-zW}=e^{-z^\alpha}$.
\begin{theorem}[\StableMinSampler]\label{thm:stable-min}
Let $\alpha\in(0,1)$.
For $u\in [n]$ and $k\in \Z_+$, define
\begin{align*}
    H(u,\Delta,k)= \left(\frac{Z_k}{\Delta W^{(u)}}\right)^{\alpha},  
\end{align*}
where  $W^{(u)}$s are i.i.d.~standard 
one-sided $\alpha$-stable random variables and the $Z_{k}\sim \Exp(1)$ are i.i.d.~standard exponential random variables. Then the min sketch induced by $H$ is an $F_\alpha$-sampler. I.e., the resulting state $(v_*,h_*)$ satisfies the following.
\begin{itemize}
    \item For any $u\in[n]$, $\pr(v_*=u)=\mathbf{x}(u)^\alpha/\sum_{v\in [n]}\mathbf{x}(v)^\alpha$.
    \item $h_*\sim \Exp\left(\sum_{v\in [n]}\mathbf{x}(v)^\alpha\right)$.
\end{itemize}    
\end{theorem}
\begin{proof}Note that
    \begin{align*}
    \pr\left(\left(\frac{\Exp(\mathbf{x}(u))}{W^{(u)}}\right)^{\alpha}\geq z\right)
    &=\pr\left(\Exp(\mathbf{x}(u))\geq z^{1/\alpha} W^{(u)}\right)\\
    &=\E\left(\pr(\Exp(\mathbf{x}(u)) \geq z^{1/\alpha} W^{(u)}) \;\middle|\; W^{(u)}\right)\\
    &=\E e^{-z^{1/\alpha} \mathbf{x}(u) W^{(u)}}\\
    &=e^{-z \mathbf{x}(u)^\alpha}.
    \end{align*}
\end{proof}

\subsection{\LevyMinSampler}
\label{sec:levy_min_sampler}

For a generic weight function $G$, the raw \LevyMinSampler{} can be simplified since we only need $\min_{j}\{H(u,\Delta_j,k_j)\}$ to be distributed as 
$\Exp(G(\mathbf{x}(u)))$, 
which can be simulated through \emph{level functions}. 
Level functions are similar to the \emph{score} functions 
used by Cohen~\cite{cohen2018stream}, which map 
the indexed hash values and fresh randomness into one number. 
We now construct level functions from subordinators.

\begin{definition}[level function induced by $G$]\label{def:induced_level}
A \emph{level function} $\ell$ is a map from $(0,\infty)\times (0,1)$ to $[0,\infty]$ such that $\ell(x,y)$ is increasing in both $x$ and $y$. 
    Given a subordinator $X=(X_t)_{t\geq 0}$ with Laplace exponent $G:\R_+\to \R_+$, 
    the \emph{induced level function} 
    is defined to be
    \begin{align*}
        \ell_G (x,y) &= \inf\{z:\pr(X_z\geq x)\geq y\},
    \end{align*}
    for any $x,y\in \R_+$.
\end{definition}

The final \LevyMinSampler{} using level functions is presented 
in \cref{alg:G-sampler}
and \cref{thm:levy-min-sampler}.

\begin{theorem}[\LevyMinSampler]\label{thm:levy-min-sampler}
Let $H:[n]\to [0,1]$ be a uniformly random hash function. Let $G:\R_+\to \R_+$ be the Laplace exponent of any subordinator. The \LevyMinSampler{} is a pair $(v_*,h_*)$ initialized as $(\perp,\infty)$. 
A vector update $\mathbf{x}(v) \gets \mathbf{x}(v)+\Delta$ 
at time $k$ is effected by:
\begin{description}
    \item[$\Update(v,\Delta)$ (issued at time $k$):] If $\ell_G\left(\frac{Z_k}{\Delta},H(v)\right)<h_*$, 
    then set $(v_*,h_*)\gets \left(v,\ell_G\left(\frac{Z_k}{\Delta},H(v)\right)\right)$,
\end{description}
where the $Z_k\sim \Exp(1)$ are i.i.d.~standard 
exponential random variables. 
When the input vector is $\mathbf{x}\in \R_+^n$, 
the resulting sampler state satisfies the following.
\begin{itemize}
    \item For any $u\in[n]$, $\pr(v_*=u)=G(\mathbf{x}(u))/G(\mathbf{x})$.
    \item $h_*\sim \Exp(G(\mathbf{x}))$.
\end{itemize}    
\end{theorem}

\RestyleAlgo{ruled}
\begin{algorithm}[htbp]
  \SetAlgoLined\DontPrintSemicolon
  \SetKwFunction{algo}{algo}\SetKwFunction{proc}{proc}
  \SetKwFunction{activate}{Activate}
  \SetKwProg{update}{Update}{}{}
  \SetKwProg{sample}{Sample}{}{}
  \SetKwInOut{sketch}{Sketch}
  \SetKwInOut{hash}{Hash function}
  \sketch{$(v_*,h_*)$, initialized as $(\perp,\infty)$}
  \hash{$H:[n]\to \Uniform(0,1)$}
  \KwResult{Sample an element $u$ with prob.~$G(\mathbf{x}(u))/\sum_{v\in[n]}G(\mathbf{x}(v))$}
  
  \tcp{upon update $\mathbf{x}(v)\gets \mathbf{x}(v)+\Delta$}
  \update{$(v,\Delta)$}{
   $h\gets \ell_G(\Exp(\Delta), H(v))$ \tcp*{$\Exp(\Delta)$ is freshly sampled}
   \If{$h<h_*$}{
   $(v_*,h_*)\gets (v,h)$
   }
  }{}
  \tcp{upon sample}
  \sample{$(\,)$}{
  \Return{$v_*$}
  }
  \caption{\LevyMinSampler{}. $\ell_G$ is the \emph{level function} for the weight function $G$ defined in \cref{def:induced_level}.}\label{alg:G-sampler}
\end{algorithm} 
\begin{proof}
It suffices to prove that whenever 
$Z\sim \Exp(\lambda)$ and 
$Y\sim \Uniform(0,1)$, 
that
$\ell_G(Z,Y)\sim \Exp(G(\lambda))$. 
Indeed, let $X=(X_t)_{t\geq 0}$ be the subordinator with Laplace exponent $G$.  We have, for any $w>0$,
\begin{align*}
    \pr(\ell_G(Z,Y)\geq w) &= \pr(\inf\{z:\pr(X_z\geq Z)\geq Y\}\geq w) & \text{(by definition of $\ell_G$)}\\
        &=\pr(\pr(X_w\geq Z)< Y) & \text{($(X_t)$ non-decreasing)}\\
        &= \pr(X_w < Z) & \text{($Y\in \Uniform(0,1)$)}\\
        &= \E (\pr(X_w< Z) \mid X_w) & \text{(law of total probability)}\\
        &= \E (e^{-\lambda X_w} \mid X_w)\\
        &= \E e^{-\lambda X_w}
        \:=\: e^{-w G(\lambda)}.
\end{align*}
By the CDF of the exponential distribution, $\ell_G(Z,Y)\sim \Exp(G(\lambda))$.
\end{proof}

We demonstrate by examples how the \LevyMinSampler{} generalizes previous sketches and leads to new sketches. First, recall that by \LK{} (\cref{thm:lk_sub}), a function $G:\R_+\to\R_+$ is the Laplace exponent of some subordinator if and only if it is generated by a triplet $(c,\nu,\gamma_0)$ where $c,\gamma_0\geq 0$, 
and $\int_0^\infty \min(s,1)\,\nu(ds)<\infty$. 
\begin{align*}
    G(x) &= c\ind{x>0} +\gamma_0 x + \int_0^\infty (1-e^{-xs})\,\nu(ds).
\end{align*}
     
\begin{example}[$F_0$-sampler $\mapsto$ the 
$\textsf{Min}$ sketch~\cite{cohen1997size}]\label{exa:f0_sampler}
For $F_0$-sampling, we have $G(x)=\ind{x>0}$.
which is generated by $(c,\nu,\gamma_0)=(1,0,0)$.
This corresponds to a ``pure-killed'' process $X$, where, given a kill time $Y\sim\Exp(1)$, $X_t = 0$ if $t<Y$ and $X_t=\infty$ if $t\geq Y$. The induced level function (\cref{def:induced_level}) is,
\begin{align*}
    \ell(x,y)&=\inf\{z:\pr(X_z\geq x)\geq y\}
    \intertext{Since $x>0$, $\pr(X_z\geq x)=\pr(z\geq Y)=1-e^{-z}$. Regardless of $x$, this is equal to}
    &=\inf\{z:1-e^{-z}\geq y\}\\
    &=-\log (1-y).
\end{align*}
Thus the resulting sketch stores $-\log(1-\eta)$,
where $\eta$ is the minimum hash value, thereby essentially reproducing Cohen's~\cite{cohen1997size} $\textsf{Min}$ sketch. 

\end{example}\begin{example}[$F_1$-sampler $\mapsto$ min-based reservoir sampling~\cite{vitter1985random}]
For $F_1$-sampling, we have $G(x)=x$, which is generated by $(c,\nu,\gamma_0)=(0,0,1)$. This corresponds to a deterministic drift process $X$, where $X_t = t$ for $t\geq 0$. The induced level function is,
\begin{align*}
    \ell(x,y)&=\inf\{z:\pr(X_z\geq x)\geq y\}
    \intertext{and since $X_z=z$, $\pr(X_z\geq x)=\ind{z\geq x}$,}
    &=\inf\{z:\ind{z\geq x}\geq y\}\\
    &=x.  & \text{(note that $y\in(0,1)$)}
\end{align*}
Thus the resulting sketch stores the minimum random value sampled freshly at each insertion, reproducing the reservoir sampler~\cite{vitter1985random} 
with the choice of replacement simulated by taking the min.
\end{example}

Next, we demonstrate a new and 
``non-trivial'' application: 
the construction of an $F_{1/2}$-sampler.

\RestyleAlgo{ruled}
\begin{algorithm}[htbp]
  \SetAlgoLined\DontPrintSemicolon
  \SetKwFunction{algo}{algo}\SetKwFunction{proc}{proc}
  \SetKwFunction{activate}{Activate}
  \SetKwProg{update}{Update}{}{}
  \SetKwProg{sample}{Sample}{}{}
  \SetKwInOut{sketch}{Sketch}
  \SetKwInOut{hash}{Hash function}
  \sketch{$(v_*,h_*)$, initialized as $(\perp,\infty)$}
  \hash{$H:[n]\to \Uniform(0,1)$}
  \KwResult{Sample an element $u$ with prob.~$\sqrt{\mathbf{x}(u)}/\sum_{v\in[n]}\sqrt{\mathbf{x}(v)}$}
  
  \tcp{upon update $\mathbf{x}(v)\gets \mathbf{x}(v)+\Delta$}
  \update{$(v,\Delta)$}{
   $h\gets \sqrt{2 \Exp(\Delta)}\cdot \mathrm{erf}^{-1}(H(v))$ \tcp*{$\Exp(\Delta)$ is freshly sampled}
   \If{$h<h_*$}{
   $(v_*,h_*)\gets (v,h)$
   }
  }{}
  \tcp{upon sample}
  \sample{$(\,)$}{
  \Return{$v_*$}
  }
  \caption{$F_{1/2}$-sampler}
  \label{alg:1/2-stable}
\end{algorithm} 

\begin{example}[$F_{1/2}$-sampler]\label{example:1/2-stable}
For $F_{1/2}$, we have $G(x)=\sqrt{x}$, which corresponds to the $1/2$-stable process. The induced level function is
\begin{align*}
    \ell(x,y) &= \inf\{z:\pr(X_z\geq x)\geq y\}
    \intertext{and since $X$ is $1/2$-stable, we have $X_z\sim z^{2} X_{1}$}
    &= \inf\{z:\pr(z^{2} X_{1}\geq x)\geq y\}.
    \intertext{It is known that the standard 1/2-stable $X_1$ distributes identically with $1/Z^2$ where $Z$ is a standard Gaussian~\cite[page~29]{ken1999levy}. 
    Thus, we have $\pr(X_1\geq r)=\pr(|Z|\leq \sqrt{1/r})=\mathrm{erf}\left(\sqrt{\frac{1}{2r}}\right)$, 
    where $\mathrm{erf}(x)=\frac{2}{\sqrt{\pi}}\int_0^xe^{-x^2}\,dx$ is the \emph{Gauss error function}.  As $\pr(z^2X_1 \geq x) = \pr(X_1 \geq x/z^2) = \mathrm{erf}(z/\sqrt{2x})$, this is equal to}
    &= \sqrt{2 x} \cdot \mathrm{erf}^{-1}(y).
\end{align*}
Thus to sample with weight function $G(x)=\sqrt{x}$, upon an $\Update(v,\Delta)$, one just needs to compute $h=\sqrt{2 Y/\Delta} \cdot \mathrm{erf}^{-1}(H(v))$ where $Y\sim \Exp(1)$ is freshly sampled and $H(v)\sim \Uniform(0,1)$ is the hash value of $v$, and replace $(v_*,h_*)$ with $(v,h)$ if  $h<h_*$.\footnote{The inverse error function $\mathrm{erf}^{-1}$ is available, e.g., as $\mathtt{scipy.special.erfinv}$ in \texttt{Python}.}
See \cref{alg:1/2-stable}.
\end{example}

The $F_{1/2}$-moment is a special case where the $1/2$-stable distribution has a clean expression.
Penson and G\'orska \cite{penson2010exact} give explicit formulas for one-sided $k/l$-stable distributions for integers $k<l$. 
For a generic $\alpha\in(0,1)$ whose level function is difficult to compute, it is more convenient to use the \textsf{Stable-Min-Sampler} (\cref{thm:stable-min}).

\begin{example}[Common subordinators]
We now consider  samplers induced by common subordinators.
\begin{itemize}
    \item The $F_\alpha$-sampler with $G(z)=z^\alpha$ for $\alpha \in (0,1)$ corresponds to the non-negative $\alpha$-stable process, where $X_1\sim \alpha\text{-stable}$.
    \item The sampler with $G(z)=1-e^{-\lambda x}$ for $\lambda>0$  corresponds to the Poisson processes, where $X_1\sim \mathrm{Poisson}(\lambda)$.
    \item The sampler with $G(z)=\alpha \log(1+x/\beta)$ for $\alpha,\beta >0$ corresponds to Gamma processes, where $X_1\sim \mathrm{Gamma}(\alpha,\beta)$.
\end{itemize}
\end{example}
See \cref{fig:f0f1} and \cref{fig:demo} for contour plots of the level functions used in the examples.  For a generic $G$, in practice one may pre-compute the level function of $G$ on a geometrically spaced lattice and cache it as a read-only table. Such a table can be shared and read simultaneously by an unbounded number of $G$-samplers for different applications and therefore the amortized space overhead is typically small.

\begin{figure}[ht]
    \centering
    \begin{subfigure}[b]{\linewidth}
   % \hspace{1.4cm} 
   \includegraphics[width=0.9\linewidth]{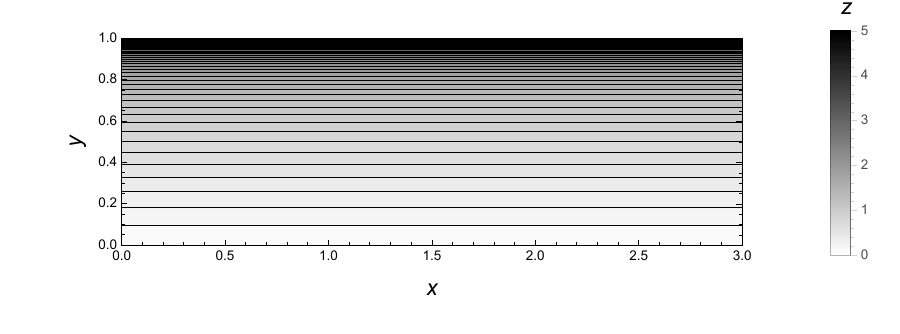}       
   \caption{Level function for $F_0$-sampler ($G$-sampler with $G(z)=\ind{z>0}$)}
    \end{subfigure}
    \begin{subfigure}[b]{\linewidth}
   % \hspace{1.4cm} 
    \includegraphics[width=0.9\linewidth]{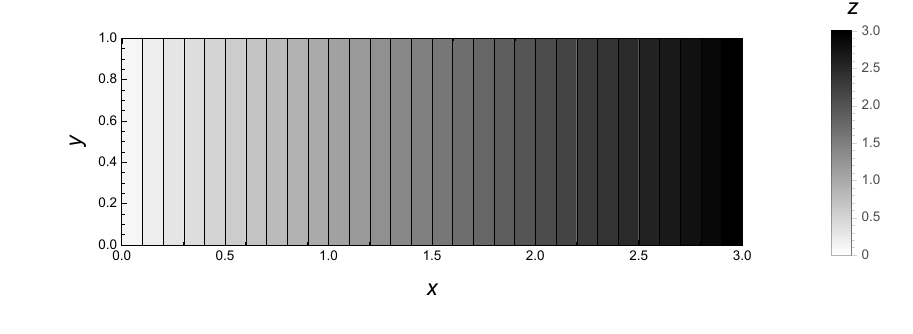}        
    \caption{Level function for $F_1$-sampler ($G$-sampler with $G(z)=z$)}
    \end{subfigure}
    \caption{Level functions for $F_0$-sampler and $F_1$-sampler}
    \label{fig:f0f1}
\end{figure}

\begin{figure}[!ht]
    \centering
    
    \begin{subfigure}[b]{\linewidth}
    \includegraphics[width=0.9\linewidth]{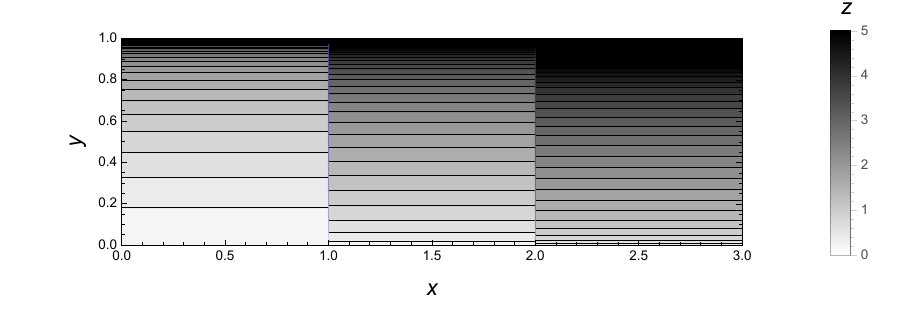}     
    \caption{Level function for $(1-e^{-z})$-sampler, induced by a Poisson counting process}
    \end{subfigure}
    
    \begin{subfigure}[b]{\linewidth}
    \includegraphics[width=0.9\linewidth]{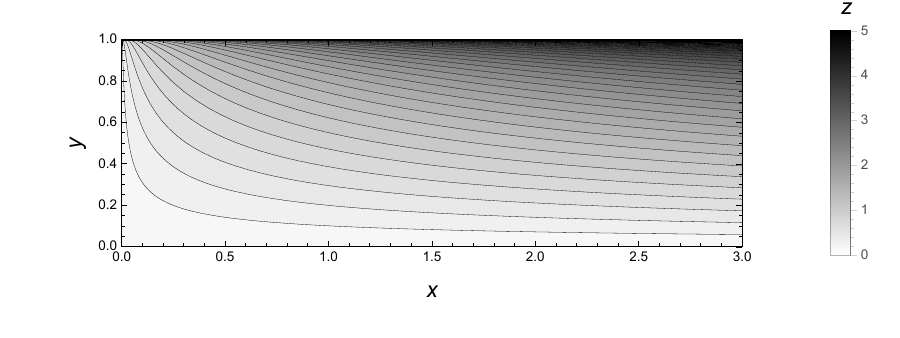}   
    \caption{Level function for $z^{1/2}$-sampler (a.k.a~$F_{1/2}$-sampler), induced by a $1/2$-stable process}
    \end{subfigure}
    
    \begin{subfigure}[b]{\linewidth}
    \includegraphics[width=0.9\linewidth]{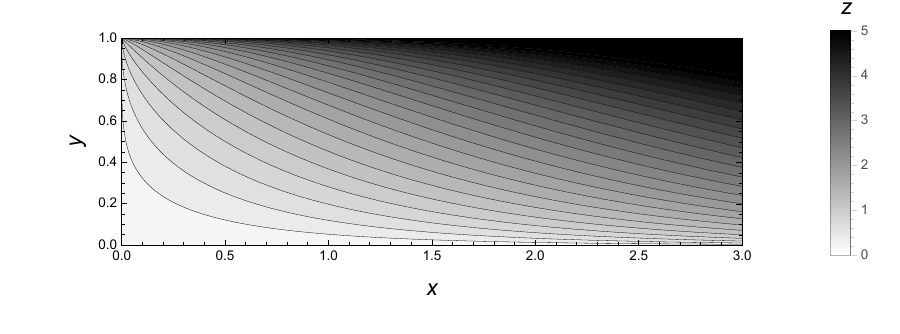}        
    \caption{Level function for $\log(1+z)$-sampler, induced by a 
    Gamma process}
    \end{subfigure}
    
    \caption{Level functions for samplers induced by well-known 
    \Levy{} processes}
    \label{fig:demo}
\end{figure}

\subsection{Universal Sampler}
Since $\ell_G(a,b)$ is increasing in both arguments,
among all updates $\{(v_i,\Delta_i)\}$ to the 
\LevyMinSampler{}, the stored sample must 
correspond to a point on the (minimum) \emph{Pareto frontier} of 
$\{(Y_i/\Delta_i,H(v_i))\}$.  Thus, it is possible to 
produce a $G$-sample for any $G\in\mathcal{G}$ 
simply by storing the Pareto frontier.
(This observation was also used by Cohen~\cite{cohen2018stream} 
in her approximate samplers.)
The size of the Pareto frontier is a random variable
that is less than $\ln n+1$ in expectation and 
$O(\log n)$ with high probability.

\begin{theorem}\label{thm:ParetoSampler}
Suppose \ParetoSampler{} processes a stream of
$\poly(n)$ updates to $\mathbf{x}$.
The maximum space used is $O(\log n)$ 
words with probability $1-1/\poly(n)$.
At any time, given a $G\in\mathcal{G}$, 
it can produce a $v_*\in[n]$ such that
$\pr(v_*=v)=G(\mathbf{x}(v))/G(\mathbf{x})$.
\end{theorem}

By instantiating $k$ independent copies of the 
\LevyMinSampler{} we can $G$-sample $k$ indices 
with replacement, but in some applications we would like to sample $k$ indices \emph{without replacement}.  In Appendix \ref{sec:wor}, we show that both \LevyMinSampler{} and \ParetoSampler{} can be modified to sample \emph{without replacement} as well.

\RestyleAlgo{ruled}
\begin{algorithm}[htbp]
  \SetAlgoLined\DontPrintSemicolon
  \SetKwFunction{algo}{algo}\SetKwFunction{proc}{proc}
  \SetKwFunction{activate}{Activate}
  \SetKwProg{update}{Update}{}{}
  \SetKwProg{sample}{Sample}{}{}
  \SetKwInOut{sketch}{Sketch}
  \SetKwInOut{hash}{Hash function}
  \sketch{$S\subset \R_+\times \R_+\times[n]$, initialized as $\emptyset$}
  \hash{$H:[n]\to \Uniform(0,1)$}
  \KwResult{Sample an element $u$ with prob.~$G(\mathbf{x}(u))/\sum_{v\in[n]}G(\mathbf{x}(v))$ upon a specified $G$}
  
  \tcp{upon update $\mathbf{x}(v)\gets \mathbf{x}(v)+\Delta$}
  \update{$(v,\Delta)$}{
   $ S \gets \Pareto(S \cup \{(\Exp(\Delta),H(v),v)\})$ \tcp*{$\Exp(\Delta)$ is freshly sampled}}
  \tcp{upon $G$-sample}
  \sample{$(G)$}{
  $(a_*,b_*,v_*) \gets \argmin_{(a,b,v)\in S} \{\ell_G(a,b)\}$\;
  \Return{$v_*$}
  }
  \caption{\ParetoSampler{}. The function $\Pareto(L)$ returns the (minimum) Pareto frontier of the tuples $L$ w.r.t.~their first two coordinates.}\label{alg:pareto-sampler}
\end{algorithm}

\section{The \LevyPCSA{} and \LevyHyperLogLog{} Sketches}\label{sec:levy_pcsa}
The $G$-transformation technique can be used to reduce \emph{$G$-moment} estimation in the $\R_+$-turnstile to the intensively studied 
\emph{cardinality estimation} problem.
\subsection{\Levy{}-Activation and $G$-Cells}

Ting~\cite{ting2014streamed} and Pettie and Wang~\cite{pettie2021information} analyze cardinality sketches
in a uniform fashion by viewing them 
as a set $\mathcal{C}$ of 
\emph{cells} which begin \emph{inactive} (0) and
can make a one-time transition to \emph{active} (1).
The probability that
an update activates a cell $c\in \mathcal{C}$ is proportional to its \emph{size} $s(c)$.  
Without loss of generality, 
one may assume the sizes are normalized to that
$\sum_{c\in \mathcal{C}} s(c)=1$.
The \emph{state} of the sketch is the
set of active cells. 
See \cite[\S 1.2]{pettie2021information} for more details. 

\begin{definition}[activation of cells]\label{def:cells}
Let $H:[n]\times \mathcal{C} \to [0,1]$ be a uniformly random hash function.
Let $s(c)$ be the size of cell $c\in \mathcal{C}$.
Initially $c\gets 0$ for all $c\in\mathcal{C}$.
Upon an update with $\Delta>0$,
\begin{description}
    \item[$\Update(v,\Delta)$ :] For each $c\in\mathcal{C}$, $c\gets c \vee \ind{H(v,c)<s(c)}$.
\end{description}
\end{definition}

Assume the total cardinality is $\lambda\in \N$. A cell with size $s$ will be activated with probability $1-(1-s)^\lambda\to 1-e^{-s\lambda}$ as $\lambda\to\infty$ and $s\lambda=O(1)$. All upper bounds for cardinality estimation in the random oracle model can be applied to the $G$-moment estimation if cells with activation probability $1-e^{-sG(\mathbf{x})}$ can be \emph{simulated}.
We call the simulated cells \emph{$G$-cells}.

\begin{definition}[\Levy-activation of $G$-cells]
Let $X$ be a subordinator and $G$ be the corresponding Laplace exponent. Upon an update with $\Delta>0$,
\begin{description}
    \item[$\Update(v,\Delta)$:] For each $c\in\mathcal{C}$, 
    $c\gets c \vee \ind{X^{(v,c)}_{s(c)} > Y/\Delta}$, where $X^{(v,c)} \sim X$ are i.i.d.~copies of $X$ and 
    $Y\sim \Exp(1)$ is a freshly sampled standard 
    exponential random variable.
\end{description}
\end{definition}

We now prove that the $G$-cells simulate the ordinary cells in \cref{def:cells} asymptotically with the cardinality $\lambda$ replaced by the $G$-moment.

\begin{lemma}
Given any input vector $\mathbf{x}\in \R_+^n$, the activation of $G$-cells are independent and for each $G$-cell $c$, 
\begin{align*}
    \pr(c = 1) &= 1- e^{-s(c) G(\mathbf{x})}.
\end{align*}
\end{lemma}
\begin{proof}
The independence is given by construction. 
\begin{align*}
    \pr(c=0) &=\pr\left(\forall v\in[n].\; X_{s(c)}^{(v,c)} 
    \leq \Exp(\mathbf{x}(v))\right)\\ 
    &= \prod_{v\in[n]} \pr\left(X_{s(c)}^{(v,c)} \leq \Exp(\mathbf{x}(v))\right) & \text{(by independence)}\\
    &=\prod_{v\in[n]} \E e^{-X_{s(c)}^{(v,c)}\mathbf{x}(v)}\\
    &= \prod_{v\in[n]}e^{-s(c)G(\mathbf{x}(v))}=e^{-s(c)G(\mathbf{x})}. & \text{(by \LK)}
\end{align*}
Hence $\pr(c=1)=1-e^{-s(c)G(\mathbf{x})}$.
\end{proof}

The \Levy{} activation lemma enables essentially all cardinality estimation techniques to be applied to $G$-moment estimation, 
by replacing cells with $G$-cells. 
We first consider the \emph{\LevyPCSA} sketch, i.e., the sketch that replaces every cell in \textsf{PCSA} with a $G$-cell with the same size.\footnote{(In brief, the \emph{basic} \textsf{PCSA} sketch has cells with sizes $1/2,1/4,1/8,\ldots$.  A \PCSA{} sketch composed of $m$ subsketches has $m$ cells of size $m^{-1}\cdot 2^{-i}$ for all $i\geq 1$.  See~\cite{flajolet1985probabilistic,pettie2021information,ting2014streamed} for more details.)}

\begin{theorem}[\textsf{PCSA} emulation]
Given any input vectors $\mathbf{x},\mathbf{x}'\in \R_+^n$ such that $G(\mathbf{x})=\norm{\mathbf{x}'}_0$,
 \LevyPCSA{} with input $\mathbf{x}$ and Poissonized \PCSA{} with input $\mathbf{x}'$ distribute identically.
\end{theorem}

The most efficient mergeable cardinality sketch is \Fishmonger~\cite{pettie2021information}, which is an entropy-compressed version of \PCSA{} with an asymptotically
optimum estimator.
It is optimal among all ``linearizable'' mergeable sketches~\cite{pettie2021information}.  (Lang's \emph{CPC} sketch~\cite{Lang17} 
included in Apache DataSketches~\cite{DataSketches} is similar, but uses an off-the-shelf compressor that does not meet the entropy bound and uses an 
estimator~\cite{Lang17} that is
worse than~\cite{wang2023better,pettie2021information}.)
Substituting cells with $G$-cells in \Fishmonger{} yields \emph{\LevyFishmonger}.

\begin{corollary}
The \LevyFishmonger{} sketch~\cite{pettie2021information}
provides an asymptotically unbiased
estimate of the $G$-moment with relative variance $1/m$.  It uses $(1+o(1))(H_0/I_0)m \approx 1.98m$ bits when $m=\Omega(\log^2\log n)$.  
Here $H_0 = \frac{1}{\log 2} + \sum_{k=1}^\infty\frac{1}{k}\log_2 \left(1+1/k\right)$, $I_0=\pi^2/6$, and $H_0/I_0\approx 1.98$.
\end{corollary}

Although (compressed) \PCSA{} has the best size-accuracy tradeoff~\cite{pettie2021information,Lang17,DataSketches}, 
\HyperLogLog{} is 
already widely deployed in industry.
For these reasons the \LevyHyperLogLog{} sketch will be
more attractive in some real world applications.
Let  $H:[n]\to \N$ be a hash function where $\pr(H(v)\geq k)=2^{-k}$ for $k\in \N$.  
One \textsf{HyperLogLog} subsketch \cite{flajolet2007hyperloglog} stores a single number $M$ such that upon insertion $v$, $M\gets \max(M,H(v))$. With $\Poisson(\lambda)$ distinct insertions, $\pr(M\geq k)=1-e^{-\lambda 2^{-k}}$. We now show how to emulate $M$ with $\lambda$ replaced by $G(\mathbf{x})$.

\begin{lemma}[\LevyHyperLogLog]
Let $G$ be the Laplace exponent and $X$ be the corresponding subordinator. Let $M$ be an $\N$-register initialized as $0$.
A vector update $\mathbf{x}(v)\gets \mathbf{x}(v)+\Delta$ is effected by:
\begin{description}
        \item[$\Update(v,\Delta)$ :] $M\gets\max\left\{M,\max\left\{k\in\N: X^{(u)}_{2^{-k}}>Y/\Delta\right\}\right\}$, where $X^{(v)}\sim X$ is hashed with key $v$ and $Y\sim \Exp(1)$ is a freshly sampled standard exponential random variable.
\end{description}
Then given any input vector $\mathbf{x}\in \R_+^n$, $\pr(M\geq k)=1-e^{-2^{-k}G(\mathbf{x})}$.
\end{lemma}

\begin{remark}
    \textsf{HyperLogLog} is restored by taking $G(x)=\ind{x>0}$, with $X$ being 
    the ``pure killed'' process.
\end{remark}
\begin{proof}
For any $u\in[n], \mathbf{x}(u)>0$, and $j\in \N$,
    \begin{align*}
        &\pr\left(\max\left\{k\in\N: X^{(u)}_{2^{-k}}>\Exp(\mathbf{x}(u))\right\} <  j\right)\\
        &= \pr\left(X^{(u)}_{2^{-j}}\leq \Exp(\mathbf{x}(u))\right)\\
        &= \E e^{-\mathbf{x}(u) X^{(u)}_{2^{-j}}}\\
        &= e^{-2^{-j}G(\mathbf{x}(u))}.    & \text{(by \LK)}
    \end{align*}
Thus
\begin{align*}
    \pr(M<j) &= \prod_{u\in[n]}\pr\left(\max\left\{k\in\N: X^{(u)}_{2^{-k}}>\Exp(\mathbf{x}(u))\right\} <  j\right)
    \:=\: e^{-2^{-j}\sum_{u\in[n]} G(\mathbf{x}(u))} = e^{-2^{-j}G(\mathbf{x})}.
\end{align*}
\end{proof}
\begin{theorem}[\textsf{HyperLogLog} emulation]
Given any input vector $\mathbf{x},\mathbf{x}'\in \R_+^n$ such that $G(\mathbf{x})=\norm{\mathbf{x}'}_0$,
 \LevyHyperLogLog{} with input $\mathbf{x}$ and Poissonized \textsf{HyperLogLog} with input $\mathbf{x}'$ distribute identically.
\end{theorem}

This allows us to easily adapt any existing implementation of 
\HyperLogLog{} to estimate $G$-moments,
using any estimator for
\HyperLogLog~\cite{flajolet2007hyperloglog,wang2023better,Ertl17b}.

\subsection{\StableHyperLogLog}\label{sec:stable-hyperloglog}

For $\alpha$-stable processes,
it is sufficient to hash every element to a single one-sided $\alpha$-stable random variable, instead of an $\alpha$-stable sample path.
\cref{lem:stable-hyperloglog} establishes
the correctness of \cref{alg:F-alpha},
whose \textbf{Query} algorithm 
includes both Flajolet et al.'s~\cite{flajolet2007hyperloglog} 
harmonic mean estimator and Wang and Pettie's~\cite{wang2023better} 
improved $\tau$-GRA-based estimator.

\RestyleAlgo{ruled}
\begin{algorithm}[htbp]
  \SetAlgoLined\DontPrintSemicolon
  \SetKwFunction{algo}{algo}\SetKwFunction{proc}{proc}
  \SetKwFunction{activate}{Activate}
  \SetKwProg{update}{Update}{}{}
  \SetKwProg{query}{Query}{}{}
  \SetKwInOut{sketch}{Sketch}
  \SetKwInOut{hash}{Hash function}
  \sketch{$M[1],\ldots,M[m]$, initialized as $-\infty$}
  \hash{  $H:[n]\times[m]\to \textrm{one sided $\alpha$-stable}$, $\alpha\in(0,1)$}
  \KwResult{Estimate $\sum_{v\in[n]}\mathbf{x}(v)^\alpha$}
  
  \tcp{upon update $\mathbf{x}(v)\gets \mathbf{x}(v)+\Delta$}
  \update{$(v,\Delta)$}{
  \For{$j\in [m]$}{
  $w\gets\floor{\alpha(\log_2 H(v,j)-\log_2\Exp(\Delta))}$\tcp*{$\Exp(\Delta)$ is freshly sampled}
  $M[j]\gets \max(M[j],w)$ 
   }
  }{}
  \tcp{upon query}
  \query{$(\,)$}{
  \Return{$C_m(\frac{1}{m}\sum_{j=1}^m 2^{-M[j]})^{-1}$} \tcp*{\HyperLogLog's estimator; $C_m \to 0.7213$}
  or \Return{$  (\frac{\log 2}{\Gamma(\tau_*)(1-2^{-\tau_*})}\frac{1}{m}\sum_{j=1}^m 2^{-\tau_* M[j]})^{-\tau_*^{-1}}$} \tcp*{$\tau_*$-\textsf{GRA} estimator; $ \tau_*\to 0.89$}
  }
  \caption{$F_\alpha$-moment estimation (\StableHyperLogLog)}
  \label{alg:F-alpha}
\end{algorithm}

\begin{lemma}[\StableHyperLogLog]\label{lem:stable-hyperloglog}
Fix $\alpha\in(0,1)$. Let $M$ be an $\N$-register initialized as $0$.  A vector update $\mathbf{x}(v)\gets \mathbf{x}(v)+\Delta$, $\Delta>0$, is effected by:
$$
M\gets\max\left\{M,\floor*{\alpha(\log_2 Z^{(v)}-\log_2 Y +\log_2 \Delta)}\right\},
$$ 
where the $Z^{(v)}$ are i.i.d.~standard 
one-sided $\alpha$-stable random variables and $Y\sim \Exp(1)$ is a freshly sampled standard exponential random variable.
For any input vector $\mathbf{x}\in \R_+^n$, $\pr(M\geq k)=1-e^{-2^{-k}\sum_{v\in[n]}\mathbf{x}(v)^\alpha}$.
\end{lemma}

\begin{remark}
    For comparison, note that at $\Update(v,\Delta)$, for any $\Delta>0$,
    \textsf{HyperLogLog} updates as
    \begin{align*}
        M\gets\max(M,\floor{-\log_2 W^{(v)}}),
    \end{align*}
    where $W^{(v)}\sim \Exp(1)$. \HyperLogLog{} can be considered as the limiting case of \StableHyperLogLog, since
    \begin{align*}
        \alpha(\log_2 Z^{(v)}-\log_2 Y +\log_2 \Delta) \to -\log_2 W^{(v)}
    \end{align*}
    in distribution as $\alpha\to 0$. It is known that $(\textrm{$\alpha$-stable})^{\alpha}\to 1/\Exp(1)$ in distribution as $\alpha\to 0$ \cite{cressie1975note}. 
\end{remark}

\begin{proof}    
For any $u\in[n], \mathbf{x}(u)>0$, and $j\in \N$,
    \begin{align*}
        &\pr\left(\floor{\alpha(\log_2 Z^{(u)}-\log_2 \Exp(\mathbf{x}(u)))}<j\right)\\
        &=\pr\left(\max\left\{k\in\N: Z^{(u)}>2^{k/\alpha}\Exp(\mathbf{x}(u))\right\} <  j\right)\\
        &= \pr\left(Z^{(u)}\leq2^{j/\alpha}\Exp(\mathbf{x}(u))\right)\\
        &= \E e^{-\mathbf{x}(u) 2^{-j/\alpha} Z^{(u)}}.
        \intertext{By \LK, $\E e^{- zZ^{(u)}}=e^{-z^\alpha}$ for $z\geq 0$ since $Z^{(u)}$ is $\alpha$-stable, so this is equal to}
        &= e^{-2^{-j}\mathbf{x}(u)^\alpha}.
    \end{align*}
    It follows that $\pr(M<k) 
    = \prod_{u\in[n]} \pr\left(\floor{\alpha(\log_2 Z^{(u)}-\log_2 \Exp(\mathbf{x}(u)))}<k\right)
    = 2^{-2^{-k}\sum_{u\in [n]} \mathbf{x}(u)^\alpha}$.
\end{proof}

\section{Previous Sketches and \Levy{} Processes}\label{sec:previous-sketches-as-Levy-processes}
In this section we will discuss in greater detail 
how previous sketches can be viewed
as being constructed from 
\Levy{} processes. 
The guiding question is: what will happen if the input vector 
is replicated a large number of 
times on disjoint supports? 

\subsection{Single-Level Aggregation and the Central Limit Theorem}\label{sec:single-level-Gaussian}
We start from the \textsf{AMS} sketch by Alon, Matias, and Szegedy \cite{alon1996space}, where a cell $Q$ stores
\begin{align*}
    Q &= \sum_{v\in[n]} \mathbf{x}(v)\xi_v.
\end{align*}
Here $\mathbf{x}\in \mathbb{R}^n$ 
is updated in an $\R$-turnstile and the 
$\xi_v \in\{-1,1\}$ are i.i.d.~Rademacher random variables.\footnote{Again, in Alon, Matias, and Szegedy's analysis~\cite{alon1996space}, 4-wise independent hashing suffices to guarantee a good enough estimate but we need to assume they are i.i.d.~here to talk about the exact distribution of the final state.} Now suppose the input vector is repeated $w$ times on disjoint supports, i.e., $\mathbf{x}_w\in \mathbb{R}^{nw}$ with
$\mathbf{x}_w(v) = \mathbf{x}(v\bmod n)$.
Then the final state of the sketch for $\mathbf{x}_w$ would be
\begin{align*}
    M_w=Q_1+Q_2+ \cdots + Q_w,
\end{align*}
where the $(Q_j)$ are i.i.d.~copies of $Q$. Note that $\E Q =0$ and 
$\var Q = |\mathbf{x}|^2 = \sum_{v\in[n]}\mathbf{x}(v)^2$. Thus as $w\to\infty$ the normalized final state $M_w/\sqrt{w}$ converges to a centered Gaussian random variable with variance $\sum_{v\in[n]}\mathbf{x}(v)^2$ in distribution, by the central limit theorem. Thus, in the regime as $w\to \infty$, the \textsf{AMS} sketch eventually becomes
the same as a cell in the \LevyTower{} $\sum_{v\in[n]}\inner*{X^{(v)}_1,\mathbf{x}(v)}$, where the $(X^{(v)})$ are i.i.d.~one-dimensional Wiener processes/Brownian motion and $X_1^{(v)}$ is the value of the process at time $t=1$. 
Of course, since a Wiener process/Brownian motion is self-similar, one does not need to sample the \Levy{} processes at exponentially spaced intervals; samples from $m$ independent processes at time $t=1$ suffice to approximate $\sum_{v\in[n]}\mathbf{x}(v)^2$ with $O(1/m)$ relative variance. 
The characteristic exponent of the Wiener process/Brownian motion is the target function for the $F_2$-moment: $f(x)=|x|^2$.

\medskip 

Another illuminating example is Ganguly's \cite{ganguly2004estimating} 
$F_k$-moment estimator for $k\geq 3$. 
Ganguly randomly projects the elements and stores 
$S=\sum_{v\in[n]}\mathbf{x}(v)Z^{(v)}$ where the $Z^{(v)}$ are i.i.d.~uniformly random roots of 
$x^k=1$ on the complex unit circle. 
The statistic $S^k$ \emph{seems} to be a good estimator for the $F_k$-moment since $\E S^k = \sum_{v\in[n]}\mathbf{x}(v)^k$. However, since this random projection has finite variance, the normalized sum $S/\sqrt{\sum_{v\in[n]}|\mathbf{x}(v)|^2}$ 
will converge to a complex Gaussian 
as the input vector is replicated on disjoint supports.
In the limit, the only information remaining in the sketch pertains to the $F_2$-moment. 
Indeed, one in fact needs the number of i.i.d.~registers to 
grow \emph{polynomially} in the support-size for it to be able to estimate $F_k$~\cite{ganguly2004estimating}.

\medskip 

By the generalized central limit theorem,  if for some sequences $(a_w)_{w\in\N}$ and $(b_w)_{w\in\N}$, $(M_w-b_w)/a_w$ converges to some non-degenerate random variable $Y$ as $w\to \infty$, then $Y$ has to be $\alpha$-stable for $\alpha\in(0,2]$. If in addition, $\var Q<\infty$, then $Y$ is Gaussian. 
Non-Gaussian stable distributions 
will be discussed shortly, 
in \cref{sec:stable_previous}.

\subsection{Multi-Level Subsampling and the Poisson Limit Theorem}\label{sec:multi_level}
Another important sketching technique is \emph{subsampling}. The idea is to devise a sketch that works for $\Theta(m)$ elements and then solve the generic case by subsampling the stream 
at rates $2^{-k}$ for $k\in\N$, one of which reduces it
down to $\Theta(m)$ elements. Without loss of generality, suppose now we have $m$ non-zero elements with distinct values $\mathbf{x}(1),\ldots,\mathbf{x}(m)$. Once again,
let $\mathbf{x}_w \in \R^{nw}$
be $\mathbf{x}$ repeated $w$ times on disjoint supports.
To obtain a $\Theta(m)$-size set  of subsamples, one needs to subsample $\mathbf{x}_w$ with rate $1/w$. Let  $Y_{w,j}$ be the indicator that the $j$th copy of $\mathbf{x}(1)$ is sampled where $\E Y_{w,j} = 1/w$. The number of elements with value $\mathbf{x}(1)$ in the subsampled set is
\begin{align*}
    \sum_{j=1}^w Y_{w,j} \to \Poisson(1) \text{ in distribution as $w\to\infty$}.
\end{align*}
Here we have invoked the Poisson limit theorem (see~\cite[Theorem 3.6.1]{durrett2019probability}) 
which can be applied since
$\E \sum_{j=1}^w Y_{w,j} =1$ and $\max_{j=1}^w \pr(Y_{w,j}\neq 0) = \frac{1}{w}\to 0$ as $w\to \infty$. Similarly, the number of elements with value $\mathbf{x}(j)$ is also $\Poisson(1)$ and the occurrences of different values are independent. It is well known that such a 
limiting distribution can be simulated algorithmically by duplicating 
every element 
a $\Poisson(1)$ number of times (see \cite{flajolet1985probabilistic,flajolet2007hyperloglog,pettie2021information,wang2023better}). 
On the other hand, such limit distributions can be equivalently simulated by hashing each element to a Poisson process and then sampling at different times. Thus, all sketches based on subsampling can be simulated by the \LevyTower{} 
with the corresponding (compound) Poisson processes.

\subsection{Stable Random Variables and Stable Processes}\label{sec:stable_previous}

Whenever $Q$ has a finite variance, $(Q_1+Q_2+\cdots + Q_w)/\sqrt{w}$ goes to Gaussian as $w\to\infty$. Nevertheless, with infinite variance, $Q$ can lie in the domain of attraction of an $\alpha$-stable distribution for any $\alpha\in (0,2)$. Indeed, the use of stable random variables is another sketching technique that directly corresponds to \Levy{} processes, namely $\alpha$-stable processes. Similar to the Gaussian case (\cref{sec:single-level-Gaussian}), $\alpha$-stable processes are self-similar so there is no need to store the whole tower with exponentially spaced sample times $2^{-k}$. Rather, it suffices to sample $m$ independent processes at time $t=1$.

Indyk \cite{indyk2006stable} uses {one-}dimensional $\alpha$-stable random variables for $\alpha\in(0,2]$ to estimate the $F_\alpha$-moment,
and indeed, the characteristic exponent of the $\alpha$-stable process is $f(x)=|x|^\alpha$. 
Ganguly, Bansal, and Dube \cite{ganguly2012estimating} consider the higher dimensional case, where each element has $d$ attributes. We now show how to reconstruct \cite{ganguly2012estimating}'s sketch from the perspective of \Levy{} processes. The target function is
\begin{align*}
    f:\R^d \to \R,\quad f(x) &= \left(\sum_{j=1}^d |x_j|^p\right)^q,
\end{align*}
where $p\in (0,2], q\in (0,1]$.
Such $f$-moments are called $F_{p,q}$ hybrid moments in \cite{ganguly2012estimating}.
By the subordination theorem (\cref{thm:subordination}; see~{\cite[page~197]{ken1999levy}}), 
this function is the characteristic exponent of a vector of $d$ independent $p$-stable processes that are subordinated by a common $q$-stable subordinator. 
That is, the \Levy{} process is precisely $Y_t \in \R^d$,
\begin{align*}
    Y_t = (X_{Z_t}^{[1]},\ldots,X_{Z_t}^{[d]}),
\end{align*}
where $Z$ is a $q$-stable subordinator and the $(X^{[j]})$ are i.i.d.~$p$-stable processes. Note that since $X^{[j]}$ is $p$-stable, we have $X_{Z_t}^{[j]}\sim Z_{t}^{1/p} X_{1}^{[j]}$. 
Thus, sampling $Y$ at time 1 yields
\begin{align*}
     Y_1 \sim Z_1^{1/p}\left(X_{1}^{[1]},\ldots,X_{1}^{[d]}\right).
\end{align*}
Take the inner product and we have, for $x\in \R^d$,
\begin{align*}
    \inner*{x,Y_1} = Z_1^{1/p}\sum_{j=1}^d x_j X_{1}^{[j]},
\end{align*}
which is exactly the random projection defined by the update algorithm of Ganguly, Bansal, and Dube~\cite{ganguly2012estimating}.

\begin{remark}
    One benefit from understanding the connection between moment estimation and \Levy{} processes is \emph{debugging} incorrect claims.  
    The conference version of Ganguly, Bansal, and Dube~\cite{GangulyBD08} stated that the $F_{p,q}$ hybrid moment could be estimated in $\polylog(n)$ space for any $p,q\in (0,2]$.  
    This claim was shown to be incorrect by Jayram and Woodruff~\cite{JayramW09}.
    Using the connection between \Levy{} processes and sketches, this claim should seem fishy, as the $F_{p,q}$-hybrid moment 
    corresponds to a \Levy{} process consisting of a vector of $p$-stable $X^{[1]},\ldots,X^{[d]}$ subordinated by a common $q$-stable subordinator.  But $q$-stable subordinators do not exist for $q\in (1,2]$!
    This error was corrected in the journal version of Ganguly, Bansal, and Dube~\cite{ganguly2012estimating}, 
    where the space is $\polylog(n)$ for $p\in (0,2], q\in (0,1]$ and polynomial for $p\in (0,2], q\in (1,2]$.  (See Jayram and Woodruff~\cite{JayramW09} for $F_{p,q}$-moment estimators for general $p,q$.)
\end{remark}

\subsection{\textsf{HyperLogLog}, \textsf{PCSA}, and Pure Killed Processes}
Cardinality sketches like \textsf{HyperLogLog}~\cite{flajolet2007hyperloglog} and \textsf{PCSA}~\cite{flajolet1985probabilistic} that only allow increments also correspond to L\'evy processes in a surprisingly natural way. We consider the number system $\R\cup \{\infty\}$, where for any $x\in \R$, $x+\infty=\infty+x=\infty$ and $\infty+\infty=\infty$. Such definitions extend to multiplication with natural numbers where $0\cdot\infty=0$ and $k\cdot \infty=\infty$ for any $k\in \Z_+$.
\begin{definition}[Pure killed processes]\label{def:pure_killed}
A \emph{pure killed process} $X=(X_t)_{t\geq 0}$ with \emph{kill rate} $c>0$ is a L\'evy process over $\R\cup \{\infty\}$ which can be simulated as follows.
\begin{itemize}
    \item Sample a \emph{kill time} $Y\sim \Exp(c)$.
    \item $X_t=0$ if $t<Y$ and $X_t=\infty$ otherwise.
\end{itemize}
In particular, we have $\pr(X_t=\infty)=\pr(Y\leq t)=1-e^{-ct}$.
\end{definition}
Assume the insertion stream is $v_1,\ldots,v_T$, where each $v_j$ is an element in the universe $[n]$. Let $X=(X_t)_{t\geq 0}$ be a pure killed process with unit kill rate $(c=1)$.  Similar to the turnstile case in \cref{sec:f-moment-estimation-and-levy-processes}, one stores the sum at time $t$,
\begin{align*}
    C_t &= \sum_{j=1}^T X_t^{(v_j)} = \sum_{v\in[n]}\mathbf{x}(v)\cdot X_t^{(v)} \in \{0,\infty\},
\end{align*}
where $X^{(v)}$s are i.i.d.~copies of $X$. It is straightforward to see that $S_k$ perfectly simulates a Poissonized \textsf{PCSA} cell \cite{flajolet1985probabilistic,pettie2021information}, with $(\{0,\infty\},+)$ mapped to $(\{0,1\},\vee)$.  See \cref{tab:pure}.
\begin{table}[ht]
    \centering
    \begin{tabular}{c|c}
    \textsf{Bit operation in cardinality sketches} & \textsf{Jumps of a pure killed process}\\
    \hline
        $0\lor0 = 0$ & $0+0=0$ \\
        $0\lor1 = 1$ & $0+\infty=\infty$ \\
        $1\lor0 = 1$ & $\infty+0=\infty$ \\
        $1\lor1 = 1$ & $\infty+\infty=\infty$ \\\hline
    \end{tabular}
    \caption{\textsf{HyperLogLog} and \textsf{PCSA} can be considered as \LevyTower{} sketches with pure killed processes where the bit-or operation $(\{0,1\},\lor)$ is simulated with the extended real numbers $(\{0,\infty\},+)$. Note that, of course, such a reinterpretation does not upgrade \textsf{HyperLogLog} and \textsf{PCSA} to work over turnstile streams. The resulting \LevyTower{} is still incremental-only since $\infty-\infty$ is not defined. }
    \label{tab:pure}
\end{table}

\noindent Since $X_t^{(v)}\in\{0,\infty\}$, it only matters whether $x(v)$ is zero or non-zero, and we have
\begin{align}
    \E e^{i C_t} &=\prod_{v\in[n]}\E e^{i \mathbf{x}(v)\cdot X_t^{(v)}} = \left(\E e^{i X_t}\right)^{\norm{\mathbf{x}}_0}\label{eq:11}.
\end{align}
By \cref{def:pure_killed}, $X_t=0$ with probability $e^{-t}$, in which case $e^{i X_t}=1$, and $X_t=\infty$ with probability $1-e^{-t}$, in which case $e^{iX_t}=0$.\footnote{Here we take the Ces\`aro limit $e^{i\infty}=\lim_{T\to\infty}\frac{1}{T}\int_0^T e^{it}\,dt=0$.} Thus we have $\E e^{i X_t} = e^{-t}$. Inserting this back to \cref{eq:11}, we have
\begin{align*}
   \E e^{i C_t} &= e^{-t\norm{\mathbf{x}}_0}.
\end{align*}
This coincides with the observation in 
\cref{sec:f-moment-estimation-and-levy-processes}, where the estimation target $\norm{\mathbf{x}}_0$ lies right in the exponent of 
$\E e^{iC_t}$. 

We note that \textsf{HyperLogLog} and \textsf{PCSA} both correspond to the \emph{same} pure killed process, but 
with different couplings of cells on different levels. In particular, one can view \textsf{HyperLogLog} as storing cells $(C_{1/2},C_{1/4},C_{1/8},\ldots)$ where the $C_{2^{-j}}$s are sampled from 
\emph{a single process} 
(exactly as in the \LevyTower{}) 
while \textsf{PCSA} stores $(C_{1/2},C_{1/4},C_{1/8},\ldots)$
with the $C_{2^{-j}}$s being sampled 
from independent processes.
Since a pure killed process will remain at $\infty$ after its first jump, the cells $(C_{1/2},C_{1/4},C_{1/8},\ldots)$ 
in \textsf{HyperLogLog} always consist 
of a prefix of $\infty$s  followed by all zeros. 
Therefore, \textsf{HyperLogLog} only needs to store the \emph{length} of the prefix of $\infty$s 
as the counter $M$.

\subsection{Uniform Random Projection and the Integral \LevyTower}\label{sect:xmodp}

Indyk's $F_p$-stable sketches \cite{indyk2006stable} are able to estimate the $F_p$-moment for $p\in(0,2]$ but do not handle the $F_0$-moment. 
Cormode, Datar, Indyk, and Muthukrishnan \cite{cormode2003comparing} approximate the $F_0$-moment by the $F_\alpha$-moment for very small $\alpha>0$. 
This scheme needs a bound on $\|\mathbf{x}\|_\infty$, 
as a single unbounded update can raise the estimate to infinity.
Kane, Nelson, and Woodruff \cite{kane2010optimal} take another approach: subsample the stream to a suitable level, randomly project each element within the field $\Z_p$ for 
a random prime $p >  \epsilon^{-1}\log(\|\mathbf{x}\|_{\infty})$, 
and then use an (insertion-only) 
cardinality estimator.
The random projection approach~\cite{kane2010optimal} 
also requires a bound on $\|\mathbf{x}\|_\infty$.
As an $F_0$-moment sketch it has a certain failure probability (e.g., if $p | \mathbf{x}(v)$ for many $v$)
but it can be viewed as estimating a related quantity: the $f_{F_0,p}$-moment for $f_{F_0,p}(x)=\ind{p\nmid x}$.

We show how the \LevyTower{} reconstructs Kane, Nelson, and Woodruff's \cite{kane2010optimal} 
trick mechanically, 
by \emph{pure computation}. The target function $f(x)=\ind{p\nmid x}$ for $x\in \Z$ can be written as the following \LK-representable function; see \cref{rem:LK-1d-real-special-case}.
\begin{align*}
    f(x) &= \frac{1}{p}\sum_{j=0}^{p-1}(1-\cos(2\pi jx/p)).
\end{align*} 
When $p \mid x$ each of the summands is 
$0$ whereas when $p \nmid x$ the cosine terms sum to $0$.
The \Levy{} process with characteristic exponent $f$ is the compound Poisson process with jumps uniformly chosen from $-2\pi (p-1)/p, \ldots, 0,\ldots,2\pi (p-1)/p $. 
See~\cref{fig:L0}.
The corresponding \LevyTower{} reconstructs the Kane-Nelson-Woodruff sketch~\cite{kane2010optimal}: subsample through Poisson processes and uniformly project over $\Z_p$ with the uniform random jumps. It is straightforward to see that the resulting \LevyTower{} is \emph{integral} in the sense that $S_k\in \{2j\pi/p:j=0,\ldots,p-1\}$ so 
$S_k$ can be identified by a $\Z_p$-value.

\begin{figure}[!ht]
    \centering
    \includegraphics[height=4cm]{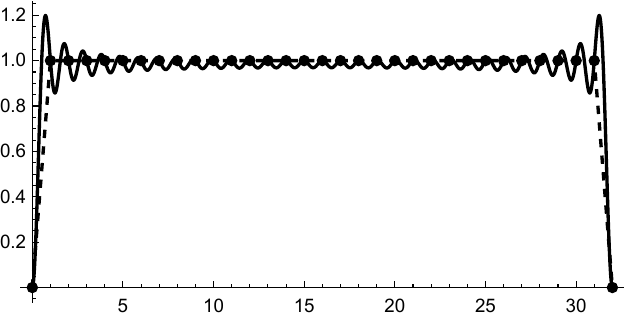}\hfill%
    \includegraphics[height=4cm]{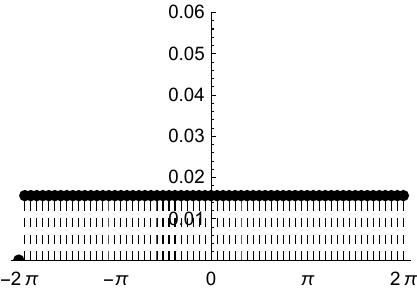}
    \caption{Left: $f_{F_0,32}(x)=\sum_{j=1}^{31}\frac{1}{32} (1-\cos(2\pi j x/32)).$ The black dots mark the values at integer $x$s. Right: The jump distribution of the compound Poisson process $X$ with characteristic exponent $f_{F_0,32}$. The subsampling and uniform random projection tricks used in \cite{kane2010optimal} are recovered from computing the corresponding \Levy{} process.}
    \label{fig:L0}
\end{figure}

\section{Tractability and \Levy{} Processes}\label{sec:tractability}

The most basic complexity-theoretic question in any model of computation is to separate the \emph{tractable} from the \emph{intractable}, e.g.,
to characterize the class 
$\mathbf{P}$ in Turing-machine complexity.
Braverman and Ostrovsky~\cite{braverman2010zero} defined an analogue of $\mathbf{P}$ for $f$-moment estimation.  

\begin{definition}[Tractability]\label{def:tractable}
    A function $f:\Z\to\R$ is \emph{tractable}
    if it is possible to $(1\pm\epsilon)$-approximate
    the $f$-moment of a vector 
    $\mathbf{x}\in\{-n,\ldots,0,\ldots,n\}^n$
    updated in a turnstile stream, with a $\poly(\epsilon^{-1},\log n)$-bit sketch.
\end{definition}

Braverman, Chestnut, Woodruff, and Yang~\cite{braverman2016streaming} relaxed this definition to include $n^{o(1)}\poly(\epsilon^{-1})$-bit sketches, which slightly broadens the class of tractable functions, but does not seem to affect its character very much.  
Braverman and Ostrovsky~\cite{braverman2010zero} 
characterized the set of tractable $f$ that are monotonically non-decreasing on $[0,\infty)$,
and Braverman, Chestnut, Woodruff, and Yang~\cite{braverman2016streaming} 
gave a nearly complete characterization of all 
tractable functions---monotone or not---excluding what they termed ``nearly periodic functions.''
As our results apply to $f$-moments where $f: \R\to\C$ and $f: \R^d\to \C$, we expand the definition of tractability to explicitly include the domain.

\begin{definition}[$M$-Tractability]\label{def:M-tractable}
    Let $M$ be a commutative monoid.
    A function $f: M\to\C$ is \emph{tractable}
    if for any parameter $\epsilon>0$, it is possible to $(1\pm\epsilon)$-approximate
    the $f$-moment of a vector $\mathbf{x}\in M^n$
    updated in an $M$-turnstile stream (\cref{def:M-turnstile}), with a $\poly(\epsilon^{-1},\log n)$-bit sketch.
    (When $M$ is discrete, say $\Z$, we must assume some bound $\|\mathbf{x}\|_\infty < n$.
    When $M$ is continuous let us additionally suppose that there are $\poly(n)$ updates, and each update is presented to $O(\log n)$ bits of precision.)
    Define $\mathcal{T}[M]$ to be the set of all $M$-tractable functions.
\end{definition}

\medskip 

The present work gives us a completely new way to approach the tractability question.  
We have shown that the \LevyTower{} estimates 
every function $f$ that is the characteristic exponent of a \Levy{} process, and such exponents are characterized by the \LK{} representation theorem.  
In this section we investigate the class of 
tractable functions captured by the Braverman et al.~\cite{braverman2016streaming} 
sketch and the \LK-representable functions captured by the \LevyTower{} sketch.
The \LK-class includes many multidimensional functions $f : \R^d\to \C$ outside the scope of Braverman et al.~\cite{braverman2016streaming},
so we shall mainly investigate each classes's coverage of $\mathcal{T}[\Z]$ and 
implications for $\mathcal{T}[\R]$.
In the remainder of this section we present the following observations and results.

\begin{description}
    \item[I.] We show that some easy tests for \emph{in}tractability have a basis in \emph{non}-\LK-representable functions. 

    \item[II.] We prove that a function can be $\Z$-tractable but not $\R$-tractable, that is, $\mathcal{T}[\Z]$ is a strict superset of $\mathcal{T}[\R]$.
    
    \item[III.] We demonstrate that the class of tractable functions implied by the \LevyTower/\LK{} representation theorem includes nearly periodic functions not captured by the $F_2$-heavy hitter framework of \cite{braverman2010zero,braverman2016streaming}.

    \item[IV.] We observe that there exist simple, tractable functions that do not correspond to the characteristic exponent of any \Levy{} process.  Nonetheless, we can prove that such functions are tractable in the \LevyTower/\LK-framework via what we call the \emph{Fourier-Hahn-\Levy} method.  We conjecture that this method is powerful enough to capture \emph{all} $M$-tractable moments, for $M\in \{\Z,\R,\Z^d,\R^d\}$.
\end{description}

\subsection{I: Intractability and Non-\LK-representability}

We give a comparison between one-dimensional real characteristic exponents and existing sketching lower bounds based on 
communication complexity lower bounds. 
Such exponents can be written as $f(x) = Ax^2 +\int_0^\infty(1-\cos(xs))\,\nu(ds)$, since $f$ being real implies that $\nu$ is symmetric; see~\cref{rem:LK-1d-real-special-case}.

\begin{lemma}\label{lem:levy_char}
Let $f(x)=Ax^2 +\int_0^\infty(1-\cos(xs))\,\nu(ds)$. The following statements are true.
\begin{itemize}
\item $f(x)\geq 0$ for any $x\in \R$. [Commentary: functions with both positive and negative values require $\Omega(\poly(n))$-size sketches for constant factor approximations \cite{chestnut2015stream}.]
\item For any $z\in\Z$ and $x\in\R$, $f(zx)\leq z^2 f(x)$. [Commentary: functions increasing faster than quadratic require $\Omega(\poly(n))$-size sketches for constant factor approxiamtions~\cite{alon1996space,bar2004information}.]
\item For any $z\in \R_+$, $f(z)=0$ if and only if $f(x+z)=f(x),\forall x\in \R$, i.e., $f$ is periodic with period $z$. [Commentary: functions that have zeros other than the origin require $\Omega(\poly(n))$-size sketches, unless the function 
is periodic with period $\min\{z>0:f(z)=0\}$ \cite{chestnut2015stream}.]
\end{itemize}
\end{lemma}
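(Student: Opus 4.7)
The plan is to verify the three items directly from the integral formula $f(x)=\frac{1}{2}\sigma^2 x^2 + 2\int_0^\infty(1-\cos(xs))\,\nu(ds)$, relying on pointwise properties of the integrand $1-\cos(\cdot)$ and the fact that $\nu$ is a nonnegative Borel measure on $(0,\infty)$ with $\int\min(s^2,1)\,\nu(ds)<\infty$.

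For (1), note that $\sigma^2 x^2\geq 0$ and $1-\cos(xs)\geq 0$ pointwise, so both terms in $f(x)$ are nonnegative and hence so is $f(x)$.

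For (2), the Gaussian part already gives equality $\sigma^2(zx)^2 = z^2\sigma^2 x^2$. For the jump part I would use the identity $1-\cos\theta = 2\sin^2(\theta/2)$ together with the elementary inequality $|\sin(z\theta)|\leq z|\sin\theta|$ valid for every integer $z$ (easily shown by induction from the addition formula, or by recognising $\sin(z\theta)/\sin\theta$ as a Chebyshev polynomial bounded by $z$). Setting $\theta = xs/2$ yields $\sin^2(zxs/2)\leq z^2\sin^2(xs/2)$, i.e., $1-\cos(zxs)\leq z^2(1-\cos(xs))$. Integrating against $\nu$ and adding the Gaussian part gives $f(zx)\leq z^2 f(x)$.

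For (3), the reverse implication is immediate: periodicity with period $z$ forces $f(z)=f(0)=0$. For the forward implication, suppose $f(z)=0$ for some $z>0$. Since the Gaussian and integral parts are both nonnegative and sum to zero, each vanishes. The vanishing of $\frac{1}{2}\sigma^2 z^2$ with $z>0$ forces $\sigma=0$, and the vanishing of $\int(1-\cos(zs))\,\nu(ds)$ with a nonnegative continuous integrand forces $1-\cos(zs)=0$ for $\nu$-a.e.\ $s$, i.e., $\nu$ is concentrated on $\{s>0 : zs\in 2\pi\Z\} = (2\pi/z)\Z\cap(0,\infty)$. For any such $s$ we have $\cos((x+z)s)=\cos(xs+2\pi k)=\cos(xs)$, so substituting into the formula for $f(x+z)$ gives $f(x+z)=f(x)$ for every $x\in\R$.

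The only real subtlety is the $\nu$-almost everywhere step in (3); everything else is a direct computation. The main step to get right is that the inequality in (2) genuinely requires $z\in\Z$, not $z\in\R$, since $|\sin(z\theta)|\leq z|\sin\theta|$ fails for non-integer $z$.
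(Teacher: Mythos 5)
Your proof is correct in all three parts, and parts (1) and (3) follow essentially the same route as the paper's. The interesting difference is in part (2). The paper proves the key pointwise inequality $1-\cos(zy)\leq z^2(1-\cos y)$ by a calculus argument: it sets $g(y)=z^2(1-\cos y)-(1-\cos(zy))$, computes $g'$ and $g''$, shows $g''\geq 0$ on $[0,\pi/z]$ with $g(0)=g'(0)=0$, handles $[\pi/z,\pi]$ by a direct numerical bound, and then finishes by symmetry and periodicity. You instead use the half-angle identity $1-\cos\theta=2\sin^2(\theta/2)$ to reduce the claim to $|\sin(z\theta)|\leq z|\sin\theta|$ for $z\in\Z$, which you get by a one-line induction from the addition formula (or, equivalently, from the Chebyshev bound $|U_{z-1}|\leq z$). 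Your route is shorter, avoids the piecewise case analysis, and makes the role of the integrality of $z$ completely transparent: it enters only through the induction on $z$, and you correctly flag that the inequality fails for non-integer $z$. One small point in your favor in part (3): you explicitly state the ``$\nu$-almost everywhere'' qualifier when concluding that $\nu$ is concentrated on $(2\pi/z)\Z$, which the paper implicitly assumes; that is the right level of care, since $\nu$ is only a measure and the integrand vanishes only $\nu$-a.e., not identically.
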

\begin{proof}
The first statement is obvious. For the second statement, it suffices to prove that
with $y=xs$, 
\[
(1-\cos(zy))\leq z^2(1-\cos(y)),
\]
for any $y\in \R, z\in \Z$.
When $z=0$, the statement trivially holds. Without loss of generality, assume $z\in\Z_+$ and let $g(y)=z^2 (1-\cos(y))-(1-\cos(zy))$.  Check that $g'(y)=z^2\sin(y)-z\sin(zy)$ and $g''(y)=z^2\cos(y)-z^2\cos(zy)$. Clearly for $y\in[0,\pi/z]$, we have $\cos(zy)\leq \cos(y)$ and thus $g''(y)\geq 0$. Note $g'(0)=0$ and therefore $g(y)\geq 0$ for $y\in[0,\pi/z]$. Now consider $y\in[\pi/z,\pi]$, $z^2(1-\cos(y))\geq z^2(1-\cos(\pi/z)) \geq 2 \geq 1-\cos(zy)$, which implies $g(y)\geq 0$ on $[0,\pi]$. Thus we have $g(y)\geq 0$ for $[\pi,2\pi]$ too by symmetry. The statement is thus proved since $g(y)$ is periodic in $2\pi$.

For the third statement, 
$f(z) = 0$ necessarily implies that $\sigma=0$ and the measure $\nu$ concentrates on $2\pi j/z$ for $j\in \Z$. Thus $f(x)=\sum_{j\in \Z}(1-\cos(2\pi j x/z))\nu(\{2\pi j /z\})$ is periodic with period $z$.
\end{proof}

\subsection{II: Tractability over Integers vs. Reals}\label{sect:tractability-integers-vs-reals}

It is known that $F_p$ is intractable for $p>2$~\cite{ChakrabartiKS03,Ganguly12,LiW13,WoodruffZ12}.
Nonetheless, there are tractable $f$ that grow slightly 
super-quadratically, such as $f(x)=x^2\log^c(1+|x|)$, for $c=O(1)$; see~\cite{braverman2010zero,braverman2016streaming}.
To be more specific, such $f$ are tractable over the integers, 
\emph{but not the reals}, as we now prove.

\begin{theorem}\label{thm:TZ-TR}
    $\mathcal{T}[\Z]\neq \mathcal{T}[\R]$.  Specifically, $f(x) = x^2\log(1+|x|) \in \mathcal{T}[\Z]-\mathcal{T}[\R]$.
\end{theorem}

\begin{proof}
    When $\mathbf{x}\in \{-n,\ldots,n\}^n$ 
    it is known that 
    $(1\pm\epsilon)$-approximating $F_3(\mathbf{x}) = \sum_j |\mathbf{x}(j)|^3$ 
    requires $\tilde{\Omega}_\epsilon(n^{1/3})$ 
    space~\cite{ChakrabartiKS03,Ganguly12,LiW13,WoodruffZ12}.
    Map $\mathbf{x}$ to $\mathbf{y} = \mathbf{x}/n^2$.
    Then 
    \begin{align*}
f(\mathbf{y})   = \sum_j f(\mathbf{y}(j)) 
= \sum_j \mathbf{y}(j)^2\log(1+|\mathbf{y}(j)|)
&= \sum_j \frac{\mathbf{x}(j)^2}{n^4}\log\left(1+\frac{|\mathbf{x}(j)|}{n^2}\right)\\
                &= \sum_j \left(\frac{|\mathbf{x}(j)|^3}{n^6} - \frac{1}{2}\frac{|\mathbf{x}(j)|^4}{n^8}+O\left(\frac{|\mathbf{x}(j)|^5}{n^{10}}\right)\right)\\
                &\in [(1-1/n)n^{-6}F_3(\mathbf{x}), n^{-6}F_3(\mathbf{x})].
    \end{align*}
    Thus any $(1\pm \epsilon)$-approximation of $f(\mathbf{y})$ yields a $(1\pm(\epsilon+1/n))$ approximation of $F_3(\mathbf{x})$.
\end{proof}

\subsection{III: Nearly Periodic Functions}\label{sect:nearlyperiodic}

Braverman, Chestnut, Woodruff, and Yang~\cite[\S 5]{braverman2016streaming}
gave the following example of a 
nearly periodic function that cannot be 
solved in their $F_2$-heavy hitter-based framework, but can be estimated using \emph{ad hoc} algorithmic tricks, 
in $O(\epsilon^{-8}\log^{15} n)$ space. 

\begin{definition}[$g_{np}$, a nearly periodic function {\cite[\S 5]{braverman2016streaming}}]
For $x\in \N$,     $g_{np}(x)=2^{-\tau(x)}$, where $\tau(x)=\max\{j\in\N:2^j|x\}$, i.e., the position of the first ``1'' in the binary representation of $x$.
\end{definition}

The reason that $g_{np}$ cannot be tracked by finding $F_2$-heavy hitters is that $g_{np}(x)$ can occasionally become polynomially small in $x$. In particular, for $k\in \N$, $g_{np}(2^k)=2^{-k}$.
We now demonstrate how the function $g_{np}$ can be estimated \emph{directly} using the \LevyTower.

\medskip 

To prove that $g_{np}\in\mathcal{T}[\Z]$ 
we may assume that $\mathbf{x}\in\{-n,\ldots,0,\ldots,n\}^n$, 
so it suffices to consider a $2^w$-periodic version of $g_{np}$ with $w>\log n$.
\begin{definition}[$2^w$-periodic $g_{np}$]
For $x\in \N$,
\begin{align*}
    g_{np,w}(x) &= g_{np}(x\operatorname{mod} 2^w)
\end{align*}    
\end{definition} 
We show that, indeed, $g_{np}$ corresponds to a \Levy{} process,
specifically a compound Poisson process.

\begin{figure}[!ht]
    \centering
    \includegraphics[height=4cm]{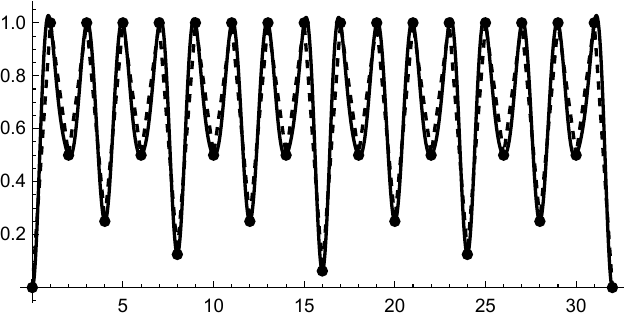}\hfill%
    \includegraphics[height=4cm]{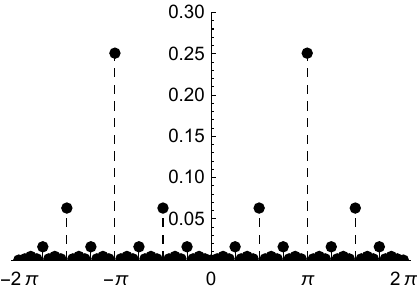}
    \caption{Left: $g_{np,5}(x)=\sum_{j=1}^{31}\frac{2^{2\tau(j)+1}+1}{1536} (1-\cos(2\pi j x/32)).$ The black dots mark the values at integer $x$s. Right: The jump distribution of the compound Poisson process $X$ with characteristic exponent $g_{np,5}$. Such nearly periodic functions do not fit in the $F_2$-heavy-hitter based framework in \cite{braverman2016streaming}. Nevertheless, one may compute the corresponding \Levy{} process and apply the \LevyTower.}
    \label{fig:braver}
\end{figure}
\begin{lemma}\label{lem:gnp}
    Fix $w\in \N$. For any $x\in \N$,
    \begin{align*}
        g_{np,w}(x) &= \sum_{j=1}^{2^w-1}\frac{2^{2\tau(j)+1}+1}{3\cdot 2^{2w-1}} (1-\cos(2\pi j x/2^w)).
    \end{align*}
\end{lemma}

\begin{proof}
    This expression was derived by taking the Fourier transform of $g_{np,w}$
    over $\Z/2^w\Z$ but we shall prove the lemma without going through Fourier transforms.
    When $x=0$, $g_{np,w}(x)=0$ as all the cosines are 1.
    When $x\in(0,2^w)$ write it as $x=y2^r$ where $y$ is odd, 
    so our goal is to show that
    $g_{np,w}(x)=2^{-r}$.
\begin{align*}
    g_{np,w}(x) &= \sum_{j=1}^{2^w-1}\frac{2^{2\tau(j)+1}+1}{3\cdot 2^{2w-1}} (1-\cos(2\pi j x/2^w))\\
                &= \frac{1}{3\cdot 2^{2w-1}}\sum_{s=0}^{w-1}(2^{2s+1}+1)\sum_{\substack{j=z2^s\\z \text{ odd}}} (1-\cos(2\pi yz/2^{w-r-s}))\\
\intertext{Now when $s\geq w-r$ the second sum vanishes as all cosines are 1.  Moreover, $z\in\{1,3,\ldots,2^{w-s}-1\}$ takes on $2^{w-s-1}$ values, 
and since $y$ is coprime to every power of 2, the second sum runs through 
the $2^{w-r-s-1}$ odd residues of $2^{w-r-s}$, $2^r$ times each. Continuing, we have}
    &= \frac{1}{3\cdot 2^{2w-1}}\sum_{s=0}^{w-r-1}(2^{2s+1}+1)\cdot 2^r\sum_{z\in \{1,3,\ldots,2^{w-r-s}-1\}} (1-\cos(2\pi z/2^{w-r-s}))
\intertext{When $w-r-s>1$ the cosines of the odd residues sum to 0, whereas when $w-r-s=1$ there is only one odd residue and $1-\cos(2\pi/2)=2$.}
    &= \frac{1}{3\cdot 2^{2w-1}}\sum_{s=0}^{w-r-1}(2^{2s+1}+1)\cdot 2^r \cdot 
    \left\{\begin{array}{ll}
    2^{w-r-s-1} & \text{ when $w-r-s>1$,}\\
    2           & \text{ when $w-r-s=1$.}
    \end{array}\right.\\
    &=\frac{1}{3\cdot 2^{2w-1}}\left(
    \sum_{s=0}^{w-r-2} (2^{2s+1}+1)2^{w-s-1} + (2^{2(w-r-1)+1}+1)2^{r+1}\right)\\
    &= \frac{1}{3\cdot 2^{2w-1}}\left(
    2^w\sum_{s=0}^{w-r-2} (2^{s}+2^{-s-1}) + (2^{2w-r}+2^{r+1})\right)\\
    &= \frac{1}{3\cdot 2^{2w-1}}\left(
    2^w(2^{w-r-1}-2^{-(w-r-1)}) + (2^{2w-r}+2^{r+1})\right)\\
    &= \frac{3\cdot 2^{2w-r-1}}{3\cdot 2^{2w-1}} 
    = 2^{-r}.
\end{align*}
\end{proof}

By \LK{}, $g_{np,w}$ is the characteristic function of a compound Poisson process 
$(X_t)$ with jump rate
$$\sum_{j=1}^{2^{w}-1}\frac{2^{2\tau(j)+1}+1}{3\cdot 2^{2w-1}} =\frac{2^{2w}-1}{3\cdot 2^{2w-1}},$$
where the jump distribution is symmetric, with jump $J$ distributed as
\begin{align*}
    \pr(J=2\pi j/2^w)=\pr(J=-2\pi j/2^w)&= \frac{1}{2} \frac{2^{2\tau(j)+1}+1}{2^{2w}-1}, \quad \text{for $j=1,\ldots 2^{w}-1.$}
\end{align*}

 Moreover, the parameter $w$ for the random jump induced by $g_{np,w}$ has a neat algorithmic interpretation as the max recursion depth if one implements each random jump by a recursive algorithm\footnote{(Halt with prob.~1/2; go left and repeat with prob.~1/4; and go right and repeat with prob.~1/4; see \cref{fig:braver}).}. 
 The case where $w\to\infty$ corresponds to the algorithm without depth limit. 
 Let $\mathbb{B} \subset [0,1)$ be the set of all real fractions with finite binary representations. Note that the jump distribution converges pointwisely to
\begin{align*}
    \pr(J=2\pi x)=\pr(J=-2\pi x)&=\frac{1}{2} 2^{1-2\tau_*(x)} = 2^{-\tau_*(x)},
\end{align*}
for any $x\in \mathbb{B}$, where $\tau_*(x)$
is the length of the representation of $x$, e.g.,
$\tau_*(3/8) = \tau_*((0.011)_2)=3$.  
This jump distribution can be simulated in unbounded input streams as follows.
Let $A=\sum_{j=1}^\infty A[j]2^{-j}$
be uniform in $[0,1]$,
$T \sim \Geometric(1/2)$, 
and $\xi\in\{-1,1\}$ be Rademacher.
Then the jump $J$ is distributed as:
\begin{align*}
    J \sim 2\pi\, \xi \left(\sum_{j=1}^{T-1}A[j] 2^{-j} + 2^{-T}\right).
\end{align*}

\subsection{IV: The Fourier-Hahn-\Levy{} Method} \label{sec:fourier-hahn-levy}

The example of $g_{np}$ shows that it is possible to \emph{systematically} sketch nearly periodic functions by sketching periodic functions and letting the period go to infinity. (See Wang~\cite{Wang25} for other examples.)
Given the success of the \LevyTower{} in estimating both
``standard'' $f$-moments and exotic ones like the $g_{np}$-moment, it is natural to conjecture that the \LK{} representation theorem actually defines
the set of tractable functions $f$.
Unfortunately, reality is not quite this clean, and indeed, this simplistic conjecture is \emph{absolutely false}.  
We begin by giving a simple function $f$ that is obviously tractable, and obviously not \LK-representable.  
Nonetheless, the conjecture can be salvaged by encoding $f$ as the difference between 
two \LK-representable functions, 
via what we call the 
\emph{Fourier-Hahn-\Levy} method.

\begin{example}[The 0-1-5 Problem]
Consider the function $f : \Z\to\R_+$, where 
\[
f(x) = \left\{\begin{array}{ll}
0 & \text{ if $x=0$,}\\
1 & \text{ if $|x|=1$,}\\
5 & \text{ if $|x|>1$.}
\end{array}\right.
\]
The $f$-moment is easy to estimate to within a $1\pm \epsilon$ factor in $O(\epsilon^{-2}\polylog(n))$ space, 
using $F_0$-sampling~\cite{CormodeF14}. 
However, no \Levy{} process has its characteristic function equal to $f$, for otherwise we would have $f(2\cdot 1)\leq 2^2 f(1)$ by \cref{lem:levy_char}, 
which implies $5\leq 4$, a contradiction.
\end{example}

\subsubsection{The Periodic Fourier-Hahn-\Levy{} Method}

Let us now extend our method to handle 
the \emph{0-1-5 problem} and others.
Suppose that $\|\mathbf{x}\|_\infty < n$.
It suffices to consider a periodic version
$f_p$ of $f$ with period $p\geq 2n$ 
that agrees with $f$ on the range $\{-n,\ldots,n\}$.  For example, 
in the case of the 0-1-5 problem we would 
define $f_p$ as
\[
f_p(x) = \left\{\begin{array}{ll}
0 & \text{ if $|x|_p = 0$,}\\
1 & \text{ if $|x|_p = 1$,}\\
5 & \text{ if $|x|_p > 1$,}
\end{array}\right.
\]
where $|x|_p = \min\{x \operatorname{mod} p, p - x \operatorname{mod} p\}$.
The three-step \emph{Fourier-Hahn-\Levy{} method} 
is executed as follows.
\begin{description}
    \item[Fourier Transform.] Decompose $f(x)-f(0)=\sum_{j=1}^{p-1}(1-\cos(2\pi  jx /p)) \hat{f}(j)$ where the coefficient $\hat{f}(j)$ can be computed by taking the Fourier transform of $f$.\footnote{We have $f(x) - f(0) = \sum_{j=1}^{p-1}\tilde{f}(j)(e^{2\pi i jx/p}-1)$.
    Since $f$ is real and symmetric $\tilde{f}$ is real and $\tilde{f}(j)=\tilde{f}(p-j)$, 
    so we can pair up the summands $j$ and $p-j$ as follows:
    $\tilde{f}(j)(e^{2\pi i jx/p}-1) + \tilde{f}(p-j)(e^{-2\pi i jx/p}-1) = -2\tilde{f}(j)(1-\cos(2\pi jx/p)) = -2\tilde{f}(p-j)(1-\cos(2\pi jx/p))$.  
    Setting $\hat{f}(j) = -\tilde{f}(j)$ we have $f(x) - f(0) = \sum_{j=1}^{p-1} (1-\cos(2\pi jx/p))\hat{f}(j)$.}
    Note that $\hat{f}$ is real since we have assumed $f$ to be real and symmetric.
    \item[Hahn Decomposition.] Decompose $\hat{f}=\hat{f}_+-\hat{f}_-$ where $\hat{f}_+$ and $\hat{f}_-$ are non-negative real functions.
    \item[\Levy{} Process Simulation.] Define functions $f_{+},f_{-}$ as the inverse Fourier transform of $\hat{f}_+,\hat{f}_-$ respectively. 
    Since the Fourier transform is linear, we have $f=f_+ -f_-$. By construction, $f_+,f_-$ are each characteristic exponents of some \Levy{} process. Estimate those function moments separately with \LevyTower{} sketches parameterized by $m$, 
    and take their difference to yield an estimate of the $f$-moment.
\end{description}

\begin{lemma}\label{lem:FHL}
Given any symmetric function 
$f:\Z \to \R_+$ with $f(0)=0$ and any stream 
$\mathbf{x} \in \Z^n$, 
let $\widehat{f(\mathbf{x})}$ be the estimator derived from 
the Fourier-Hahn-\Levy{} method with 
parameter $p>2\|\mathbf{x}\|_\infty$.
Then $\widehat{f(\mathbf{x})}$ has error
\begin{align*}
\left|\widehat{f(\mathbf{x})}-f(\mathbf{x})\right| = O\left(\frac{|f_{+}(\mathbf{x})|+|f_{-}(\mathbf{x})|}{\sqrt{m}}\right),
\end{align*}
with probability at least 98/100.
\end{lemma}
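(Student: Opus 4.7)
The plan is to reduce to two independent applications of the \LevyTower{} estimator---one per piece of the Hahn split---and combine by linearity. Because the Fourier and Hahn steps have already done the structural work, the proof is essentially ``linearity plus triangle inequality plus union bound.''

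First I verify that $f_{\widehat{+}}$ and $f_{\widehat{-}}$ are genuine characteristic exponents of L\'evy processes on $\R$. Since $\hat f_+,\hat f_-\ge 0$ pointwise, and since $f$ is symmetric (so $\hat f$ is real), both
\[
f_{\widehat{+}}(x)=\sum_{j=0}^{p-1}\bigl(1-\cos(2\pi j x/p)\bigr)\hat f_+(j),\qquad
f_{\widehat{-}}(x)=\sum_{j=0}^{p-1}\bigl(1-\cos(2\pi j x/p)\bigr)\hat f_-(j)
\]
are non-negative linear combinations of the symmetric compound-Poisson exponents tabulated in \cref{tab:common_process}, and are therefore themselves characteristic exponents of symmetric compound Poisson processes (this is exactly \cref{thm:lk} applied with a purely atomic $\nu$ supported on $\{\pm 2\pi j/p\}_{j=1}^{p-1}$). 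By linearity of the inverse Fourier transform, $f=f_{\widehat{+}}-f_{\widehat{-}}$ pointwise on $\{-p+1,\dots,p-1\}$; summing over $v\in[n]$ gives $f(\mathbf{x})=f_{\widehat{+}}(\mathbf{x})-f_{\widehat{-}}(\mathbf{x})$.

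Next I maintain two independent \LevyTower{} sketches of accuracy parameter $m$, one parameterized by the L\'evy process with characteristic exponent $f_{\widehat{+}}$ and one by $f_{\widehat{-}}$, feeding every update to $\mathbf{x}$ into both sketches. Let $\widehat{f_{\widehat{+}}(\mathbf{x})}$ and $\widehat{f_{\widehat{-}}(\mathbf{x})}$ be the outputs of the estimation procedure from \cref{sec:levy_tower}. Applied once to each, the \LevyTower{} estimation theorem gives
\[
\Bigl|\widehat{f_{\widehat{+}}(\mathbf{x})}-f_{\widehat{+}}(\mathbf{x})\Bigr|\le O\!\bigl(|f_{\widehat{+}}(\mathbf{x})|/\sqrt{m}\bigr),\qquad
\Bigl|\widehat{f_{\widehat{-}}(\mathbf{x})}-f_{\widehat{-}}(\mathbf{x})\Bigr|\le O\!\bigl(|f_{\widehat{-}}(\mathbf{x})|/\sqrt{m}\bigr),
\]
each with failure probability at most $1/100$. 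A union bound makes both simultaneously valid with probability at least $98/100$. Output $\widehat{f(\mathbf{x})}\bydef\widehat{f_{\widehat{+}}(\mathbf{x})}-\widehat{f_{\widehat{-}}(\mathbf{x})}$; the triangle inequality delivers
\[
\Bigl|\widehat{f(\mathbf{x})}-f(\mathbf{x})\Bigr|\le O\!\bigl((|f_{\widehat{+}}(\mathbf{x})|+|f_{\widehat{-}}(\mathbf{x})|)/\sqrt{m}\bigr),
\]
which is exactly the claimed bound.

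The only step requiring care is the verification that the Hahn pieces yield bona fide characteristic exponents, so that \cref{thm:lk} and the \LevyTower{} machinery apply as black boxes; after that it is pure linearity plus a two-event union bound. A minor technical caveat is the same dynamic-range assumption $|f_{\widehat{\pm}}(\mathbf{x})|\in[1,\poly(n)]$ needed to invoke \cref{thm:main:LevyTower}; this is why the error is stated in terms of the sum $|f_{\widehat{+}}(\mathbf{x})|+|f_{\widehat{-}}(\mathbf{x})|$ rather than $|f(\mathbf{x})|$, since cancellation between the two pieces can make $f(\mathbf{x})$ much smaller than either summand---and this is precisely the reason why, as the following discussion will make clear, the Fourier-Hahn-L\'evy{} method alone does not settle the tractability question.
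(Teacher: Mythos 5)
Your proof is correct and takes exactly the route the paper's description of the Fourier-Hahn-L\'evy method implies (the paper states the lemma without writing out a separate proof, since it follows immediately from the three-step method): verify the Hahn pieces are genuine characteristic exponents via the \LK{} representation with atomic $\nu$, run two independent \LevyTower{}s, union-bound the two $1/100$ failure events to get $98/100$, and combine by the triangle inequality. Your closing remark about the dynamic-range caveat and about cancellation preventing a bound in terms of $|f(\mathbf{x})|$ alone is also the correct reading of why the lemma is stated with $|f_{\widehat{+}}(\mathbf{x})|+|f_{\widehat{-}}(\mathbf{x})|$ on the right-hand side.
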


\begin{figure}
    \centering
    \includegraphics[scale=0.6]{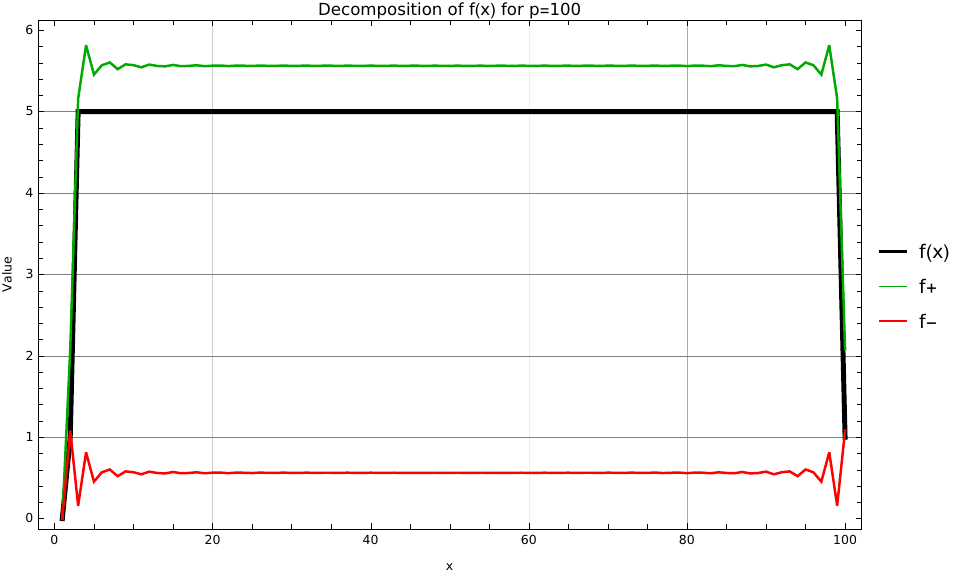}
    \caption{Here $f$ is the symmetric, periodic version of the 0-1-5 
    function over $\Z/p\Z$, namely 
    $f(0)=0, f(1)=f(99)=1, f(x)=5$ for $x\in [2,98]$.
    It is decomposed into $f = f_{+}-f_{-}$, 
    where in this case $f_{+}+f_{-}=\Theta(f)$.}
    \label{fig:FHL-015}
\end{figure}

If $f$ is the characteristic exponent of a \Levy{} process, 
then $f=f_{+}$ and $f_{-}=0$. 
This case was already considered, where we get an asymptotically 
unbiased estimate 
with multiplicative error. 
Clearly, whenever $|f_{+}(\mathbf{x})|+|f_{-}(\mathbf{x})| = O(|f(\mathbf{x})|)$, the Fourier-Hahn-\Levy{} method \emph{still} estimates the $f$-moment 
to within a $(1\pm\epsilon)$-factor, 
when $m=\Theta(\epsilon^{-2})$.
This handles the \emph{0-1-5 problem}, among others; 
see \cref{fig:FHL-015}.

\medskip 

\begin{figure}
    \centering
    % Top Row (NW and NE)
    \begin{subfigure}[b]{0.47\textwidth}
        \includegraphics[width=\textwidth]{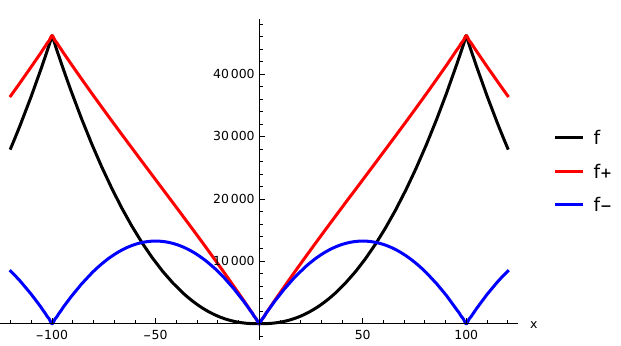}
        \caption{Decomposition of $f=f_+ - f_-$.}
        \label{fig:nw}
    \end{subfigure}
    \hfill
    \begin{subfigure}[b]{0.47\textwidth}
        \includegraphics[width=\textwidth]{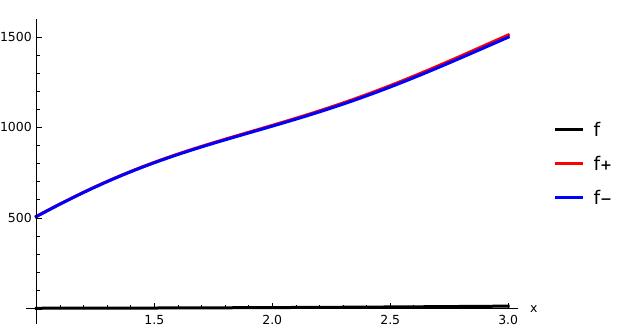}
        \caption{Decomposition of $f = f_+ - f_-$ over $[1,3]$.}
        \label{fig:ne}
    \end{subfigure}

    \vspace{1cm} % Vertical spacing between rows

    % Bottom Row (SW and SE)
    \begin{subfigure}[b]{0.47\textwidth}
        \includegraphics[width=.86\textwidth]{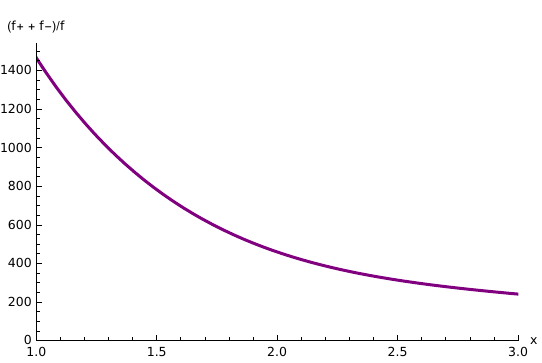}
        \caption{Ratio $(f_+ + f_-)f$ over $[1,3]$.}
        \label{fig:sw}
    \end{subfigure}
    \hfill
    \begin{minipage}[b]{0.45\textwidth}
        \centering
        \caption{The (periodic) Fourier-Hahn-\Levy{} method applied to $f(x) = x^2\log(1+|x|)$ over the domain $[-100,100]$.  Ther periodic version of $f$ is over $\Z/200\Z$.}
        \label{fig:FHL-f}
        \vspace{1.5cm} % Adjust this to align the caption vertically with the SW image
    \end{minipage}

\end{figure}

Let us explore how the Fourier-Hahn-\Levy{} method performs on functions 
that grow slightly faster than quadratic, say $f(x)=x^2\log(1+|x|)$.
\cref{fig:FHL-f} illustrates the decomposition of 
the periodic form $f_p(x) = |x|_p^2\log(1+|x|_p)$, 
where $n=100, p=200$, which agrees with $f$ on $\{-n,\ldots,n\}$. This decomposition has an impractically large ratio 
$\max_{x \in [1,n]} (f_+(x) + f_-(x))/f(x)$, 
which is over $1{,}400$ when $x=1$.

\medskip 

A better approach uses the fact that real \LK-representable functions are of the form
$Ax^2 + \int_0^{\infty} (1 - \cos(xs))\nu(ds)$.  
When $|x|\leq n, p=2n$, 
we may write $f(x) = x^2\log(1+n) - g_p(x)$, 
where $g_p(x) = |x|_p^2\log(\frac{1+n}{1+|x|_p})$ is non-negative over the relevant range $|x|\leq n$,
and $x^2\log(1+n)$ is already \LK-representable, $\log(1+n)$ being constant. 
Applying the periodic Fourier-Hahn-\Levy{} method to $g=g_p$ over 
$\Z/p\Z$, yields $g = g_+ - g_-$, where
$g_+,g_-$ are \LK-representable. 
Since \LK-representable functions are closed under 
addition we can decompose $f=f_+-f_-$ into \LK-representable functions such that
whenever $|x|\leq n$,
\[
f(x) = x^2\log(1+n) - g(x) 
     = (x^2\log(1+n) + g_-(x)) - g_+(x)
     = f_+(x) - f_-(x),
\]
where $f_+(x) = x^2\log(1+n) + g_-(x)$ and $f_-(x) = g_+(x)$.  
Note that $g_-$ and $g_+ = f_-$ are periodic, 
whereas $f_+$ is not. 
As one can see in \cref{fig:FHLplus}
this mixed-periodic decomposition 
has a substantially better ratio
$(f_+(x)+f_-(x))/f(x)$, but it is still impractical.

\begin{figure}
    \centering
    % Top Row (NW and NE)
    \begin{subfigure}[b]{0.48\textwidth}
        \includegraphics[width=\textwidth]{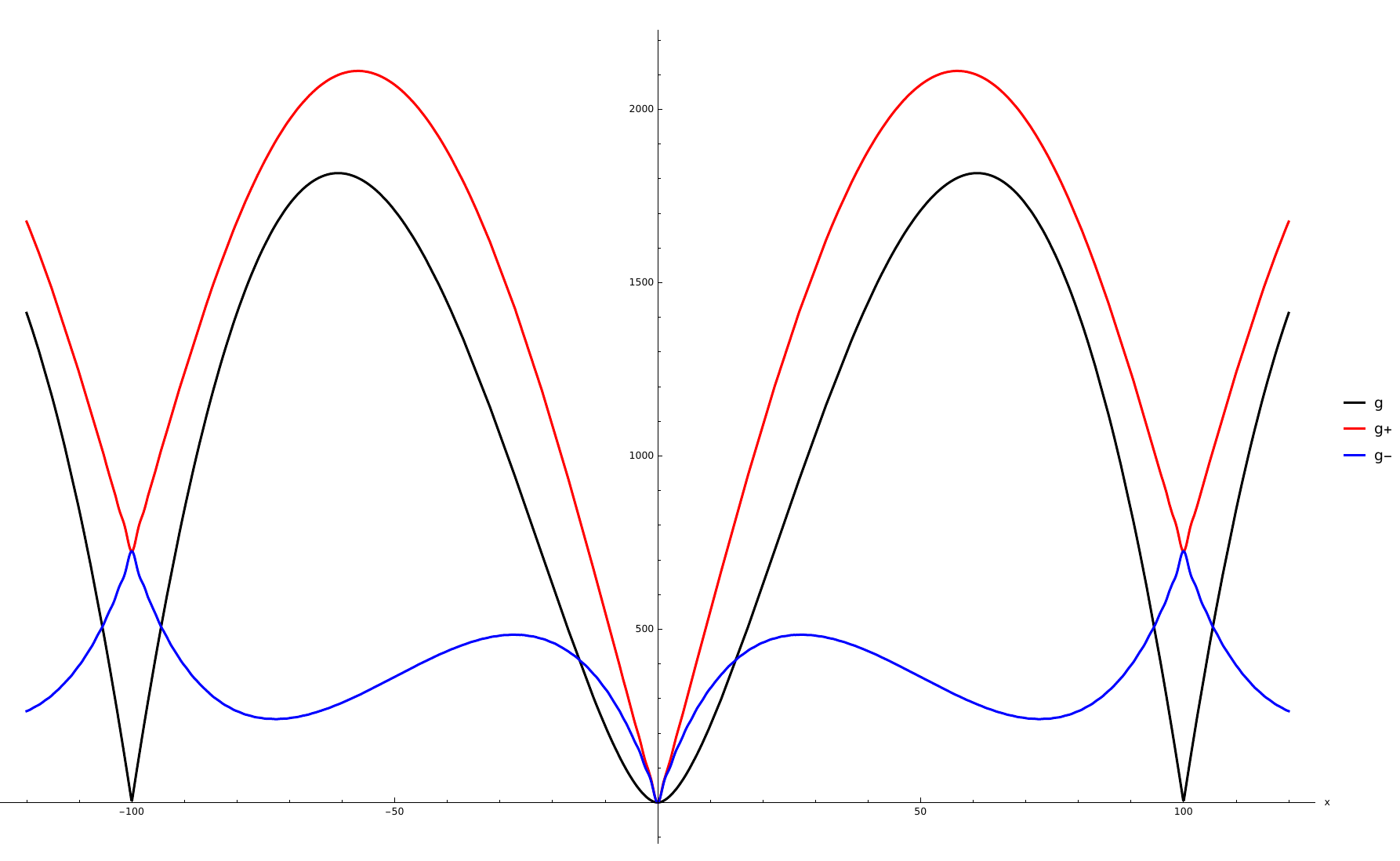}
        \caption{Decomposition of $g=g_+ - g_-$.}
        \label{fig:nw}
    \end{subfigure}
    \hfill
    \begin{subfigure}[b]{0.48\textwidth}
        \includegraphics[width=\textwidth]{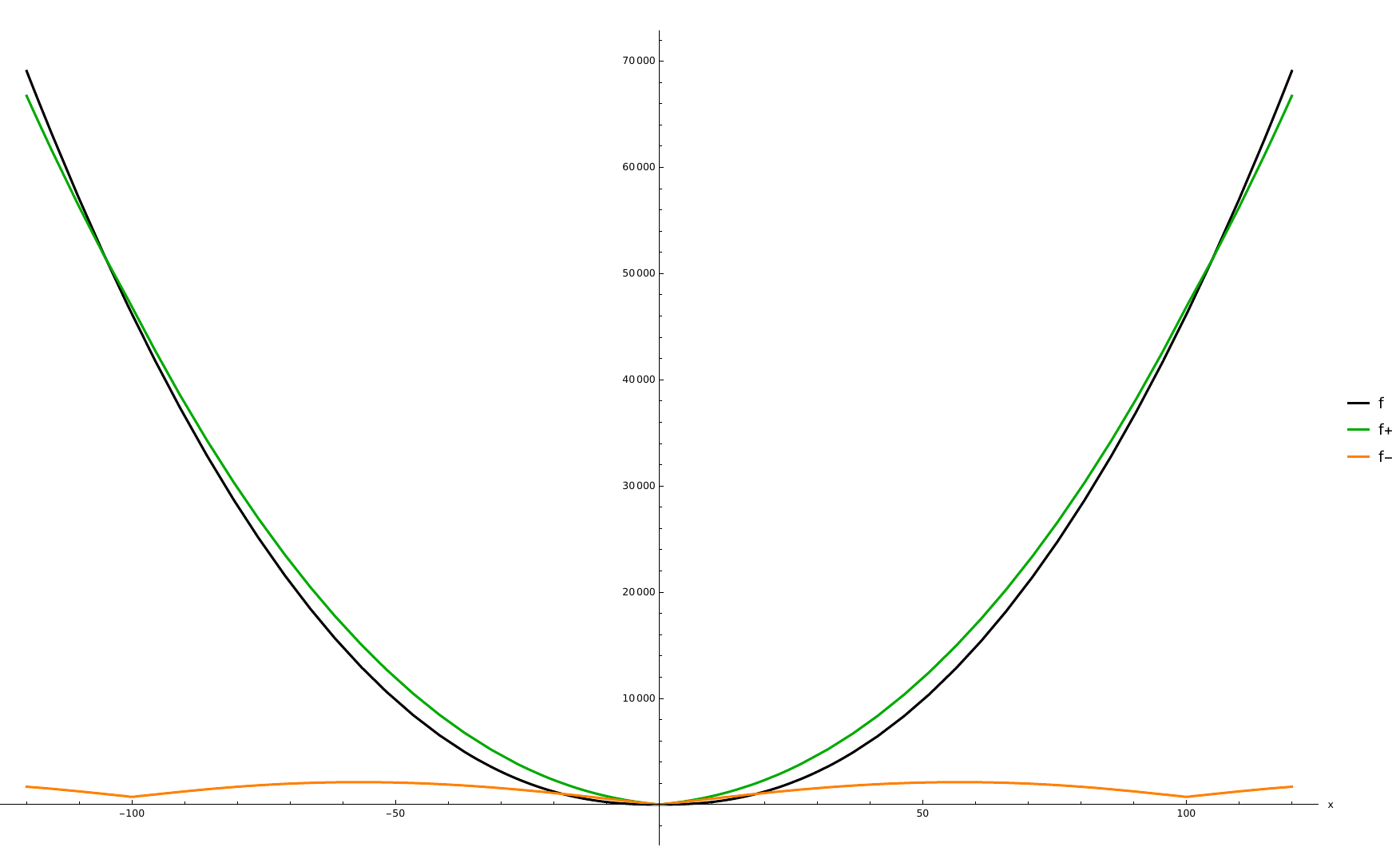}
        \caption{Decomposition of $f = f_+ - f_-$.}
        \label{fig:ne}
    \end{subfigure}

    \vspace{1cm} % Vertical spacing between rows

    % Bottom Row (SW and SE)
    \begin{subfigure}[b]{0.48\textwidth}
        \includegraphics[width=.86\textwidth]{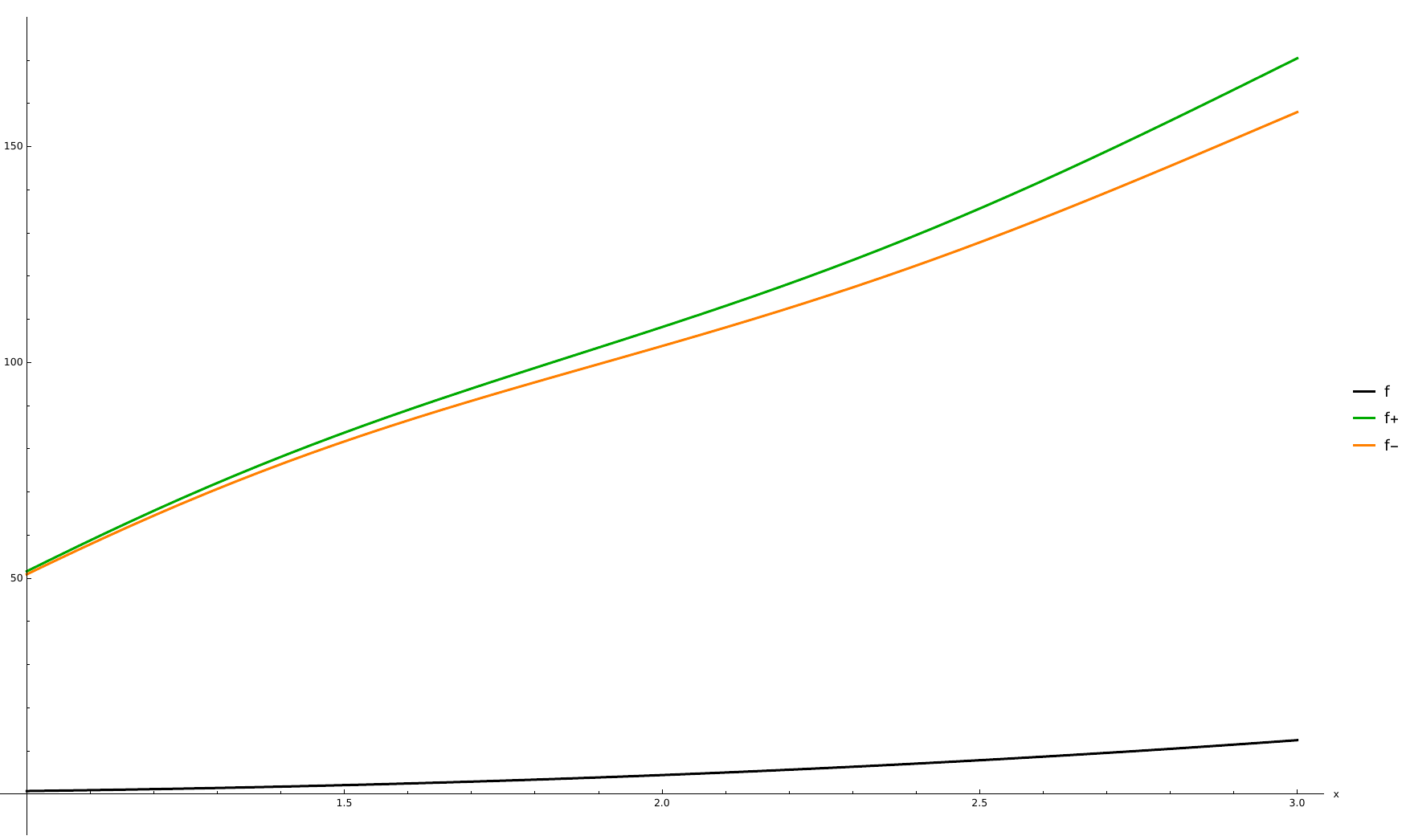}
        \caption{Decomposition of $f=f_+ - f_-$ over $[1,3]$.}
        \label{fig:sw}
    \end{subfigure}
    \hfill
    \begin{subfigure}[b]{0.48\textwidth}
        \includegraphics[width=.86\textwidth]{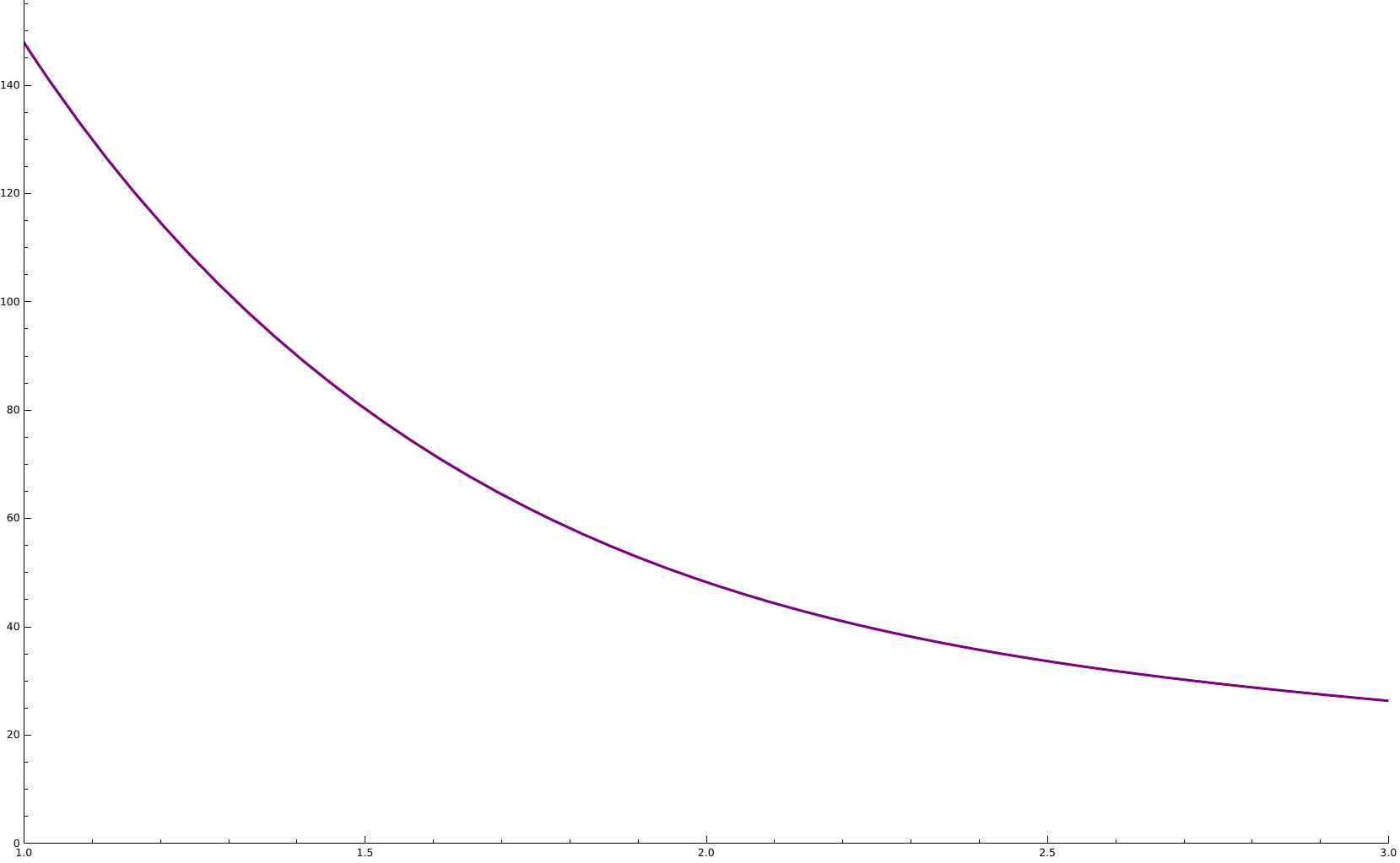}
        \caption{Ratio $(f_+ + f_-)f$.}
        \label{fig:se}
    \end{subfigure}
    
    \caption{An improved Fourier-Hahn decomposition of $f(x)=x^2\log(1+|x|)$ over $|x|\leq n$, $n=100$. 
    We write $f(x)=x^2\log(1+n) - g(x)$, and apply the Fourier-Hahn-\Levy{} method to decompose $g$ over $\Z/(2n)\Z$.}
        \label{fig:FHLplus}
\end{figure}

\subsubsection{The Approximate Fourier-Hahn-\Levy{} Method}

At this point we realize that it is counterproductive to consider only \emph{exact} 
decompositions of $f = f_+ - f_-$.  In order to $(1\pm \epsilon)$-approximate 
the $f$-moment 
$f(\mathbf{x})$ 
with a $\poly(\epsilon^{-1},\log n)$-bit sketch,
it suffices to find a function 
$g = g_+ - g_-$ 
such that $g_+,g_-$ are 
\LK-representable 
and for all relevant $|x|<n$, 
$g(x) \in [(1-\epsilon)f(x),(1+\epsilon)f(x)]$ 
is a good approximation of $f$
and
$g_+(x) + g_-(x) \leq \poly(\epsilon^{-1},\log n)f(x)$
is not \emph{too} much larger than $f$.
(It is not necessary that $g$ be periodic, nor that it have any prescribed behavior outside of $|x|<n$.)
Rather than do this analytically, we can generate $g,g_+,g_-$ by solving 
a linear program encoding the relevant constraints.  
Let $(\omega_k)$ be a sufficiently dense set
of frequencies.  
For approximating $f$, $\omega_k = (1+\epsilon)^{-k}$ for $k\in \{-\Omega(\epsilon^{-1}),\ldots,0,\ldots,\log_{1+\epsilon} n+O(1)\}$ 
suffices.  
Let $u_k,v_k$ be the non-negative coefficients of $g_+,g_-$ corresponding to $\omega_k$, and $u_*,v_*$ be the coefficients of $x^2$.  In other words, $g_+,g_-$ are
\[
g_+(x) = u_* x^2 + \sum_k u_k(1-\cos(\omega_k x))
\qquad \text{and}
\qquad
g_-(x) = v_* x^2 + \sum_k v_k(1-\cos(\omega_k x)).
\]
The linear program is then:
\begin{align*}
\text{Minimize $\rho$ subject to:} \qquad \qquad \qquad \qquad \qquad \qquad \qquad \qquad \quad&\\
\forall x\in \{1,\ldots,n\}.\quad (u_*-v_*)x^2 + \sum_{k} (u_k - v_k)(1-\cos(\omega_k x)) &\in [(1-\epsilon)f(x),(1+\epsilon)f(x)],\\
\forall x\in \{1,\ldots,n\}.\quad  (u_* + v_*)x^2 + \sum_{k} (u_k + v_k)(1-\cos(\omega_k x)) &\leq \rho f(x),\\
u_k, v_k &\geq 0.
\end{align*}

\begin{figure}
    \centering
    \begin{tabular}{ll}
    \includegraphics[height=0.22\linewidth]{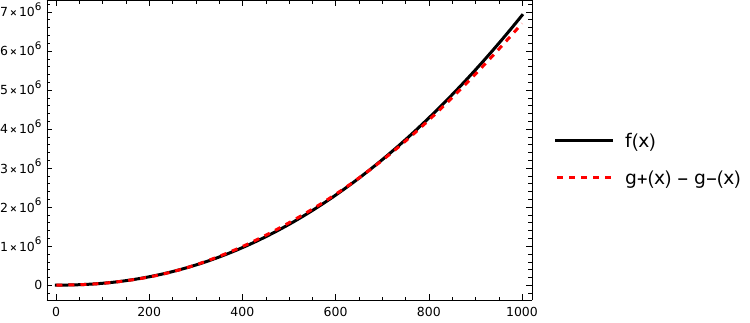} 
&   \includegraphics[height=0.22\linewidth]{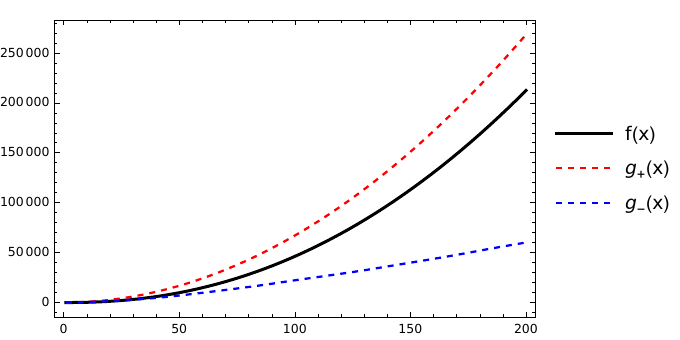}\\
\multicolumn{1}{c}{\small (a)}
& \multicolumn{1}{c}{\small (b)}\\
&\\
    \hcm{.3}\includegraphics[height=0.232\linewidth]{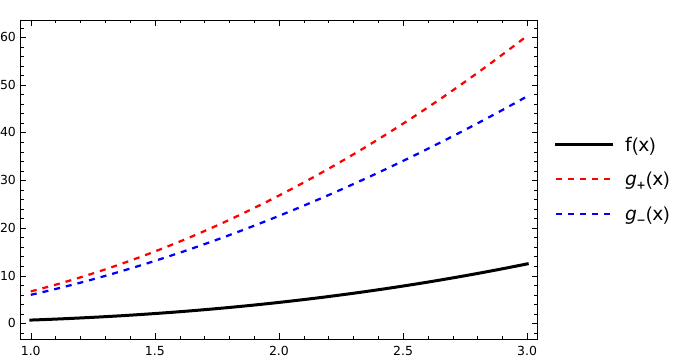}
&   \hcm{.3}\includegraphics[height=0.22\linewidth]{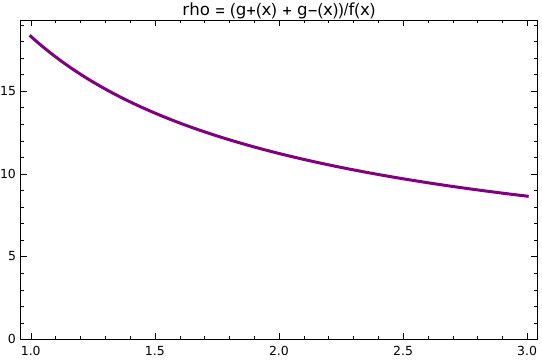}\\ 
\multicolumn{1}{c}{\small (c)}
& \multicolumn{1}{c}{\small (d)}\\
&\\
    \includegraphics[height=0.22\linewidth]{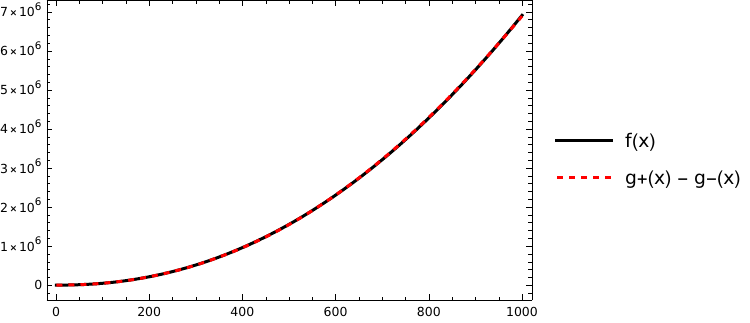} 
&   \includegraphics[height=0.22\linewidth]{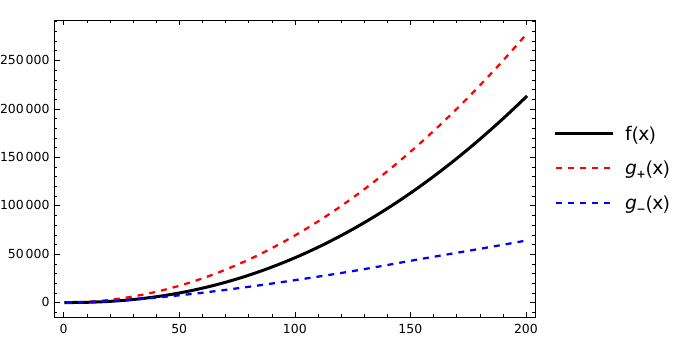}\\
\multicolumn{1}{c}{\small (e)}
& \multicolumn{1}{c}{\small (f)}\\
&\\
    \hcm{.3}\includegraphics[height=0.232\linewidth]{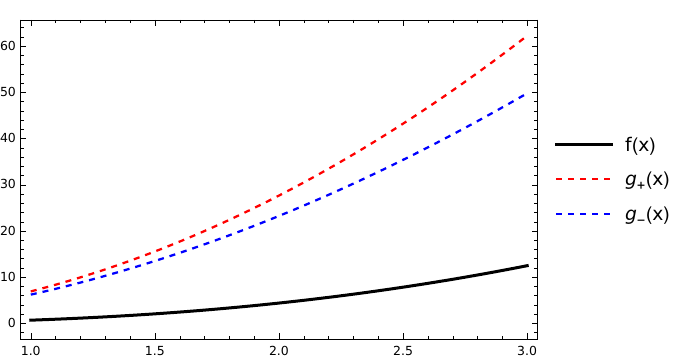}
&   \hcm{.3}\includegraphics[height=0.22\linewidth]{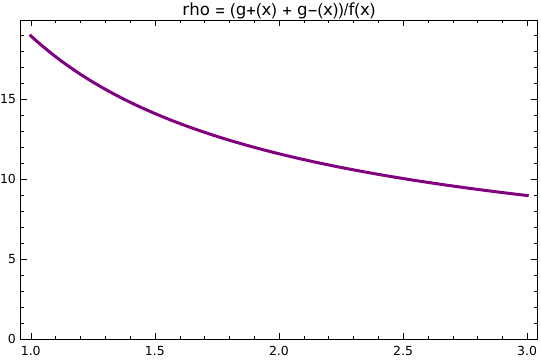}\\
\multicolumn{1}{c}{\small (g)}
& \multicolumn{1}{c}{\small (h)}\\
    \end{tabular}
    \caption{Plots (a)--(d) present the results of the LP construction of $g_+,g_-$ with $n=1000,\epsilon=0.03$ while plots (e)--(h)
    are for $n=1000,\epsilon=0.003$.  Plots (d,h) show that the \emph{cost} of the decomposition $(g_+(x)+g_-(x))/f(x) \leq \rho$ does not 
    depend heavily on $\epsilon$, for $\epsilon$ in this range.}
    \label{fig:LP-results}
\end{figure}

Figure~\ref{fig:LP-results} illustrates 
the results of this LP when $n=1000,\epsilon=0.03$,
plots (a)--(d), 
and $n=1000, \epsilon=0.003$, plots (e)--(h).
As one can see, the ratio $\rho = \max_{x \in [1,n]} (g_+(x)+g_-(x))/f(x)$ of these solutions (plots (d,h)) 
are much more reasonable than the (exact)
applications of the periodic or mixed-periodic 
Fourier-Hahn-\Levy{} method.

\section{Conclusion}\label{sec:conclusion}

In this paper we introduced a new way to study the \emph{tractability question} in data sketches, 
which has historically been studied alongside 
the \emph{universality question}~\cite{braverman2010zero,braverman2016streaming,BravermanC15,Wang25}.

\begin{description}
    \item[Tractability Question.] What is the class $\mathcal{T}$ of $f$-moments that can be $(1\pm\epsilon)$-estimated in $\poly(\epsilon^{-1},\log n)$ space?

    \item[Universality Question.] A sketch is 
    \emph{$\mathcal{C}$-universal} if it can $(1\pm \epsilon)$-estimate
    the $f$-moment, for any $f\in\mathcal{C}$.  Is there a $\mathcal{T}$-universal sketch?
\end{description}

We have demonstrated that the \LevyTower{} parameterized by a suitable \Levy{} process $(X_t)$ can estimate the $f_X$-moment, 
where $f_X$ is the characteristic exponent of $X$.  Moreover,
many existing sketches can be reinterpreted as sampling from \Levy{} processes.  At first glance this approach seems suited to exploring the tractability question, but \emph{ill-suited} to the universality question, for how can the \LevyTower$(X)$ be $\mathcal{T}$-universal if it is only built for estimating a \emph{single} $f_X$-moment?  
An unexpected outcome of this work is 
the revelation that \emph{some} 
\Levy{} processes $(X_t)$ ``leak'' lots of information about $f$-moments unrelated to $f_X$.  
In particular, when a characteristic exponent $f : \mathbb{R}\to\mathbb{R}^+$ is real, it can be expressed as
$f(x) = Ax^2 + \int_0^\infty (1-\cos(xs)\nu(ds))$ for some real $A$ and measure $\nu$; see \cref{rem:LK-1d-real-special-case}, and \cref{sect:xmodp,sect:nearlyperiodic}
for two examples.  In a companion paper by 
the second author, Wang~\cite{Wang25} 
demonstrates that a data structure similar to \LevyTower{} parameterized by a symmetric Poisson process (\textsf{SymmetricPoissonTower}) 
can estimate $F_2$-moments, 
all the \emph{basis moments} $(f_s)_{s>0}$, 
where $f_s(x)=1-\cos(sx)$,
and any linear mixture of the basis moments.
This sketch is universal for all the 
usual tractable real moments, and ``natively'' handles nearly periodic $f$-moments not amenable to $F_0$-sampling or $F_2$-heavy hitter sampling~\cite{braverman2010zero,braverman2016streaming,BravermanC15}.

\subsection{Conjectures}

Is the \LK{} representation theorem the final answer to the fundamental tractability~\cite{braverman2010zero,braverman2016streaming,BravermanC15} question? 
In \cref{sect:tractability-integers-vs-reals} we muddied the waters a bit by showing that
$\mathcal{T}[\Z]$ and $\mathcal{T}[\R]$ are different classes, i.e., tractability with respect to $\mathbf{x}\in \Z^n$ vs. $\mathbf{x}\in \R^n$ are two different concepts.
In \cref{sec:fourier-hahn-levy} we showed there exist simple, tractable functions with no \LK-representation, then demonstrated that such functions can still be handled in the \LevyTower/\LK{} framework via the \emph{periodic Fourier-Hahn-\Levy} method or the more general \emph{approximate Fourier-Hahn-\Levy} method. 
We believe the Fourier-Hahn-\Levy{} methods
are sufficiently powerful to capture all tractability classes $\mathcal{T}[M]$, where $\mathbf{x}\in M^n$ 
and $M$ is any of the domains of interest, 
$M\in \{\Z,\R,\Z^d,\R^d\}$.  We state this conjecture formally for $M=\Z$ or $\R$.

\begin{conjecture}[Tractability]\label{conjecture:FourierHahnLevy}
Suppose that $f\in\mathcal{T}[\Z]$ (respectively, $f\in\mathcal{T}[\R]$).
For any $\epsilon>0$ and parameter $n$ 
controlling the length of $\mathbf{x}\in \Z^n$ (respectively, $\R^n$) and $\|\mathbf{x}\|_\infty < n$, 
there exists a $g = g_+ - g_-$ satisfying the following criteria:
\begin{itemize}
\item $g_+$ and $g_-$ are \LK-representable, 
\item $g(x)\in [(1-\epsilon)f(x),(1+\epsilon)f(x)]$
for all $x\in \{-n,\ldots,0,\ldots,n\}$
(respectively, $x\in [-n,n]$), and
\item $(g_+(x) + g_-(x))/f(x) = O(\poly(\epsilon^{-1},\log n))$.
\end{itemize}
\end{conjecture}

It is possible that the connection between space-efficient $G$-samplers and the Laplace exponents of subordinators\footnote{(non-negative, one-dimensional \Levy{} processes)} is also natural, but here we have to be much more stringent in our space requirements to make a plausible conjecture.

\begin{conjecture}[Sampling]\label{conj:G-sampling}
    Suppose there is a $O(\log n)$-bit data structure for 
    $G$-sampling from a vector $\mathbf{x}\in \N^n$ subject to incremental updates.  Then $G$ is the Laplace exponent of a non-negative \Levy{} process on $\R_+ \cup \{\infty\}$.
\end{conjecture}

The key criterion of \cref{conj:G-sampling} is the $O(\log n)$ space bound, which permits techniques like storing the minimum hash value and index, but not more sophisticated techniques~\cite{monemizadeh20101,AndoniKO11,JowhariST11}.
The conjecture would be false if we loosened the space bound to $\polylog(n)$ bits, as $F_p$ sampling is possible within this bound for $p\in [0,2]$~\cite{jayaram2021perfect} 
but not in correspondence with a subordinator when 
$p\in (1,2]$~\cite{ken1999levy}.

%\bibliographystyle{plain}
%\bibliography{refs}

\appendix

\section{Sampling Without Replacement}\label{sec:wor}

We can take $k$ independent copies of the \LevyMinSampler{} or \ParetoSampler{} sketches to sample $k$ indices from the $(G(\mathbf{x}(v))/G(\mathbf{x}))_{v\in[n]}$ distribution \emph{with} replacement.  A small change to these algorithms will 
sample $k$ indices \emph{without} replacement.  
See Cohen, Pagh, and Woodruff~\cite{CohenPW20}
for an extensive discussion of why WOR (without replacement)
samplers are often more desirable in practice.
The algorithm $(G,k)$-\SamplerWOR (\cref{alg:G-sampler-WOR})
samples $k$ (distinct) indices without replacement.

\RestyleAlgo{ruled}
\begin{algorithm}[htbp]
  \SetAlgoLined\DontPrintSemicolon
  \SetKwFunction{algo}{algo}\SetKwFunction{proc}{proc}
  \SetKwFunction{activate}{Activate}
  \SetKwProg{update}{Update}{}{}
  \SetKwProg{sample}{Sample}{}{}
  \SetKwInOut{sketch}{Sketch}
  \SetKwInOut{hash}{Hash function}
  \sketch{$S\subset [n]\times \R_+$, initialized as $\emptyset$}
  \hash{$H:[n]\to \Uniform(0,1)$}
  \KwResult{Sample $k$ elements $u$ with prob.~$G(\mathbf{x}(u))/\sum_{v\in[n]}G(\mathbf{x}(v))$ \emph{without replacement}}
  
  \tcp{upon update $\mathbf{x}(v)\gets \mathbf{x}(v)+\Delta$}
  \update{$(v,\Delta)$}{
   $ S \gets \KMIN(S \cup \{(v,\ell_G(\Exp(\Delta),H(v)))\})$ \tcp*{$\Exp(\Delta)$ is freshly sampled}}
  \tcp{upon sample}
  \sample{$(\,)$}{
  \Return{$\{v : (v,\cdot)\in S \}$}
  }
  \caption{\LevyMinSampler{} without replacement. The function $\KMIN(L)$ takes a list $L\subset [n]\times \R_+$,
discards any $(v,h)\in L$ if there is a $(v,h')\in L$ with $h'<h$, then returns the $k$ elements with the smallest second coordinate.}\label{alg:G-sampler-WOR}
\end{algorithm}

In a similar fashion, one can define a \ParetoSampler{} without replacement
that maintains the minimum $k$-Pareto frontier,
defined by discarding any tuple $(a,b,v)$ if there is 
another $(a',b,v)$ with $a'<a$, then retaining
only those tuples that are dominated by at most $k-1$ other tuples.

\begin{theorem}\label{thm:WOR}
    Consider a stream of $\poly(n)$ incremental updates to a vector $\mathbf{x}\in\R_+^n$.
    The $(G,k)$-\SamplerWOR{} occupies $2k$ words of memory,
    and can report an ordered tuple $(v_*^{1},\ldots,v_*^{k}) \in [n]^k$
    such that 
\begin{align}
\pr((v_*^{1},\ldots,v_*^{k}) = (v^{1},\ldots,v^k))
= \prod_{i=1}^{k} \frac{G(\mathbf{x}(v^i))}{G(\mathbf{x}) - \sum_{j=1}^{i-1} G(\mathbf{x}(v^j))}.\label{eqn:WOR}
\end{align}
    The $k$-\ParetoSampler{} occupies $O(k\log n)$ words w.h.p.~and for any $G\in\mathcal{G}$ at query time,
    can report a tuple $(v_*^{1},\ldots,v_*^{k}) \in [n]^k$ distributed according to \cref{eqn:WOR}.
\end{theorem}

\begin{proof}
    The proof of Theorem \ref{thm:levy-min-sampler} shows that
    $h_v \sim \Exp(G(\mathbf{x}(v)))$ and if 
    $v_*^1$ minimizes $h_{v_*^1}$, that
    $h_{v_*^1} \sim \Exp(G(\mathbf{x}))$. 
    It follows that $\pr(v_*^1=v)=G(\mathbf{x}(v))/G(\mathbf{x})$.
    By the memoryless property of the exponential distribution,
    for any $v\neq v_*^1,$ 
    $h_v - h_{v_*^1} \sim \Exp(G(\mathbf{x}(v)))$,
    hence $h_{v_*^2} - h_{v_*^1} \sim \Exp(G(\mathbf{x})-G(\mathbf{x}(v_*^1)))$ and 
    $\pr(v_*^2 = v \mid v_*^1, v\neq v_*^1) = G(\mathbf{x}(v))/(G(\mathbf{x})-G(\mathbf{x}(v_*^1)))$.  The distribution of $v_*^3,\ldots,v_*^k$ is analyzed in the same way.

    By the 2D-monotonicity property, 
    the $k$-Pareto frontier contains all the 
    points that would be returned by $(G,k)$-\SamplerWOR,
    hence the output distribution of $k$-\ParetoSamplerWOR{}
    is identical.  The analysis of the space bound follows
    the same lines, except that $X_i$ is the indicator
    for the event that $h_{v_i}$ is among the $k$-smallest
    elements of $\{h_{v_1},\ldots,h_{v_i}\}$, so 
    $\E(X_i)=\min\{k/i,1\}$, $\E(|S|) < kH_n$, 
    and by a Chernoff bound, $|S|=O(k\log n)$ 
    with high probability.
\end{proof}

\end{document}